\documentclass[pra,twocolumn,superscriptaddress]{revtex4-2}
\usepackage{amsmath,amssymb,graphicx,bm}
\usepackage{amsthm}
\usepackage[colorlinks=true,citecolor=blue,linkcolor=blue,urlcolor=blue]{hyperref}
\usepackage{braket}
\usepackage{qcircuit}
\usepackage{bbm}

\usepackage[dvipsnames]{xcolor}

\newtheorem{lemma}{Lemma}

\begin{document}

\title{Convex combinations of bosonic pure-loss channels}
\author{Giuseppe Catalano}
\email{giuseppe.catalano@sns.it} 
\affiliation{NEST-CNR Scuola Normale Superiore, Piazza dei Cavalieri 7, I-56126 Pisa, Italy}

\author{Marco Fanizza}
\affiliation{Inria, Institut Polytechnique de Paris, 1 Rue Honoré d'Estienne d'Orves, 91120 Palaiseau, France}

\author{Francesco Anna Mele}
\affiliation{NEST-CNR Scuola Normale Superiore, Piazza dei Cavalieri 7, I-56126 Pisa, Italy}

\author{Giacomo De Palma}
\affiliation{Dipartimento di Matematica, Università di Bologna, Piazza di Porta San Donato 5, I-40126 Bologna, Italy}

\author{Vittorio Giovannetti}
\affiliation{NEST-CNR Scuola Normale Superiore, Piazza dei Cavalieri 7, I-56126 Pisa, Italy}

\begin{abstract}

\noindent The pure-loss channel is a fundamental model for describing noise in bosonic quantum platforms. It is characterised by a single parameter, the transmissivity, which quantifies the fraction of the input energy that reaches the output of the channel. In realistic scenarios, however, such as free-space quantum communication, the transmissivity is not fixed but fluctuates from one channel use to another. In this setting, the overall channel is effectively described as a convex combination of pure-loss channels, known as a fading channel. Despite its practical relevance, the quantum Shannon theory of the fading channel has remained largely unexplored. Here, we address this gap, 
specifically investigating degradability, anti-degradability, entanglement breakingness, and capacities of the fading channel. 
Of particular relevance to practical quantum-internet applications, we prove that entanglement distribution and quantum key distribution can always be achieved at a strictly positive rate over any fading channel, no matter how noisy it is or how strongly the transmissivity fluctuates, provided the channel is not completely noisy. Moreover, we prove that thermal states, which are optimal for a broad class 
of static bosonic Gaussian channels, fail to achieve 
the entanglement-assisted classical capacity of fading channels: non-Gaussian 
Fock-diagonal states strictly outperform all Gaussian encodings. Most 
strikingly, we identify regimes where the coherent information of thermal 
inputs vanishes, while optimized non-Gaussian states achieve strictly positive 
values, thereby activating the channel for quantum communication. For a paradigmatic binary fading model we establish this result analytically, deriving the exact 
capacity-achieving state in closed form. For general fading distributions, 
we design an iterative variational algorithm to optimize the coherent and 
mutual information, which may be of independent interest. Overall, our work unveils fundamental properties of this practically relevant channel and advances the study of quantum communication in the non-Gaussian regime.
\end{abstract}

\maketitle

\section{Introduction}

\noindent The development of a global quantum network~\cite{quantum_internet_kimble,quantum_internet_Wehner} is one of the central goals of quantum information science~\cite{NC}. While optical fibers are the standard medium for short-range connections, their exponential loss limits their use for long-distance quantum communication~\cite{Caves,SerafiniBook,Record1,Record2,Record3,Record4,Record5,Pirandola_2020,usenko2025continuousvariablequantumcommunication,Pirandola_2017,MeleNature}. Free-space optical links, particularly ground-to-satellite channels, have emerged as the enabling technology for global-scale quantum connectivity~\cite{deForgesDeParny2023,RevModPhys.94.035001,Krzic2023,Sidhu_2021}. Traditionally, the fundamental model used to describe attenuation in such links is the pure-loss channel, which assumes a constant, fixed transmissivity between sender and receiver. However, real-world free-space communication is inherently dynamic. Atmospheric turbulence induces stochastic fluctuations in the channel's physical parameters, most notably its transmissivity~\cite{Pirandola2021, Usenko2018OE_WeibullExperiment}. This paradigmatic real-world noise, formally known as fading, is caused by a combination of effects, including beam broadening, scintillation, and degradation of coherence~\cite{Andrews2005_LaserBeam, Vasylyev2016PRL_EllipticBeam}. In many relevant scenarios, such as mid-range links or weak turbulence, the dominant effect causing fluctuating losses is beam wandering, where the beam spot is randomly deflected and travels around the receiving aperture~\cite{ Vasylyev2012PRL_LogNegativeWeibull}.\\
\noindent Given the ubiquity of these fluctuation effects in realistic free-space communication settings, understanding the ultimate communication limits of these fluctuating channels is crucial. Shannon’s noisy channel coding theorem~\cite{Shannon1948} established these limits for classical channels.
\noindent In the quantum setting, the communication potential of a channel is captured by a richer family of capacities. We are primarily interested in the \emph{Entanglement-Assisted Classical Capacity} ($C_\text{E}$), which quantifies the optimal rate of classical information transmission when sender and receiver share unlimited pre-existing entanglement~\cite{Bennett1999PRLEACC,Bennett2002IEEEEACC}, and the unassisted \emph{Quantum Capacity} ($Q$), which defines the rate at which quantum states can be reliably transmitted~\cite{DevetakShor2005_ChannelCapacities}. Furthermore, since practical quantum communication protocols often allow for classical feedback, it is fundamentally relevant to consider the \emph{Two-Way Assisted Quantum Capacity} ($Q_2$), which represents the highest rate of reliable quantum communication achievable when unlimited classical communication is allowed between sender and receiver, and the \emph{Secret Key Capacity} ($K$), which sets the ultimate limit for generating a secure cryptographic key~\cite{Pirandola2021,WildeBook}. Notably, the fundamental hierarchy $K \ge Q_2 \ge Q$ implies that establishing a strictly positive $Q_2$ automatically guarantees the feasibility of both entanglement distribution and Quantum Key Distribution (QKD) over the given channel~\cite{HolevoGiovannetti2012}.\\

\noindent Continuous-variable (CV) systems are a key component of quantum technologies as they model quantum optical platforms and bosonic systems. Among CV states, Gaussian states play a crucial role, given their ubiquity in nature and the fact that they are relatively simple to analyze analytically. More specifically, they also play a fundamental role because of their extremality properties~\cite{SerafiniBook,Weedbrook2012}. It is a well-known result that, for a fixed energy constraint, Gaussian states maximize the von Neumann entropy~\cite{wolf_PRL_gaussian_extramality}. Furthermore, regarding channel dynamics, it has been shown that Gaussian inputs (specifically, thermal states) minimize the output entropy of one-mode phase-insensitive Gaussian channels among all the input states with a given entropy~\cite{PhysRevLett.118.160503}. 

\noindent These extremality properties underpin a well-established body of results showing 
that Gaussian encodings, specifically thermal states, achieve the optimal 
communication rates for a broad class of bosonic Gaussian channels under energy 
constraints~\cite{Holevo2001PRAGaussian, giovannetti_solution_gaussian_optimality_conjecture_2015, 
PhysRevLett.118.160503}. This optimality, however, relies fundamentally on the 
Gaussian nature of the channel itself. The fading channel, being a convex 
combination of pure-loss maps, is generally non-Gaussian, and therefore falls 
outside the scope of these results. As we demonstrate, this extension fails: thermal states are strictly sub-optimal for the entanglement-assisted classical capacity and the coherent information of the fading channel, and non-Gaussian Fock-diagonal states achieve strictly higher rates.\\ 

\noindent A fading channel is mathematically described as a convex combination (mixture) of pure-loss  channels $\mathcal{E}_{\lambda}$ with fluctuating transmissivity $\lambda$~\cite{Pirandola2021}:
\begin{equation}
    \Phi_{p(\lambda)}(\rho) = \int_0^1 d\lambda\, p(\lambda)\, \mathcal{E}_{\lambda}(\rho),
    \label{eq:fading-channel}
\end{equation}
where $p(\lambda)$ is the probability density of the transmissivity $\lambda$. The starting point of this work is the observation that the set of Gaussian channels is not convex. Consequently, fading channels, being convex combinations of pure-loss maps, are in general non-Gaussian, and the optimality of Gaussian encodings is no longer guaranteed. While the quantum Shannon-theoretic properties of pure-loss channels, including degradability, anti-degradability, entanglement breakingness, and communication capacities, have been extensively investigated~\cite{Holevo2001PRAGaussian,Wolf2007,PLOB,MMMM,TGW,Mark2012,Pirandola2009,giovannetti_solution_gaussian_optimality_conjecture_2015,LossyECEAC1,LossyECEAC2,Caruso_weak,Giovadd,mele2025achievableratesnonasymptoticbosonic,Winnel,KHATRI}, much less is known about fading channels.
In this work, we make progress towards filling this gap by providing a comprehensive analysis of the quantum Shannon-theoretic properties of the fading channel. Our main contributions are as follows:
\begin{itemize}
    \item \textbf{Fundamental Channel Properties:} We rigorously prove that any non-trivial convex combination of lossy channels (Eq.~\ref{eq:fading-channel}) is \textbf{non-degradable} (i.e., its complementary map cannot be simulated by post-processing the channel's output) by showing its Choi matrix rank exceeds the requirement for degradability. We also derive a sufficient condition for \textbf{non-antidegradability}, $\langle\sqrt{\lambda}\rangle^2 + \langle\lambda\rangle > 1$, based on the fidelity of complementary maps. This implies that fading channels generally lack the simpler structure that guarantees additivity for capacity measures, i.e. the property that entangling inputs across multiple channel uses provides no communication advantage, meaning the total quantum capacity simply scales linearly.

    \item \textbf{Distillability and $Q_2$ Bounds:} We prove that fading channels are \textbf{distillable} i.e., their two-way quantum capacity is strictly positive, $Q_2(\Phi) > 0$. This analytical result guarantees that entanglement distribution and quantum key distribution can always be achieved at a strictly positive rate over any fading channel, no matter how noisy it is or how strongly the transmissivity fluctuates, provided the channel is not completely noisy. Furthermore, by applying recent multi-rail encoding techniques~\cite{Mele2024LossDephasing}, we derive new, tighter lower bounds for $Q_2$ that strictly outperform standard bounds based on the Reverse Coherent Information (RCI)~\cite{Pirandola2021} in high-loss regimes.
    
    \item  \textbf{Sub-optimality of Gaussian Encodings:} We demonstrate that thermal 
states, which are optimal for static bosonic Gaussian channels~\cite{Holevo2001PRAGaussian, 
giovannetti_solution_gaussian_optimality_conjecture_2015}, are strictly sub-optimal 
for the entanglement-assisted classical capacity of fading channels. In fact, the entire Gaussian class fails to achieve $C_\text{E}$: this is first proven analytically for a simplified binary fading model and then extended to general atmospheric distributions via the numerical algorithm described below. Furthermore, using the same numerical framework, we show that thermal-state encodings are strictly sub-optimal also for the coherent information, activating quantum communication in regimes where they yield zero or negative values.
    
    \item \textbf{Iterative Optimization Algorithm:} We develop an iterative variational algorithm to find the optimal Fock-diagonal input states by directly optimizing the relevant information-theoretic functionals of the fading channel. This method constructs increasingly expressive non-Gaussian ansätze and provides increasingly tight lower bounds for the capacities, revealing significant quantitative gains over standard Gaussian strategies at physically relevant energies. In the case of the entanglement-assisted classical capacity, the algorithm is moreover guaranteed to converge asymptotically to the true optimum.
    
    \item \textbf{Activation of Quantum Capacity:} We analyze the single-shot quantum capacity lower bound $Q_1(\Phi) = \max_\rho I_c(\Phi, \rho)$. Most strikingly, we find regimes where the coherent information for thermal inputs is zero (implying no communication), while our optimized non-Gaussian states yield a strictly positive rate. This ``activates" the channel, proving it can transmit quantum information in turbulent conditions where thermal-state input yields zero coherent information.
\end{itemize}

\noindent These results collectively advance our understanding of fading 
channels along two complementary directions. On the structural side, we 
establish that fading channels are non-degradable and distillable, and 
derive analytical conditions for non-antidegradability: together, these 
properties imply that entanglement distribution and quantum key distribution 
are always feasible at strictly positive rates, regardless of the turbulence 
strength, and that the capacity landscape is fundamentally richer than that 
of static pure-loss channels. On the operational side, we demonstrate that 
the thermal-state encodings that are optimal for static Gaussian 
channels~\cite{Holevo2001PRAGaussian, 
giovannetti_solution_gaussian_optimality_conjecture_2015} fail for fading 
channels: non-Gaussian Fock-diagonal states strictly outperform all Gaussian 
encodings for the entanglement-assisted classical capacity, and activate 
quantum communication in regimes where thermal inputs yield zero coherent 
information. These findings suggest that future free-space quantum 
communication protocols should exploit non-Gaussian state engineering to 
unlock the full potential of atmospheric quantum links.

\section{Preliminaries}\label{sec:prelim}
\noindent We consider quantum communication over a single-mode bosonic system, described by annihilation and creation operators ${a}$ and ${a}^\dagger$ satisfying $[{a}, {a}^\dagger] = \mathbbm{1}$. 
In this section, we introduce the fundamental Gaussian building blocks, define the fading channel model along with its Stinespring dilation, and review the relevant capacity measures.

\subsection{Gaussian Channels and States}
\noindent A bosonic quantum state $\rho$ is \emph{Gaussian} if its characteristic function $\chi(\xi) = \mathrm{Tr}[\rho \, {D}(\xi)]$  is a Gaussian function of the displacement $\xi \in \mathbb{C}$, i.e., it is fully characterized by its first and second moments. Here, $D(\xi)$ denotes the displacement operator:
\begin{equation}
    D(\xi) = \exp(\xi {a}^\dagger - \xi^* {a}).
\end{equation}
Gaussian channels are completely positive trace-preserving (CPTP) maps preserving Gaussianity. The fundamental model for attenuation is the \emph{pure-loss channel} $\mathcal{E}_\lambda$, which mixes the input mode with an environmental mode, initially in the vacuum state $\ket{0}_E$, via a beam-splitter of transmissivity $\lambda \in [0,1]$. Formally, its action is defined via the Stinespring dilation:
\begin{equation}\label{eq:pure_loss_def}
    \mathcal{E}_\lambda(\rho) = \mathrm{Tr}_E \left[ U_\lambda (\rho \otimes |0\rangle_E \langle 0|) U_\lambda^\dagger \right],
\end{equation}
where the unitary operator that defines the beam splitter interaction
\begin{equation}\label{eq:beam_splitter_def}
    U_\lambda= \exp[\cos^{-1}(\sqrt{\lambda}) ({a}^\dagger {b} - {a} {b}^\dagger)] 
\end{equation}
implements the coupling ${a} \to \sqrt{\lambda}{a} + \sqrt{1-\lambda}{b}$ between the signal and environment annihilation operators.\\

\subsection{Bosonic Fading Channels}

\noindent While more general models such as the thermal attenuator account for 
environmental noise, the pure-loss channel suffices for optical and 
free-space communications where the thermal background is negligible~\cite{Weedbrook2012}. 
In realistic free-space scenarios, however, the transmissivity $\lambda$ 
is rarely a static parameter; rather, it undergoes stochastic fluctuations 
due to atmospheric turbulence or pointing errors~\cite{Pirandola2021, 
Usenko2018OE_WeibullExperiment, Vasylyev2016PRL_EllipticBeam}. This physical 
reality leads us to model the propagation as a \emph{fading channel}, defined in Eq.~\eqref{eq:fading-channel} as a convex combination of pure-loss maps weighted by a probability distribution 
$p(\lambda)$.
For numerical purposes, we work with the discretized model
\begin{equation}\label{eq:discretized_fading_channel}
    \Phi_{\{p_n, \lambda_n\}}(\rho) = \sum_{n=1}^d p_n \, \mathcal{E}_{\lambda_n}(\rho),
\end{equation}
where each $\mathcal{E}_{\lambda_n}$ is a pure-loss channel of transmissivity 
$\lambda_n \in [0,1]$ and $\{p_n\}$ is a probability distribution. This model 
is both physically motivated and computationally tractable, and will be our 
working definition for the remainder of the paper. Crucially, while Gaussian 
channels are closed under composition, they are not closed under convex 
combinations~\cite{SerafiniBook}, as formalized in the following lemma.
\begin{lemma}\label{lem:convex combinations non-gaussian}
Any non-trivial convex combination of pure-loss channels 
$\Phi_{\{p_n,\lambda_n\}} = \sum_n p_n \mathcal{E}_{\lambda_n}$, with at 
least two distinct transmissivities $\lambda_i \neq \lambda_j$, is a 
non-Gaussian channel.
\end{lemma}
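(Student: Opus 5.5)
The plan is to find one Gaussian input state whose output under $\Phi_{\{p_n,\lambda_n\}}$ is not Gaussian. Since a Gaussian channel maps every Gaussian state to a Gaussian state, this proves the lemma. We use the characteristic-function formalism from Sec.~\ref{sec:prelim}. A pure-loss channel acts on characteristic functions as $\chi_{\mathcal{E}_\lambda(\rho)}(\xi)=\chi_\rho(\sqrt{\lambda}\,\xi)\,e^{-(1-\lambda)|\xi|^2/2}$. This follows from the Heisenberg action $a\to\sqrt{\lambda}a+\sqrt{1-\lambda}b$ with $b$ in vacuum. By linearity,
\begin{equation*}
\chi_{\Phi(\rho)}(\xi)=\sum_n p_n\,\chi_\rho(\sqrt{\lambda_n}\,\xi)\,e^{-(1-\lambda_n)|\xi|^2/2}.
\end{equation*}

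Next we choose the input. The vacuum is unsuitable, because every $\mathcal{E}_\lambda$ maps it to vacuum. We therefore take the thermal state with mean photon number $N>0$, or equivalently a coherent state $\ket{\alpha}$ with $\alpha\neq0$. For the thermal state, $\chi_\rho(\xi)=e^{-(N+1/2)|\xi|^2}$. The output characteristic function is then
\begin{equation*}
\chi_{\Phi(\rho)}(\xi)=\sum_n p_n\, e^{-(\lambda_n N+1/2)|\xi|^2}.
\end{equation*}
This is a finite mixture of centred Gaussians with distinct variances $\lambda_nN+1/2$; coefficients belonging to equal $\lambda_n$ are merged first. A Gaussian characteristic function has the form $e^{-c|\xi|^2+i(\cdot)}$, so its logarithm is a quadratic polynomial in $\mathrm{Re}\,\xi$ and $\mathrm{Im}\,\xi$. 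It remains to show that a mixture of at least two exponentials $e^{-s_n t}$, with $t=|\xi|^2$, distinct $s_n$, and positive weights, is not of the form $e^{-ct}$.

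This last step is the only real obstacle, and it is elementary. Suppose $\sum_n p_n e^{-s_n t}=e^{-ct}$ for all $t\ge0$. Linear independence of exponentials with distinct rates forces $c$ to equal one of the $s_n$, and all other coefficients to vanish. That contradicts having at least two distinct $\lambda_n$ with $p_n>0$.

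Alternatively, one can argue via cumulants. The function $f(t)=\log\sum_n p_n e^{-s_n t}$ has $f''(t)$ equal to the variance of $s$ under the tilted weights $p_ne^{-s_nt}$. This variance is strictly positive when the $s_n$ are not all equal, so $f$ is not linear in $t$. Either route gives the contradiction.

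We therefore conclude that $\Phi(\rho)$ is non-Gaussian for a Gaussian input $\rho$, and hence $\Phi$ is a non-Gaussian channel. The same argument extends to the continuous fading model \eqref{eq:fading-channel} whenever $p(\lambda)$ is not a point mass.
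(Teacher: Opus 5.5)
Your proof is correct. It shares the paper's overall strategy (exhibit a Gaussian input whose output is non-Gaussian), but it differs in both the probe state and the Gaussianity test.

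The paper sends a coherent state $\ket{\alpha}$. The output $\sum_n p_n \ket{\sqrt{\lambda_n}\alpha}\bra{\sqrt{\lambda_n}\alpha}$ has a Glauber--Sudarshan $P$-function made of point masses at distinct phase-space points. The paper then appeals to the criterion that a state is Gaussian iff its $P$-function is Gaussian. That picture is transparent, but the criterion needs qualification: squeezed Gaussian states have no regular $P$-function. A fully rigorous version therefore needs an extra step, e.g.\ comparing normally ordered characteristic functions.

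You instead use a thermal probe and the Weyl characteristic function $\chi(\xi)=\mathrm{Tr}[\rho D(\xi)]$, which is exactly the definition of Gaussianity adopted in Sec.~\ref{sec:prelim}. This reduces everything to a one-variable fact about mixtures of decaying exponentials. Your cumulant version ($f''(t)=\mathrm{Var}_t(s)>0$) also carries over unchanged to a continuous $p(\lambda)$ that is not a point mass.

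One step deserves an extra line. A general Gaussian characteristic function is $\exp[-q(\xi)+i\ell(\xi)]$, with $q$ a possibly anisotropic quadratic form in $(\mathrm{Re}\,\xi,\mathrm{Im}\,\xi)$ and $\ell$ linear; it is not $e^{-c|\xi|^2+i(\cdot)}$. Restricting to real $\xi=r$ gives $e^{-ar^2+ibr}$. Since your output function is real and strictly positive, continuity forces $b=0$, so you do land on $e^{-at}$ with $t=r^2$, as claimed.

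Also, the coherent-state input is not ``equivalent'' within your computation. Its output characteristic function is $e^{-|\xi|^2/2}$ times a trigonometric polynomial, which needs a different argument. You never use it, so nothing is lost.
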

\begin{proof}
We evaluate the action of $\Phi_{\{p_n,\lambda_n\}}$ on a coherent state 
$\ket{\alpha}$. Since $\mathcal{E}_{\lambda_n}(\ket{\alpha}\bra{\alpha}) = 
\ket{\sqrt{\lambda_n}\alpha}\bra{\sqrt{\lambda_n}\alpha}$, the output is:
\begin{equation}
    \Phi_{\{p_n,\lambda_n\}}(\ket{\alpha}\bra{\alpha}) = 
    \sum_n p_n \ket{\sqrt{\lambda_n}\alpha}\bra{\sqrt{\lambda_n}\alpha}.
\end{equation}
This is a classical mixture of coherent states, and its Glauber-Sudarshan 
$P$-function is the probability distribution $P(\beta) = \sum_n p_n \, 
\delta^{(2)}(\beta - \sqrt{\lambda_n}\alpha)$. A bosonic state is Gaussian 
if and only if its $P$-function is a Gaussian distribution~\cite{Weedbrook2012}. 
Since $P(\beta)$ is a finite discrete mixture of delta functions centered at 
the distinct points $\{\sqrt{\lambda_n}\alpha\}$, it is Gaussian if and only 
if it reduces to a single delta function, i.e., if and only if all 
transmissivities $\lambda_n$ are equal. Therefore, whenever at least two 
transmissivities differ, the output state is non-Gaussian, and 
$\Phi_{\{p_n,\lambda_n\}}$ is a non-Gaussian channel.
\end{proof}

\noindent As we shall see in the following sections, this departure from the Gaussian framework is not merely a mathematical curiosity, but leads to qualitatively new behaviors, most notably the failure of thermal-state encodings to achieve capacity.\\

\noindent We consider two representative distributions for $p(\lambda)$:
\begin{itemize}

    \item \textbf{Log-negative Weibull distribution}: a physically realistic fading model derived from wave propagation in turbulent media, often used in free-space quantum optics~\cite{Pirandola2021}.

    The probability distribution of the transmissivity $\lambda$ is modeled according to the log-negative Weibull distribution, which accounts for the fluctuations induced by atmospheric turbulence and beam wandering. Given the maximum transmissivity $\lambda_0$, the probability density function is defined as:
\begin{equation}
    p_{R,\gamma}(\lambda) = \frac{R^2}{\gamma \lambda} \left( \ln \frac{\lambda_0}{\lambda} \right)^{\frac{2}{\gamma} - 1} \exp \left[ -\frac{R^2}{2} \left( \ln \frac{\lambda_0}{\lambda} \right)^{\frac{2}{\gamma}} \right],
\end{equation}
for $0 < \lambda < \lambda_0\le 1$, and $p(\lambda) = 0$ otherwise. Here, $R$ represents the ratio between the aperture radius and the beam jitter, while $\gamma$ is a shape parameter characterizing the turbulence regime.

    \item \textbf{Binary distribution}: $\Phi_{p,\lambda_1,\lambda_2} = p \mathcal{E}_{\lambda_1} + (1-p) \mathcal{E}_{\lambda_2}$, a physically simple yet meaningful model.
\end{itemize}

\noindent The binary distribution serves as a physically motivated model that captures the essential features of fluctuating media while remaining analytically tractable. Beyond its mathematical simplicity, this model effectively approximates several realistic communication scenarios, such as free-space links subject to intermittent obstructions, like clouds or foliage, that abruptly switch the channel between a high-transmissivity state and a high-loss regime. In the context of atmospheric science, it can further represent a simplified two-state turbulence model where the channel alternates between stable conditions and bursts of strong beam wandering. Specifically, when one of the components is a total loss channel ($\lambda_2=0$), the model reduces to the \emph{erasure-lossy channel}, which constitutes the fundamental benchmark for studying the breakdown of Gaussian-optimality. By focusing on this binary mixture, we can isolate the effects of statistical uncertainty in the channel parameters, providing a clear operational framework to test non-Gaussian strategies before addressing more complex continuous distributions.\\
Understanding the capacity of such channels requires dedicated analytical and numerical tools, which we develop in the next sections.

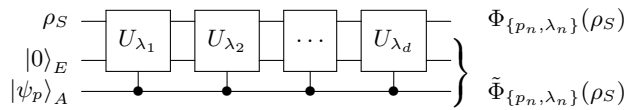
\begin{figure}[t]
	\centering
	
	$$ 
	\Qcircuit @C=1em @R=.7em {
		\lstick{\rho_S}	& \multigate{1}{U_{\lambda_1}} & \multigate{1}{U_{\lambda_2}} & \multigate{1}{\dots} & \multigate{1}{U_{\lambda_d}} & \qw & \rstick{\Phi_{\{p_n,\lambda_n\}}(\rho_S)} \\
		\lstick{\ket{0}_E}	& \ghost{U_{\lambda_1}} & \ghost{U_{\lambda_2}} & \ghost{\dots} & \ghost{U_{\lambda_d}} & \qw \\
		\lstick{\ket{\psi_p}_A}	& \ctrl{-1} & \ctrl{-1} & \ctrl{-1} & \ctrl{-1} & \qw \gategroup{2}{6}{3}{4}{0.8em}{\}}& \rstick{\tilde{\Phi}_{\{p_n,\lambda_n\}}(\rho_S)} \\
	}
	$$
	\caption{Quantum circuit representation of a fading channel. This gives us a graphical intuition for the structure of a complementary channel of a fading channel.}
	\label{fig:stinespring_dilation}
\end{figure}

\subsection{Complementary channel construction}\label{sec:complementary}

\noindent From the Stinespring dilation, any quantum channel $\Phi(\rho_S) = \text{Tr}_E[U (\rho_S \otimes \ket{0}_E\bra{0}) U^\dagger]$ has a corresponding complementary channel, $\tilde{\Phi}(\rho_S) = \text{Tr}_S[U (\rho_S \otimes \ket{0}_E\bra{0}) U^\dagger]$. This map $\tilde{\Phi}$ describes the information that is leaked to the environment $E$ instead of being transmitted to the receiver $S$. 
A crucial subtlety arises when we want to compute the complementary channel of a convex combination of channels as in our fading channel model (Eq.~\ref{eq:discretized_fading_channel}). The complementary channel of a convex combination is not, in general, the convex combination of the complementary channels. To find the correct complementary channel $\tilde{\Phi}$ for the fading channel, we must first construct a single, unified Stinespring dilation for the entire mixture $\Phi_{\{p_n,\lambda_n\}}$. This is achieved by introducing an ancillary ``control" system $A$ (with $\dim(A)=d$) which correlates the channel's action with the probability distribution $\{p_n\}$. We define an initial pure state for this control system:
\begin{equation}
	\ket{\psi_p}_A = \sum_{n=1}^{d} \sqrt{p_n}\ket{n}_A,
\end{equation}
where $\{\ket{n}_A\}$ is an orthonormal basis for $A$.\\
\noindent The global interaction $V$ is now a controlled unitary that couples the input system $S$, the vacuum environment $E$, and the control ancilla $A$. This operator is defined as:
\begin{equation}
	V = \sum_{n=1}^{d} U_{\lambda_n}^{(SE)} \otimes \ket{n}_A \bra{n},
\end{equation}
where $U_{\lambda_n}^{(SE)}$ is the beam-splitter unitary from Eq.~\eqref{eq:beam_splitter_def} with transmissivity $\lambda_n$. This unitary $V$ applies the specific beam-splitter interaction $U_{\lambda_n}$ if and only if the control ancilla is in the state $\ket{n}_A$.\\
\noindent The Stinespring dilation for the fading channel $\Phi_{\{p_n,\lambda_n\}}$ is therefore given by first preparing the joint state $\rho_S \otimes \ket{0}_E\bra{0}\otimes \ket{\psi_p}_A\bra{\psi_p}$, applying the global unitary $V$, and tracing out both the environment $E$ and the ancilla $A$:
\begin{equation}
	\Phi_{\{p_n,\lambda_n\}}(\rho_S) = \text{Tr}_{EA}\left[ V (\rho_S \otimes \ket{0}_E\bra{0}\otimes \ket{\psi_p}_A\bra{\psi_p}) V^\dagger \right].
\end{equation}
Consequently, the true complementary channel $\tilde{\Phi}_{\{p_n,\lambda_n\}}$ is a map that describes the information leaked to the total environment, which consists of both $E$ and $A$. We find it by tracing over the system $S$ instead:
\begin{equation}\label{eq:complementary}
	\tilde{\Phi}_{\{p_n,\lambda_n\}}(\rho_S) = \text{Tr}_{S}\left[ V (\rho_S \otimes \ket{0}_E\bra{0}\otimes \ket{\psi_p}_A\bra{\psi_p}) V^\dagger \right].
\end{equation}
This construction, which is essential for correctly calculating the exchange entropy $S(\tilde{\Phi}_{\{p_n,\lambda_n\}}(\rho))$ and for assessing the antidegradability region of $\Phi_{\{p_n,\lambda_n\}}$, is visualized in Fig.~\ref{fig:stinespring_dilation}.

\section{Fundamental Properties of Fading Channels}

\noindent In this section, we analyze the fundamental information-theoretic structure of bosonic fading channels. Beyond evaluating communication rates, it is crucial to establish the qualitative features of the map $\Phi$, as they dictate the relationships between different capacity measures and the limits of quantum error correction. 
Specifically, we investigate channel \emph{degradability}, which ensures the additivity of coherent information; \emph{antidegradability}, which sets a zero-threshold for quantum communication; and \emph{distillability}, which serves as a witness for non-zero two-way quantum capacity $Q_2$. As we will prove, convex combinations of pure-loss channels generally belong to a non-trivial class of maps: they are non-degradable and, under certain conditions, avoid antidegradability. Most notably, we prove that bosonic fading channels are always distillable, regardless of the specific transmissivity distribution. This fundamental property implies that the two-way quantum capacity $Q_2$ remains strictly positive even for distributions concentrated around arbitrarily low transmissivity values. This result stands in stark contrast to previous benchmarks, such as those in~\cite{Pirandola2021}, where the use of reverse coherent information predicts a total collapse of quantum communication in high-loss regimes. By leveraging non-Gaussian resources, we provide a lower bound for $Q_2$ that remains non-vanishing for any non-trivial fading, revealing a fundamental resilience of entanglement distribution that the previous strategies found in~\cite{Pirandola2021} failed to capture.

\subsection{Region of degradability}

\noindent A quantum channel $\Phi$ is defined as degradable if and only if there exists a degrading channel $\mathcal{D}$ such that:
\begin{equation}\label{eq:degradability_def}
	\tilde{\Phi} = \mathcal{D} \circ \Phi,
\end{equation}
where $\tilde{\Phi}$ is the complementary channel of $\Phi$.
Let us define the subspace $\mathcal{H}_d = \text{span}\{\ket{0},\dots,\ket{d-1}\}$ generated by the first $d$ Fock states. By identifying with $\sigma(\mathcal{H})$ the space of the quantum states defined on the Hilbert space $\mathcal{H}$ we can define the restriction $\Phi_d$ of $\Phi$ to $\sigma(\mathcal{H}_d)$ such that:
\begin{equation}
	\Phi_d(\rho) = \Phi(\rho) \quad \forall \rho \in \sigma(\mathcal{H}_d).
\end{equation}
Similarly, we define the restricted complementary channel $\tilde{\Phi}_d$ such that:
\begin{equation}
	\tilde{\Phi}_d(\rho) = \tilde{\Phi}(\rho) \quad \forall \rho \in \sigma(\mathcal{H}_d).
\end{equation}
If the original channel $\Phi$ is degradable, then necessarily all its restrictions must be degradable. In particular, considering the qubit restriction ($d=2$), there must exist a map $\mathcal{D}_2$ satisfying:
\begin{equation}
	\tilde{\Phi}_2 = \mathcal{D}_2 \circ \Phi_2.
\end{equation}
This condition implies a strict bound on the rank of the corresponding Choi matrix $C_{\Phi_2}$: specifically, for a qubit channel to be degradable, it must hold that $\text{rank}\left( C_{\Phi_2} \right) \le 2$~\cite{Cubitt2008}. Conversely, finding $\text{rank}\left( C_{\Phi_2} \right) > 2$ is sufficient to prove that the full channel $\Phi$ is non-degradable.\\
\noindent We now apply this criterion to the fading channel $\Phi_{\{p_n,\lambda_n\}} = \sum_n p_n \mathcal{E}_{\lambda_n}$. We consider its action on the qubit basis states of $\mathcal{H}_2$:
\begin{align}
    \Phi_{\{p_n,\lambda_n\}}(\ket{0}\bra{0}) &= \ket{0}\bra{0}, \\
    \Phi_{\{p_n,\lambda_n\}}(\ket{1}\bra{1}) &= (1-\braket{\lambda})\ket{0}\bra{0} + \braket{\lambda} \ket{1}\bra{1}, \\
    \Phi_{\{p_n,\lambda_n\}}(\ket{0}\bra{1}) &= \braket{\sqrt{\lambda}} \ket{0}\bra{1},
\end{align}
where $\langle \cdot \rangle := \sum_n p_n (\cdot)_n$ denotes the average over the fading distribution.
The resulting unnormalized Choi matrix $C_{\Phi_2} = (\mathcal{I} \otimes \Phi)(\ket{\varphi^+}\bra{\varphi^+})$, with $\ket{\varphi^+} = \ket{00} + \ket{11}$ is:
\begin{equation}
    C_{\Phi_2} =
    \begin{pmatrix}
        1 & 0 & 0 & \braket{\sqrt{\lambda}} \\
        0 & 0 & 0 & 0 \\
        0 & 0 & 1 - \braket{\lambda} & 0 \\
        \braket{\sqrt{\lambda}} & 0 & 0 & \braket{\lambda}
    \end{pmatrix}.
\end{equation}
The eigenvalues of $C_{\Phi_2}$ are $\{0, 1-\braket{\lambda}\}$ together with the eigenvalues of the $2 \times 2$ submatrix 
\begin{equation}
    M = \begin{pmatrix} 1 & \braket{\sqrt{\lambda}} \\ \braket{\sqrt{\lambda}} & \braket{\lambda} \end{pmatrix}.
\end{equation}
The determinant of this submatrix is:
\begin{equation}
    \det(M) = \braket{\lambda} - \braket{\sqrt{\lambda}}^2.
\end{equation}
Since $f(x) = \sqrt{x}$ is strictly concave, Jensen's inequality implies $\braket{\sqrt{\lambda}} \le \sqrt{\braket{\lambda}}$, and consequently $\braket{\sqrt{\lambda}}^2 \le \braket{\lambda}$.
Crucially, equality in Jensen's inequality holds if and only if the distribution $p(\lambda)$ is a delta function, corresponding to a single Gaussian pure-loss channel. For any non-trivial fading distribution, the inequality is strict, implying $\det(M) > 0$ and thus $\text{rank}(M)=2$. Since for $\braket{\lambda} < 1$ the third eigenvalue $1-\braket{\lambda}$ is also strictly positive, we find:
\begin{equation}
    \text{rank}(C_{\Phi_2}) = 3 > 2.
\end{equation}
This violation proves that bosonic fading channels are generally non-degradable. Consequently, within the family $\{\Phi_{\{p_n,\lambda_n\}}\}$, degradability holds if and only if $p(\lambda)$ is a Dirac delta, i.e., the channel 
reduces to a single pure-loss channel $\mathcal{E}_{\lambda}$. The standard proof of additivity for the coherent information does not apply here. While $Q_1$ remains a valid lower bound for the quantum capacity $Q$, the non-degradable nature of these channels leaves open the possibility of super-additive effects, suggesting that non-Gaussian strategies might benefit from multi-letter protocols or that $Q_1$ may strictly underestimate the ultimate transmission rate.

\begin{figure*}[t]
    \centering
    \includegraphics[height=8cm, keepaspectratio]{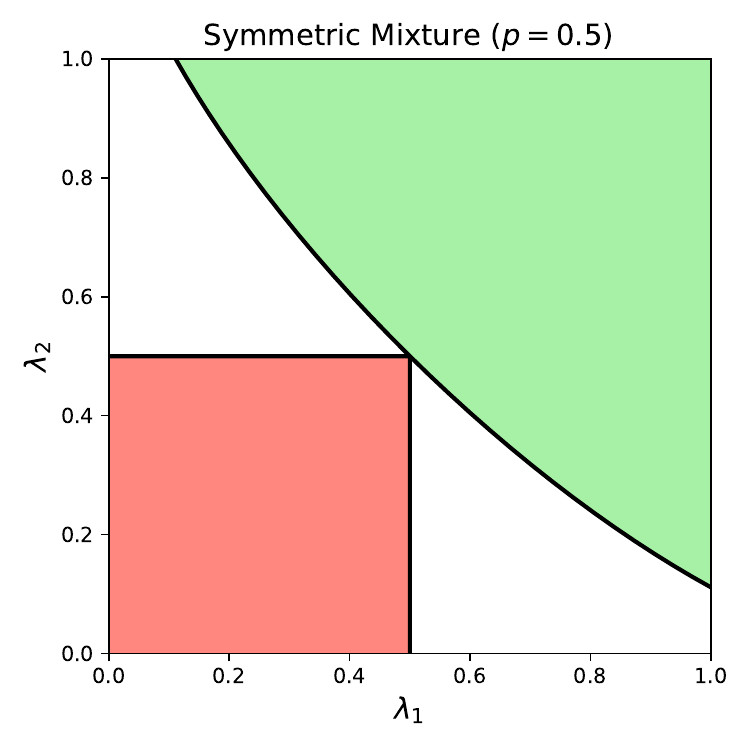}
    \hfill
    \includegraphics[height=8cm, keepaspectratio]{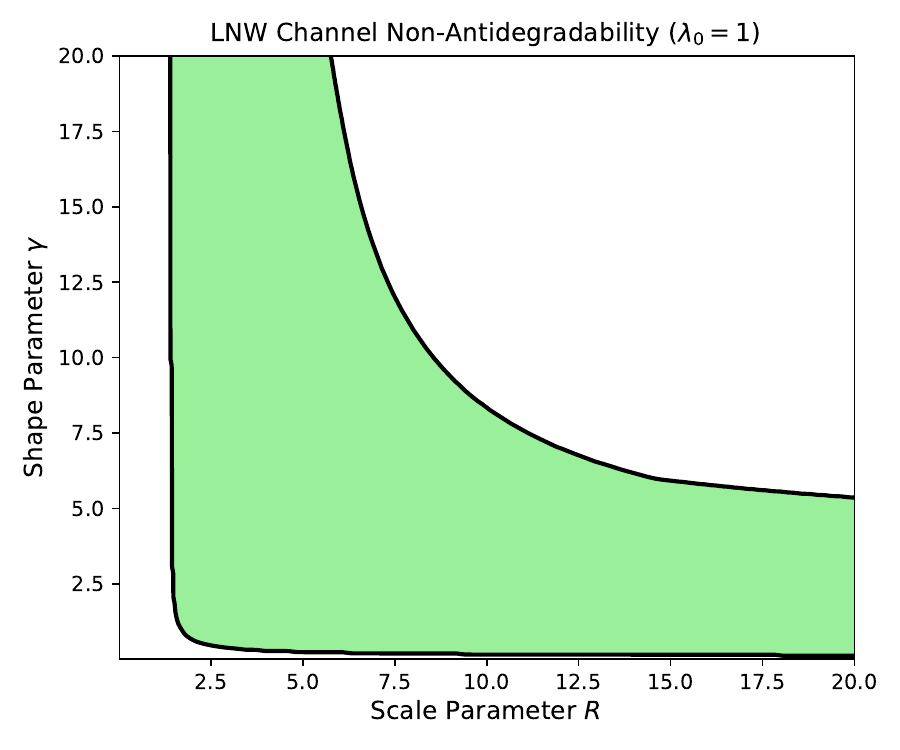}
    \caption{Regions of Antidegradability and Non-Antidegradability. Left: Parameter space $(\lambda_1, \lambda_2)$ for the symmetric binary fading channel ($p=0.5$). 
    The green area indicates the region where the channel is non-antidegradable. 
    The red square ($\lambda_1, \lambda_2 \leq 0.5$) highlights the region of antidegradability consistent with the theoretical property that the set of all antidegradable channels is convex.
    Right: Non-antidegradable region (green area) for the atmospheric fading channel modeled by the Log-Negative Weibull distribution, as a function of the scale ($R$) and shape ($\gamma$) parameters (with cutoff $\lambda_0 = 1$).}
    \label{fig:antideg_main}
\end{figure*}

\subsection{Antidegradability Analysis}\label{sec:antideg}

\noindent A channel $\Phi$ is defined as \emph{antidegradable} if its output can be simulated by applying a CPTP map $\mathcal{A}$ to its complementary output $\tilde{\Phi}$, i.e., $\Phi = \mathcal{A} \circ \tilde{\Phi}$~\cite{WildeBook}. A crucial consequence is that antidegradable channels have zero quantum capacity, $Q(\Phi) = 0$, as the environment retains at least as much information as the receiver~\cite{HolevoGiovannetti2012}.
Similarly as done in~\cite{Mele2024LossDephasing}, we establish a sufficient condition for non-antidegradability based on the data processing inequality for quantum fidelity. Specifically, if there exist two input states $\rho_1, \rho_2$ such that the fidelity of the complementary outputs is strictly greater than the fidelity of the main outputs, i.e.,
\begin{equation}\label{eq:fidelity_criterion}
    F(\tilde{\Phi}(\rho_1), \tilde{\Phi}(\rho_2)) > F(\Phi(\rho_1), \Phi(\rho_2)),
\end{equation}
then the channel $\Phi$ cannot be antidegradable (see Appendix~\ref{app:antidegradability_proof} for details).\\
\noindent We apply this criterion to the fading channel $\Phi = \sum_{n} p_n \mathcal{E}_{\lambda_n}$ by choosing the Fock states $\rho_1 = \ket{0}\bra{0}$ and $\rho_2 = \ket{1}\bra{1}$ as probes. As derived in Appendix~\ref{app:antidegradability_proof}, the fidelities for these inputs are given by:
\begin{align}
    F(\Phi(\rho_1), \Phi(\rho_2)) &= 1 - \braket{\lambda}, \\
    F(\tilde{\Phi}(\rho_1), \tilde{\Phi}(\rho_2)) &= \braket{\sqrt{\lambda}}^2.
\end{align}
Substituting these into Eq.~\eqref{eq:fidelity_criterion} yields a simple analytical condition for non-antidegradability based on the first two fractional moments of the fading distribution:
\begin{equation}\label{eq:non_antideg_condition}
    \braket{\sqrt{\lambda}}^2 + \braket{\lambda} > 1.
\end{equation}
If this inequality holds, the channel is guaranteed to be non-antidegradable, a necessary requirement for positive quantum capacity.
\subsubsection{Applications}
\noindent We evaluate the non-antidegradability condition~\eqref{eq:non_antideg_condition} for the two most relevant classes of fading channels: discrete binary mixtures and continuous atmospheric distributions.

\paragraph{Two-Component Fading}
Consider the convex combination of two lossy channels, $\Phi = p \mathcal{E}_{\lambda_1} + (1-p) \mathcal{E}_{\lambda_2}$. The condition~\eqref{eq:non_antideg_condition} takes the explicit form:
\begin{equation}
    p(p+1)\lambda_1 + (1-p)(2-p)\lambda_2 + 2p(1-p)\sqrt{\lambda_1 \lambda_2} > 1.
\end{equation}
This inequality identifies the region in the $(p,\lambda_1, \lambda_2)$ parameter space where quantum communication is potentially feasible ($Q > 0$).
A striking analytical result emerges in the limit of the \emph{quantum erasure channel} ($\lambda_1=1, \lambda_2=0$), where the condition simplifies to $p^2+p-1 > 0$. This implies that the channel is non-antidegradable when the transmission probability exceeds the inverse golden ratio, $p > (\sqrt{5}-1)/2 = 1/\phi \approx 0.618$.\\
\noindent For the binary distribution case, we show in Fig.~\ref{fig:antideg_main} (left) the valid region for a symmetric mixture ($p=0.5$). The green area corresponds to non-antidegradable configurations, the red one, instead, represents the antidegradable ones. A detailed analysis of other specific configurations is provided in Appendix~\ref{app:binary_antideg}.
\paragraph{Atmospheric Turbulence (LNW)}
For free-space optical links, the transmissivity fluctuations are described by the Log-Negative Weibull (LNW) distribution $p_{R,\gamma}(\lambda)$~\cite{Pirandola2021}.
We computed the moments $\braket{\lambda}$ and $\braket{\sqrt{\lambda}}$ numerically over the physical parameter space $(R, \gamma)$. The result is shown in Fig.~\ref{fig:antideg_main} (right). The green region highlights the regime where the channel is proven to be non-antidegradable. This result is crucial: while non-antidegradability is not a sufficient condition to guarantee a positive unassisted quantum capacity ($Q>0$), it establishes a fundamental necessary requirement. By ruling out antidegradability over a wide range of realistic turbulence conditions, we demonstrate that these channels clear a major theoretical hurdle, leaving open the possibility of unassisted quantum communication even beyond the weak-fluctuation regime.

\subsection{Distillability and PPT Violation}\label{section: distillability}

\noindent A quantum channel $\Phi$ is defined as distillable if its two-way assisted quantum capacity $Q_2(\Phi)$ is strictly positive~\cite{DevetakShor2005_ChannelCapacities}. A sufficient condition to prove distillability of a quantum channel is to find any input state $\rho_{RA}$ such that the channel's output state, $\rho_{RB} = (\mathcal{I}_R \otimes \Phi_A)(\rho_{RA})$, is distillable. As proved in~\cite{2-qubit-distillation}, there is a necessary and sufficient condition for distillability of a two-qubit state: a two qubit state is distillable if and only if it is not PPT~\cite{Peres1996PPT, Horodecki1996PPT}.

\noindent We apply this test using a generic pure entangled two-qubit state as the probe:
\begin{equation}
    \ket{\psi}_{RA} = c_0 \ket{00} + c_1 \ket{11},
\end{equation}
with real coefficients satisfying $c_0^2 + c_1^2 = 1$ and $c_0, c_1 \neq 0$ to ensure entanglement. The fading channel $\Phi$ acts on the system $A$ yielding the output density matrix $\rho_{RB}$. Importantly, the output system $B$ is also spanned by $\ket{0}$ and $\ket{1}$, as implied by the structure of the fading channel; thus $\rho_{RB}$ remains a two-qubit state.\\
\noindent We evaluate the partial transpose $\rho_{RB}^{T_R}$ with respect to the reference system $R$. If $\Phi$ is a fading channel, the matrix $\rho_{RB}^{T_R}$ takes the form:
\begin{equation}
    \rho_{RB}^{T_R} = 
    \begin{pmatrix}
        c_0^2 & 0 & 0 & 0 \\
        0 & c_1^2(1-\braket{\lambda}) & c_0 c_1 \braket{\sqrt{\lambda}} & 0 \\
        0 & c_0 c_1 \braket{\sqrt{\lambda}} & 0 & 0 \\
        0 & 0 & 0 & c_1^2 \braket{\lambda}
    \end{pmatrix}.
\end{equation}
The matrix is block-diagonal. Its positivity depends on the eigenvalues of the central $2 \times 2$ block acting on the subspace spanned by $\{\ket{01}, \ket{10}\}$:
\begin{equation}
    M' = \begin{pmatrix}
    c_1^2(1-\braket{\lambda}) & c_0 c_1 \braket{\sqrt{\lambda}} \\
    c_0 c_1 \braket{\sqrt{\lambda}} & 0
    \end{pmatrix}.
\end{equation}
The determinant of this submatrix is:
\begin{equation}
    \det(M') = - c_0^2 c_1^2 \braket{\sqrt{\lambda}}^2.
\end{equation}
For any entangled input ($c_0, c_1 \neq 0$) and any non-trivial fading channel (where $\braket{\sqrt{\lambda}} > 0$), this determinant is strictly negative.

\noindent Since a matrix with a negative determinant must possess at least one negative eigenvalue, we have $\rho_{RB}^{T_R} \ngeq 0$. This confirms that the output state is NPT (Non-Positive Partial Transpose). The generation of an NPT state proves that the fading channel $\Phi$ is distillable. Consequently, its two-way assisted quantum capacity is strictly positive:
\begin{equation}
    Q_2(\Phi) > 0.
\end{equation}
This result holds for any convex combination of lossy channels except the complete-loss channel and ensures that the channel is capable of entanglement distribution if assisted by classical communication.

\begin{figure*}[t]
    \centering
     \includegraphics[width=0.49\linewidth]{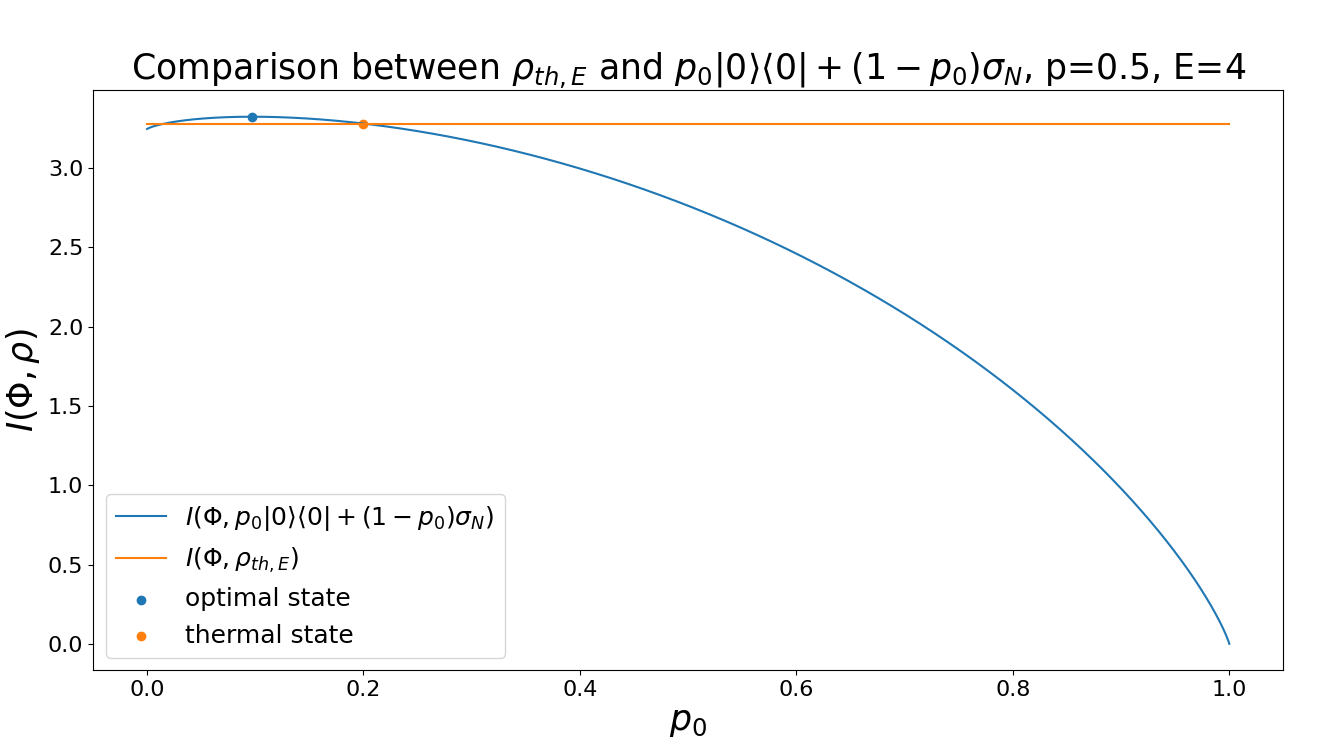}
    \hfill
    \includegraphics[width=0.49\linewidth]{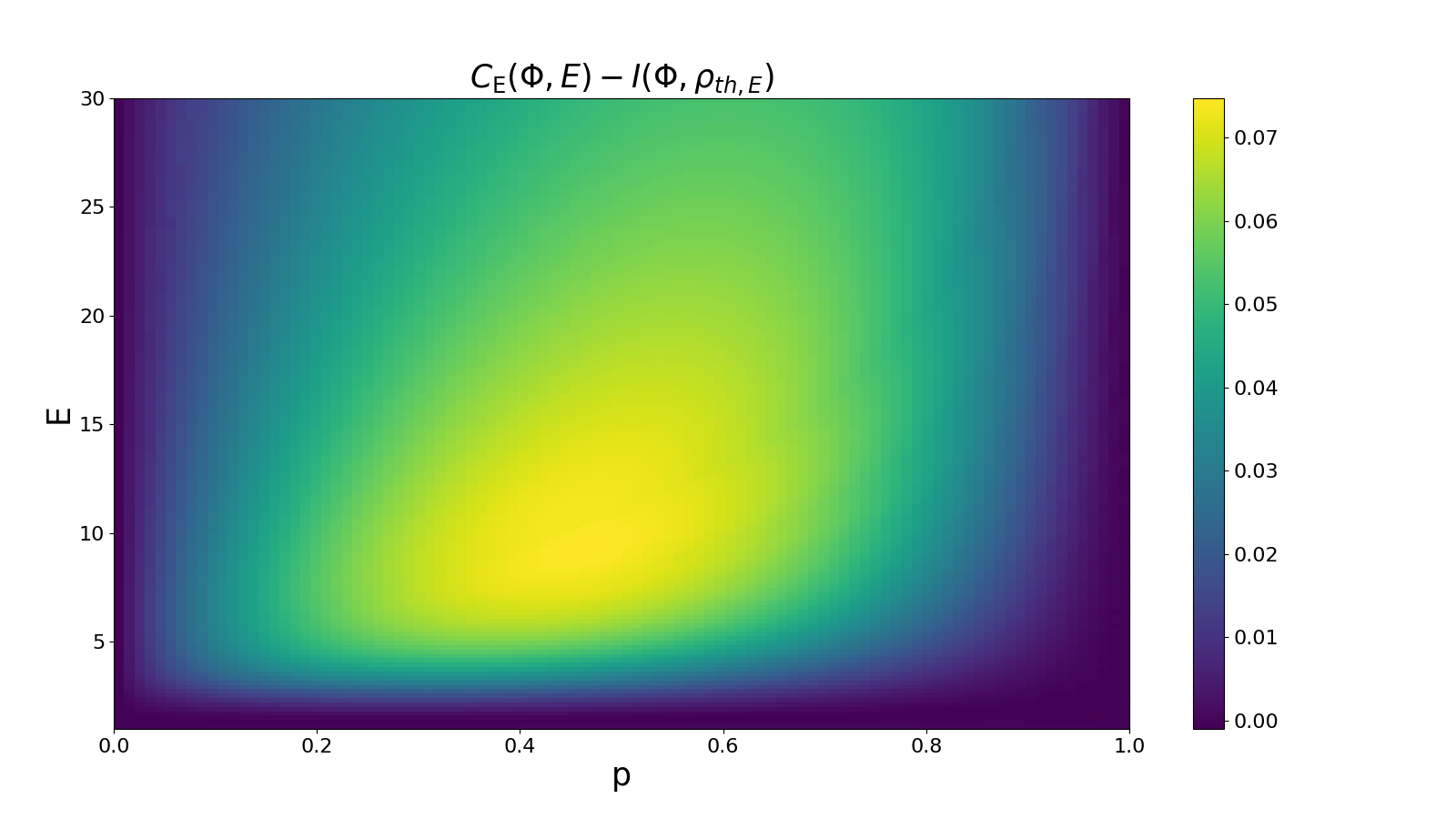}
    \caption{\textbf{Left:} Quantum mutual information of the Continuous-Variable Erasure Channel $I(\Phi_p, \rho(p_0, E))$ 
as a function of the vacuum population $p_0$, for the erasure channel 
$\Phi_p = p\mathcal{I} + (1-p)\mathcal{E}_0$ with fixed $p=0.5$ and $E=4$. 
The input state $\rho(p_0, E) = p_0\ket{0}\bra{0} + (1-p_0)\sigma_N$ is the 
optimal ansatz of Eq.~\eqref{eq:optimal_ansatz}, parametrized by the vacuum 
population $p_0$. The orange dot marks the thermal state 
($p_0^{\mathrm{th}} = 1/(E+1) = 0.2$), while the blue dot marks the true 
optimum ($p_0^{\mathrm{opt}} \approx 0.1$), demonstrating that suppressing 
the vacuum component below the thermal value strictly increases the mutual 
information. \textbf{Right:} Absolute capacity gain 
$C_\text{E}(\Phi_p, E) - I(\Phi_p, \rho_{\mathrm{th},E})$ in the $(p, E)$ plane, 
quantifying the advantage of the optimal non-Gaussian state over the thermal 
benchmark across the full parameter space of the erasure channel.}
    \label{fig:erasure_optimization}
\end{figure*}

\subsection{Improved Lower Bound for $Q_2$ via Multi-Rail Encoding}\label{sec:improved Q2}

\noindent We derive a lower bound for the two-way assisted quantum capacity $Q_2(\Phi)$ by considering a specific entanglement distillation protocol based on multi-rail encoding~\cite{Winnel,Mele2024LossDephasing,MeleNature}. 
The extensive calculations can be found in Appendix~\ref{app:multirail_bound}.
The protocol utilizes $M$ uses of a fading channel $\Phi$ to transmit a $d$-dimensional entangled state encoded in the subspace of $K$ photons distributed over $M$ modes. The protocol succeeds if and only if all $K$ photons are detected by the receiver (post-selection on total photon number conservation).\\

\noindent Let $\mathcal{H}_{K,M}$ be the subspace spanned by Fock states $\ket{\vec{n}} = \ket{n_1, \dots, n_M}$ such that $\sum_{i=1}^M n_i = K$. The dimension of this code space is $D_{K,M} = \binom{M+K-1}{K}$. We attempt to distill a maximally entangled state of rank $D_{K,M}$. The rate of this protocol is given by $R = \frac{P_{\text{succ}}}{M} \log_2 D_{K,M}$, where $P_{\text{succ}}$ is the probability that all $K$ photons are transmitted through the fading channel $\Phi = \int p(\lambda) \mathcal{E}_\lambda d\lambda$.
Since the input is a maximally entangled state, the reduced state sent through the channel is the maximally mixed state on $\mathcal{H}_{K,M}$. The success probability is thus the average transmission probability over all basis states $\ket{\vec{n}} \in \mathcal{H}_{K,M}$:
\begin{equation}\label{eq:P_succ_exact}
    P_{\text{succ}} = \frac{1}{D_{K,M}} \sum_{\substack{n_1,\dots,n_M \\ \sum n_i = K}} \left\langle \lambda^{n_1} \lambda^{n_2} \dots \lambda^{n_M} \right\rangle_{p(\lambda)},
\end{equation}
where $\langle \cdot \rangle_{p(\lambda)}$ denotes the average over the fading distribution.
A simplified, universal lower bound can be obtained by applying Jensen's inequality ($\langle \lambda^n \rangle \ge \langle \lambda \rangle^n$). This yields $P_{\text{succ}} \ge \langle \lambda \rangle^K$, leading to a compact lower bound for the capacity that depends only on the mean transmissivity $\langle \lambda \rangle$:
\begin{equation}\label{eq:Q2_multirail_bound}
    Q_2(\Phi) \ge \max_{K, M\ge 1} \frac{\langle \lambda \rangle^K}{M} \log_2 \binom{M+K-1}{K},
\end{equation}
where $K$, $M$ are strictly positive, integer numbers.
While Eq.~\eqref{eq:P_succ_exact} provides a tighter bound by exploiting the higher moments of the fading distribution, Eq.~\eqref{eq:Q2_multirail_bound} offers a robust analytical benchmark. Crucially, this bound implies that as long as the mean transmissivity is strictly positive ($\langle \lambda \rangle > 0$), the capacity is strictly positive. This quantitatively recovers and reinforces our general result from Sec.~\ref{section: distillability}: any fading channel associated with a non-trivial distribution (i.e., not the complete-loss channel) is distillable ($Q_2 > 0$).

\section{Breaking Gaussian Optimality: The Erasure Channel}
\label{sec:erasure_channel}

\noindent Having established the qualitative properties of bosonic fading channels, we now turn our attention to the quantitative evaluation of their communication capacities. A central question in continuous-variable quantum information is whether Gaussian states, which are optimal for the entanglement-assisted capacity of the static pure-loss channel, remain the best possible resources for information transmission in the presence of fluctuations. While the general case requires the numerical optimization developed in the next sections, we can analytically prove the suboptimality of the thermal states by examining a simplified but physically insightful model: the Continuous-Variable  Quantum Erasure Channel. This channel represents the extreme limit of binary fading, where the signal is either transmitted without error with probability $p$, or completely lost and replaced by the vacuum state with probability $1-p$. By focusing on this model, we can rigorously demonstrate that there exist regimes where non-Gaussian Fock-diagonal states strictly outperform their Gaussian counterparts. 

\subsection{Channel Model and Physical Motivation}
\noindent We consider a binary fading channel where the transmissivity fluctuates between perfect transmission ($\lambda=1$) and complete loss ($\lambda=0$). The resulting channel map is a convex combination of the identity $\mathcal{I}$ and the vacuum-replacement channel $\mathcal{E}_0$:
\begin{equation}
    \Phi_p(\rho) = p \rho + (1-p) \text{Tr}[\rho] \ket{0}\bra{0},
\end{equation}
where $p \in [0,1]$ represents the probability of successful transmission. Physically, this models a link subject to severe intermittency (e.g., deep fading or on-off keying), where the signal is either received intact or completely lost to the environment.

\noindent Crucially, unlike the standard finite-dimensional erasure channel~\cite{Bennett1997Erasure}, the ``erasure flag" here is the vacuum state $\ket{0}$, which is not orthogonal to the input signal states (unless the input is strictly supported on $n \ge 1$). This lack of orthogonality induces ambiguity: receiving a vacuum state could mean the signal was erased, or that the input signal has overlap with the vacuum. Gaussian inputs (thermal states) have significant overlap with the vacuum, making them potentially suboptimal for distinguishing these events.

\begin{figure*}[t]
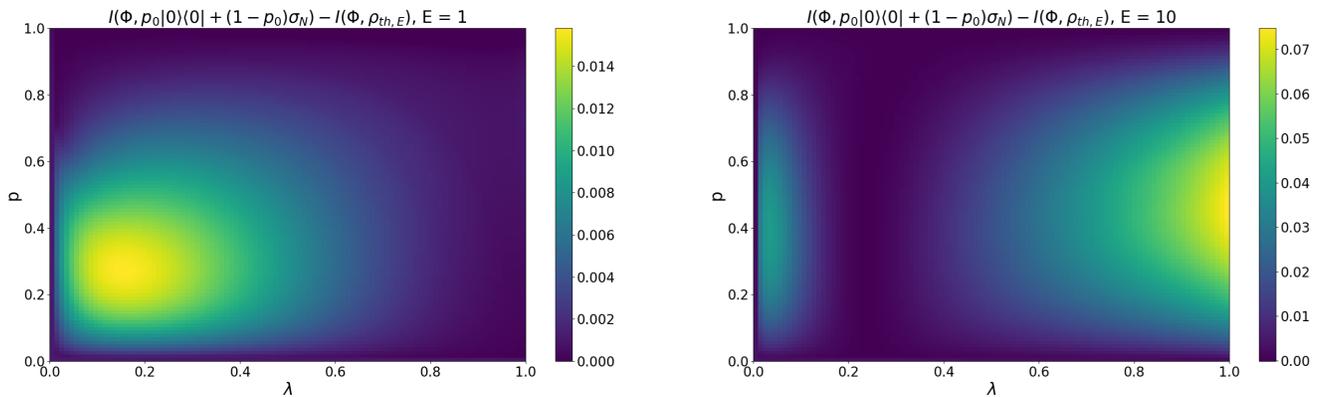

	\centering
	\includegraphics[width=0.48\linewidth]{images/erasure_lossy/difference_E=1}
    \hfill
	\includegraphics[width=0.48\linewidth]{images/erasure_lossy/difference_E=10}
    \caption{Absolute quantum mutual information gain 
$I(\Phi_{p,\lambda}^{(0)}, \rho^{(1)}) - I(\Phi_{p,\lambda}^{(0)}, \rho_{\mathrm{th}})$ 
for the erasure-lossy channel $\Phi_{p,\lambda}^{(0)} = p\mathcal{E}_{\lambda} + 
(1-p)\mathcal{E}_0$, where $\rho^{(1)} = p_0\ket{0}\bra{0} + (1-p_0)\sigma_N$ 
is the first-order variational state with vacuum population $p_0$ optimized 
over $[0,1]$. The gain is mapped across the $(p, \lambda)$ parameter space 
for low energy ($E=1$, left) and high energy ($E=10$, right). In both regimes, 
the non-Gaussian advantage is most pronounced at low transmissivity $\lambda$ 
and intermediate fading probability $p$, and increases with the available 
energy.}
	\label{fig:gain_erasure_lossy}
\end{figure*}

\subsection{Analytical Solution for $C_\text{E}$}

\noindent We seek to compute the entanglement-assisted classical capacity $C_\text{E}(\Phi_p, E)$ of the erasure-lossy channel under a mean energy constraint $E$. This fundamental limit is given by the maximum of the quantum mutual information over all valid input states:
\begin{equation}
    C_\text{E}(\Phi_p, E) = \max_{\rho:\text{Tr}[a^\dagger a \rho] \le E} I(\Phi_p, \rho),
\end{equation}
where $I(\Phi_p, \rho)$ is defined as
\begin{equation}
    I(\Phi_p, \rho) = S(\rho) + S(\Phi_p(\rho)) - S(\tilde{\Phi}_p(\rho)),
\end{equation}
with $S(\cdot)$ denoting the von Neumann entropy and $\tilde{\Phi}_p$ being the complementary channel of $\Phi_p$.

\noindent While the complete mathematical derivation is deferred to Appendix~\ref{app:erasure_derivation}, we outline here the core analytical argument that allows us to exactly solve this optimization. The first crucial simplification arises from the phase-covariance of the fading map. As proven in Appendix~\ref{appendix: diagonal optimizer}, this symmetry guarantees that the optimal input state $\rho$ is diagonal in the Fock basis without loss of generality. To determine the exact optimal population distribution $\{p_n\}$, we perform the maximization in two distinct steps. We begin by isolating the population of the vacuum state, $p_0$, and parametrizing the generic Fock-diagonal input state as:
\begin{equation}
    \rho = p_0 \ket{0}\bra{0} + (1-p_0) \sigma,
\end{equation}
where $\sigma$ is a normalized density operator supported entirely on the subspace orthogonal to the vacuum (i.e., $\text{span}\{\ket{n}\}_{n \ge 1}$). The global energy bound $\text{Tr}[a^\dagger a \rho] \le E$ consequently imposes a rescaled energy constraint on $\sigma$, requiring its mean photon number $N_{\sigma} = \text{Tr}[a^\dagger a \sigma] $ to satisfy $N_{\sigma} \le E/(1-p_0)$.\\
\noindent By inspecting the structure of the quantum mutual information for this channel (detailed in Appendix~\ref{app:erasure_derivation}), we observe a remarkable feature: for a fixed vacuum probability $p_0$, $I(\Phi_p, \rho)$ depends on the state $\sigma$ exclusively through its von Neumann entropy $S(\sigma)$, and is a strictly monotonically increasing function of it. Therefore, maximizing the mutual information is mathematically equivalent to maximizing the entropy of $\sigma$ subject to its specific energy and support constraints.\\
\noindent From standard statistical mechanics, the unique state that maximizes entropy for a given mean energy is the thermal state. In our case, because $\sigma$ is strictly confined to the positive photon-number subspace, it must take the form of a thermal state shifted away from the vacuum. This implies that the capacity-achieving state $\rho^*$ consists of a discrete vacuum component and a shifted thermal tail:
\begin{equation}\label{eq:optimal_ansatz}
    \rho^*(p_0,E) = p_0 \ket{0}\bra{0} + (1-p_0) \sum_{n=1}^\infty \frac{1}{N_\sigma} \left( \frac{N_\sigma-1}{N_\sigma} \right)^{n-1} \ket{n}\bra{n},
\end{equation}
where $N_\sigma = E/(1-p_0)$ dictates the effective temperature of the tail. This elegant structural result drastically simplifies our task. It reduces an infinite-dimensional functional optimization to a straightforward scalar maximization over the single parameter $p_0 \in [0,1]$:
\begin{equation}
    C_\text{E}(\Phi_p, E) = \max_{p_0 \in [0,1]} I(\Phi_p, \rho^*(p_0,E)).
\end{equation}
Since the objective function evaluated on $\rho^*(p_0,E)$ admits a closed analytical expression, the exact capacity can be readily computed, providing a direct avenue to explicitly demonstrate the sub-optimality of Gaussian encoding.

\subsection{Sub-optimality of Thermal Encodings}

\noindent By evaluating the analytical expression for the capacity derived above, we conclusively demonstrate the sub-optimality of thermal encoding for the bosonic erasure-like channel. As illustrated in Fig.~\ref{fig:erasure_optimization} (left) for a reference channel with transmission probability $p=0.5$ and mean energy $E=4$, the quantum mutual information $I(\Phi_p, \rho^*(p_0))$, plotted as a function of $p_0$, is maximized at a vacuum population $p_0^{\text{opt}} \approx 0.1$. In stark contrast, a Gaussian thermal state with the same average energy dictates a strictly larger vacuum probability of $p_0^{\text{th}} = 1/(E+1) = 0.2$. From a physical perspective, this spectral redistribution reduces the overlap between the input signal and the vacuum state introduced by the erasure events, thereby enhancing the distinguishability of the transmitted information at the receiver.
\noindent The quantitative advantage of this non-Gaussian strategy is mapped in Fig.~\ref{fig:erasure_optimization} (right), which displays the absolute capacity gain $\Delta C_\text{E} = C_\text{E}(\Phi_p,E) - I(\Phi_p,\rho_{\text{th},E})$ across the $(p, E)$ parameter space. The plot reveals that the most substantial gains emerge at intermediate energies and transmission probabilities. Ultimately, this exact analytical counterexample carries profound implications for the broader study of free-space quantum communication. By proving that no Gaussian state achieves the capacity even for the simplest binary mixture, we establish a strong premise for their sub-optimality in more complex, continuous fading models. This physical insight directly motivates the necessity of the numerical optimization framework developed in the subsequent sections.\\
As shown in Appendix~\ref{appendix: diagonal optimizer}, since the 
fading channel does not destroy off-diagonal coherences, the maximizer 
of $I(\Phi_p, \cdot)$ over Fock-diagonal states is unique, and any 
non-Fock-diagonal state is strictly sub-optimal. Combined with the strict sub-optimality of 
the thermal state established above, this rules out the entire Gaussian 
class as a co-optimizer of $C_\text{E}(\Phi_p, E)$.

\section{Iterative Optimization algorithm for General Fading}
\label{sec:optimization_method}

\begin{figure*}[t]
	\centering
   \includegraphics[width=0.98\linewidth]{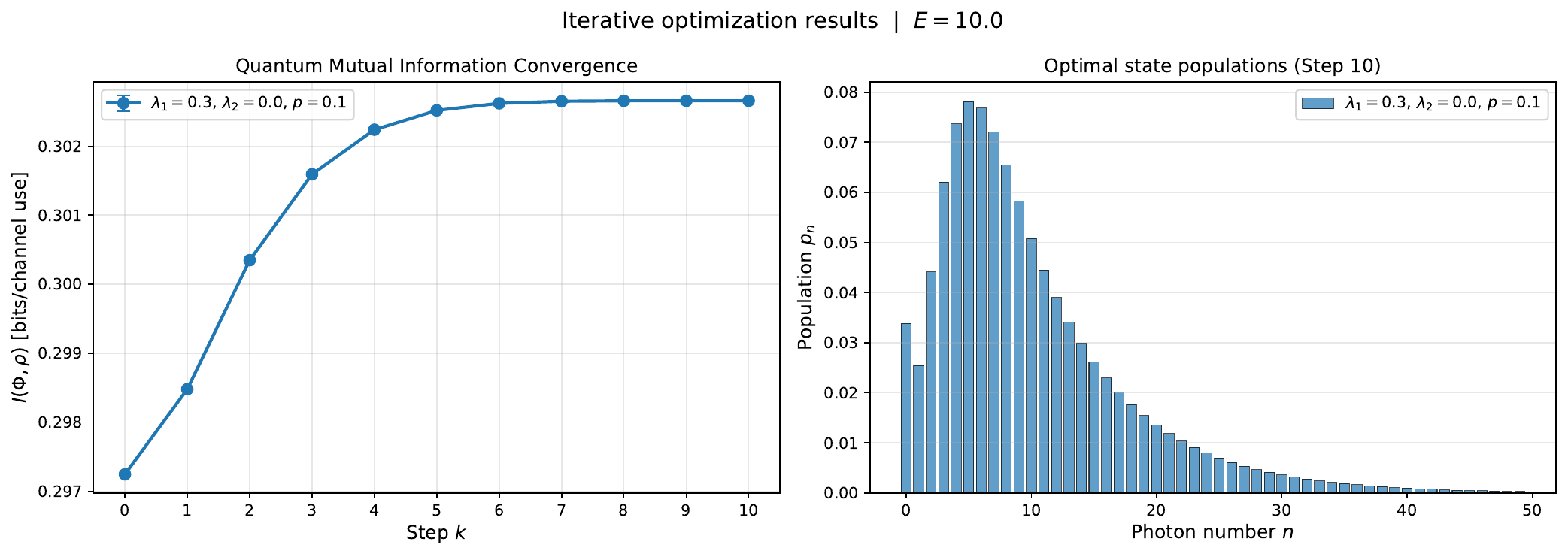}
    
    \caption{Convergence of the iterative optimization algorithm. As the order $k$ increases, the algorithm expands the non-Gaussian discrete head of the input state, monotonically increasing the achievable information rate until saturation is reached.}
    \label{fig:convergence}
\end{figure*}

\noindent While the analytical solution for the erasure channel (Sec.~\ref{sec:erasure_channel}) provides a definitive proof of non-Gaussian optimality, general fading channels with arbitrary distributions $\{p_n, \lambda_n\}$ do not admit a simple closed-form capacity expression. However, the physical insight gained from the erasure case, i.e.~that the optimal state consists of a tunable discrete head and a thermal tail, suggests a powerful ansatz for numerical optimization. 

\noindent We generalize this structure by introducing an iterative variational algorithm. The core idea is to expand the ``discrete head" of the distribution one Fock state at a time. At each step $k$, we optimize the populations of the first $k$ Fock states individually, while fixing the remaining infinite tail to be a thermal distribution (shifted to start at $n=k$), which is the entropy-maximizing form for the high-energy subspace.

\subsection{The Variational Ansatz}
\noindent We construct a hierarchy of Fock-diagonal states $\rho_E^{(k)}$ parametrized by $k$ independent variables $\{w_0, w_1, \dots, w_{k-1}\}$, representing the weights of the first $k$ Fock states. The ansatz at step $k$ is defined as:
\begin{equation}\label{eq:var ansatz}
    \rho_E^{(k)} = \sum_{n=0}^{k-1} w_n \ket{n}\bra{n} + \left(1 - \sum_{n=0}^{k-1} w_n \right) \sigma_N^{(k)},
\end{equation}
where $\sigma_N^{(k)}$ is the shifted thermal state acting on the subspace $n \ge k$:
\begin{equation}
    \sigma_N^{(k)} = \sum_{n=k}^\infty \frac{1}{N-k+1} \left( \frac{N-k}{N-k+1} \right)^{n-k} \ket{n}\bra{n}.
\end{equation}
The parameter $N$ (the effective temperature of the tail) is not a free variable but is fixed by the global energy constraint $\text{Tr}[a^\dagger a \rho_E^{(k)}] = E$:
\begin{equation}
    N = \frac{E - \sum_{n=0}^{k-1} n w_n}{1 - \sum_{n=0}^{k-1} w_n}.
\end{equation}
To ensure the weights represent valid independent probabilities ($w_n \ge 0, \sum_n w_n \le 1$), we parametrize them using a sequence of independent variables $\pi_i \in [0,1]$ such that $w_n = \pi_n \prod_{j=0}^{n-1}(1-\pi_j)$. Physically, the variable $\pi_n$ can be intuitively understood as the conditional probability of the state containing exactly $n$ photons, given that it contains at least $n$ photons. This sequential ``survival" parametrization elegantly maps the strictly constrained probability simplex into a simple unit hypercube $[0,1]^k$. Consequently, the numerical optimizer can freely explore the parameter space without risking boundary constraint violations, significantly improving both the numerical stability and the convergence speed of the algorithm.

\subsection{Hierarchy of Solutions}
\noindent This construction creates a nested hierarchy of solutions physically motivated by our analytical findings:

\begin{itemize}
    \item \textbf{Step $k=0$ (Gaussian Benchmark):} The ansatz reduces to a pure thermal state $\rho_E^{(0)}=\rho_{\text{th},E}$. This recovers the standard Gaussian rate.
    \item \textbf{Step $k=1$ (Erasure-like Solution):} The state is $\rho_E^{(1)} = w_0 \ket{0}\bra{0} + (1-w_0)\sigma_N$. This is exactly the optimal form found analytically for the erasure channel in Sec.~\ref{sec:erasure_channel}. It allows the optimization to ``carve out" the vacuum component to distinguish it from loss-induced vacuum.
    \item \textbf{Step $k > 1$ (General Fading):} For more complex fading distributions (e.g., mixtures of two or more lossy channels, or continuous distributions like Weibull), independent control over higher photon number populations ($w_0, w_1, \dots$) allows the state to adapt to the specific interference patterns of the channel components. The variational state at the step $k$ is the one defined in Eq.~\eqref{eq:var ansatz}.
\end{itemize}
Since the family of states at step $k$ strictly includes the family at step $k-1$ (which can be explicitly verified by setting the parameter $w_{k-1}$ to its corresponding thermal value), the maximum mutual information is monotonically non-decreasing with $k$:
\begin{equation}
    I_{\text{max}}^{(k)} \ge I_{\text{max}}^{(k-1)}.
\end{equation}
We heavily exploit this nested topological structure to enhance the efficiency of our algorithm through a ``warm-start'' technique. Specifically, the numerical optimization at step $k$ is initialized using the exact optimal parameters $\pi_i$ found at step $k-1$, while the newly introduced $k$-th parameter is explicitly set to match the corresponding thermal distribution tail. This ensures that the optimizer always begins its search from the global maximum of the previous sub-manifold, guaranteeing monotonic improvement at each iteration and drastically reducing the computational overhead required to converge in the expanded parameter space.
In practice, we find that the algorithm converges rapidly. For the analyzed energy ranges ($E \sim 1-10$) and typical atmospheric fading distributions, saturation is typically reached within at most $k \approx 5-10$ steps, as can be seen in Fig.~\ref{fig:convergence}, confirming that the non-Gaussianity is concentrated in the low-photon-number sector.

\subsection{Numerical Implementation}
\noindent The mutual information $I(\Phi, \rho_E^{(k)})$ is evaluated numerically by diagonalizing the output and complementary states (see Appendix~\ref{app:entropic_functionals} for details). The optimization over the $k$-dimensional space $\{\pi_0, \dots, \pi_{k-1}\}$ is performed using standard global optimization routines.
Since the Fock space is infinite, we introduce a sufficiently large cutoff dimension $\bar{n}$ for the numerical computation of entropies. We derive a rigorous bound on the truncation error in Appendix~\ref{app:truncation_error_bound}, ensuring that our numerical capacity results are accurate to within machine precision.

\begin{figure*}[t]
	\centering
	\includegraphics[width=0.98\linewidth]{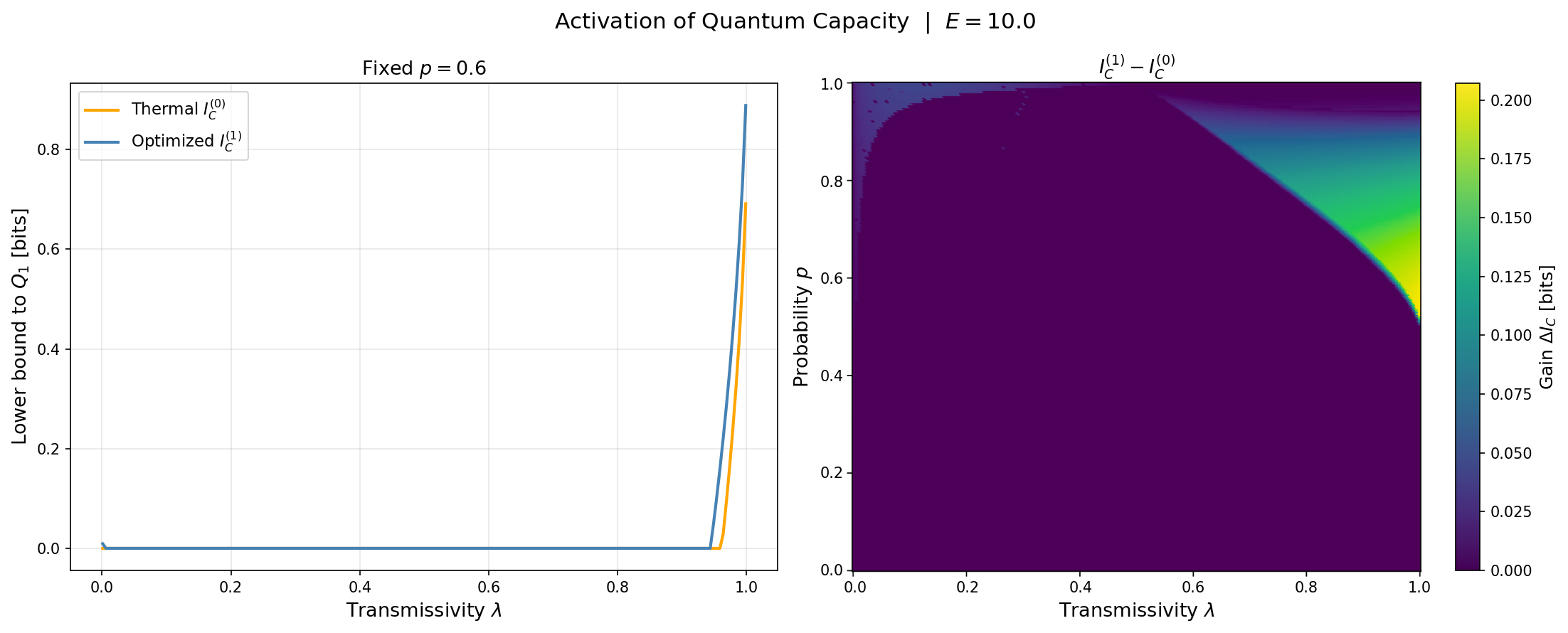}
    \caption{Activation of the quantum capacity for the binary fading family 
$\Phi_{p,\lambda}^{(0)} = p\mathcal{E}_{\lambda} + (1-p)\mathcal{E}_0$, 
at energy $E = 10$, where $I_c^{(k)} \equiv I_c(\Phi_{p,\lambda}^{(0)}, 
\rho^{(k)})$ denotes the coherent information evaluated on the $k$-th 
order variational state. \textbf{Left:} $I_c^{(k)}$ as a function of 
the transmissivity $\lambda$, at fixed $p = 0.6$, for the thermal state 
($k=0$, orange) and the first-order optimized state ($k=1$, blue). The 
two curves coincide over most of the range, but the optimized 
state achieves a strictly positive coherent information at slightly lower 
values of $\lambda$, activating the channel in a regime where the thermal 
benchmark remains non-positive. This advantage persists as $\lambda \to 1$. 
\textbf{Right:} Gain $I_c^{(1)} - I_c^{(0)}$ across the $(p, \lambda)$ 
parameter space, showing the set of channels within the family for which 
the first-order non-Gaussian optimization strictly outperforms the thermal 
benchmark.}
    \label{fig:activation_Q1}
\end{figure*}

\section{Numerical results for fading channels}

\noindent We now apply our iterative optimization algorithm to physically relevant fading models. We consider both discrete binary mixtures, which serve as fundamental building blocks for understanding fading dynamics, and continuous distributions modeling atmospheric turbulence.

\subsection{Mixtures of two lossy channels}\label{sec:binary-lossy}
\noindent We first consider the family of channels defined by the convex combination of two pure-loss channels:
\begin{equation}
\Phi_{p,\lambda_1, \lambda_2}(\rho)=p\,\mathcal{E}_{\lambda_1}(\rho)+(1-p)\,\mathcal{E}_{\lambda_2}(\rho).
\end{equation}
We focus on two sub-families that represent extreme but physically motivated fading scenarios:

\paragraph{Lossy vs. Noisy Channel ($\lambda_2=0$):}
The channel $\Phi_{p,\lambda}^{(0)} = p\mathcal{E}_\lambda + (1-p)\mathcal{E}_0$ models a link where the signal is transmitted with transmissivity $\lambda$ with probability $p$, or completely lost (replaced by vacuum) with probability $1-p$. This generalizes the erasure channel to the case of imperfect transmission. 
In Fig.~\ref{fig:gain_erasure_lossy}, we plot the absolute capacity gain $G = C_\text{E}(\rho_{\text{opt}}) - C_\text{E}(\rho_{\text{th}})$ in the $(p, \lambda)$ plane for low ($E=1$) and high ($E=10$) energies. We observe significant gains in the intermediate fading regime, which persist and scale with input energy.

\paragraph{Identity vs. Lossy Channel ($\lambda_1=1$):}
The channel $\Phi_{p,\lambda}^{(1)} = p\mathcal{I} + (1-p)\mathcal{E}_\lambda$ represents a scenario where the signal is either perfectly transmitted or attenuated. This models intermittent noise or ``clear air" turbulence. Also in this case, our algorithm finds non-Gaussian states that outperform the thermal benchmark, particularly when the contrast between the two fading components is high. Plots for this model can be found in App.~\ref{app:identity_lossy}.

\subsection{General fading ensembles}\label{sec:general}
\noindent We extend our analysis to realistic atmospheric fading models. We consider the discretized Log-Negative Weibull distribution, which describes the transmissivity statistics of a beam subject to wander-induced fading~\cite{Vasylyev2012PRL_LogNegativeWeibull, Pirandola2021}. By applying our iterative algorithm to the discretized ensemble $\{p_i, \lambda_i\}$, we compute the capacity for various turbulence strengths. Representative plots can be found in App.~\ref{app:general_fading}. Our analysis demonstrates that optimized Fock-diagonal states consistently outperform thermal benchmarks across physically relevant energy regimes. This confirms the operational advantage of non-Gaussian encodings even in the presence of continuous, realistic fading distributions.

\subsection{Activation of quantum capacity}\label{sec:quantum-capacity}
\noindent While the entanglement-assisted capacity is strictly positive for any non-trivial quantum channel \cite{Bennett2002IEEEEACC}, the quantum capacity vanishes for a much broader class of channels. A key question is whether non-Gaussian strategies can ``activate" the quantum capacity in regimes where Gaussian inputs fail. We address this by maximizing the single-shot coherent information $Q_1(\Phi) = \max_{\rho} I_c(\Phi, \rho)$, where $I_c(\Phi,\rho) = S(\Phi(\rho)) - S(\tilde{\Phi}(\rho))$. By definition, $Q_1(\Phi)$ constitutes a fundamental lower bound to the quantum capacity $Q(\Phi)$ for any channel. While the non-degradable nature of fading channels implies that $Q_1(\Phi)$ might strictly underestimate the ultimate capacity due to potential super-additivity, demonstrating a strictly positive $Q_1(\Phi)$ is sufficient to prove channel activation.\\
\noindent We apply our iterative optimization to $I_c(\Phi, \rho)$. The results, shown in Fig.~\ref{fig:activation_Q1}, are striking. We identify wide parameter regions where the coherent information for the thermal state drops to zero or becomes negative, suggesting the channel might be useless for quantum communication.
However, in these same regions, our optimized non-Gaussian state yields a strictly positive coherent information. It is worth noting that while our algorithm searches within the Fock-diagonal subspace, which is theoretically guaranteed to contain the global optimum for the mutual information, the absolute maximum of the coherent information might, in principle, involve non-diagonal states. Nonetheless, finding any specific state that achieves $I_c > 0$ when the thermal benchmark fails is mathematically sufficient to rigorously prove the activation of the channel.\\
\noindent This result constitutes a proof of \textbf{channel activation}: the fading 
channel, for which the thermal-state coherent information is non-positive in 
this regime, is in fact capable of reliable quantum transmission 
($Q \ge Q_1 > 0$) when accessed with the correct non-Gaussian encoding. 
We note that this activation result applies specifically to thermal inputs; 
whether all Gaussian states yield non-positive coherent information in these 
regimes remains an open question, as the coherent information is not concave 
in the input state and the phase-twirling argument does not apply here.

\section{Conclusions}

\noindent In this work, we have provided a comprehensive analysis of the quantum Shannon-theoretic properties of bosonic fading channels, modeled as convex combinations of pure-loss maps. Our results advance the understanding of this practically relevant class of channels along two complementary directions. On the structural side, we have rigorously established that any non-trivial fading channel is non-degradable, as witnessed by the rank of its Choi matrix exceeding the degradability threshold. We have further derived an explicit analytical condition for non-antidegradability, $\langle\sqrt{\lambda}\rangle^2 + \langle\lambda\rangle > 1$, expressed in terms of the first two fractional moments of the fading distribution, and evaluated it over the full parameter space of physically relevant models, including the Log-Negative Weibull distribution. Most notably, we have proven that every non-trivial fading channel is distillable: by constructing an NPT output state from a two-qubit entangled probe, we have shown that $Q_2(\Phi) > 0$ for any fading distribution that is not concentrated at zero transmissivity. This result guarantees that entanglement distribution and quantum key distribution are always feasible at a strictly positive rate, regardless of the turbulence strength, and stands in stark contrast to reverse coherent information benchmarks, which predict a complete collapse of quantum communication in high-loss regimes. We have also derived improved lower bounds for $Q_2$ via multi-rail encoding, which strictly outperform the standard RCI bounds in high-loss regimes and depend only on the mean transmissivity $\langle\lambda\rangle$ of the distribution. On the operational side, we have demonstrated that thermal-state encodings, which are optimal for static bosonic Gaussian channels~\cite{Holevo2001PRAGaussian, giovannetti_solution_gaussian_optimality_conjecture_2015}, fail for fading channels. For the paradigmatic binary erasure-like model, we have derived the exact capacity-achieving state in closed form, proving analytically that the entire Gaussian class is strictly sub-optimal for the entanglement-assisted classical capacity. We then extended this conclusion to general fading distributions through an iterative variational algorithm that optimizes over Fock-diagonal input states, demonstrating strictly positive absolute gains in the entanglement-assisted classical capacity over thermal inputs across a wide range of physically relevant parameters. Most strikingly, our analysis of the single-shot coherent information reveals the phenomenon of channel activation: non-Gaussian Fock-diagonal states achieve a strictly positive coherent information in parameter regimes where thermal inputs produce zero or negative values, rigorously proving that $Q(\Phi) \ge Q_1(\Phi) > 0$ in conditions where standard Gaussian strategies completely fail.\\
\noindent Our findings highlight non-Gaussian state preparation as a practical avenue for enhancing free-space quantum communications. From an experimental perspective, since the required non-Gaussianity is highly concentrated in the lowest Fock layers, excellent approximations of these capacity-achieving states could be synthesized using current photon subtraction or addition technologies on squeezed thermal states. Theoretically, exploring the multi-mode extension of our non-Gaussian ans\"atze, or investigating potential super-additive effects arising from the non-degradability of these channels, represents an exciting frontier for future research. Ultimately, unlocking the full potential of robust atmospheric quantum networks will fundamentally rely on harnessing these non-Gaussian resources.
\section*{Acknowledgements}
\noindent GDP is a member of the ``Gruppo Nazionale per la Fisica Matematica (GNFM)'' of the ``Istituto Nazionale di Alta Matematica ``Francesco Severi'' (INdAM)''. F.A.M.\ acknowledges financial support from the European Union (ERC StG ETQO, Grant Agreement no.\ 101165230).

\bibliographystyle{unsrt}
\bibliography{biblio}

\clearpage 
\onecolumngrid 

\appendix
\section{Proof of Non-Antidegradability Condition}\label{app:antidegradability_proof}

\noindent In this appendix, we provide the detailed derivation of the fidelity expressions used for the non-antidegradability criterion discussed in Sec.~\ref{sec:antideg}. Our goal is to derive an explicit sufficient condition for the channel $\Phi = \sum_n p_n \mathcal{E}_{\lambda_n}$ to be non-antidegradable, based on the inequality $F(\tilde{\Phi}(\rho_1), \tilde{\Phi}(\rho_2)) > F(\Phi(\rho_1), \Phi(\rho_2))$, where $\tilde{\Phi}$ is the complementary channel.
We choose as probe states the first two Fock states, $\rho_1 = \ket{0}\bra{0}$ and $\rho_2 = \ket{1}\bra{1}$.

\subsection{Main Output Fidelity}
\noindent The action of the fading channel $\Phi$ on the probe states yields:
\begin{align}
    \Phi(\rho_1) &= \sum_n p_n \mathcal{E}_{\lambda_n}(\ket{0}\bra{0}) = \sum_n p_n \ket{0}\bra{0} = \ket{0}\bra{0}, \\
    \Phi(\rho_2) &= \sum_n p_n \mathcal{E}_{\lambda_n}(\ket{1}\bra{1}) = \sum_n p_n \left[ (1-\lambda_n)\ket{0}\bra{0} + \lambda_n \ket{1}\bra{1} \right] \nonumber \\
    &= (1-\braket{\lambda})\ket{0}\bra{0} + \braket{\lambda} \ket{1}\bra{1}.
\end{align}
Since both output states are diagonal in the Fock basis, their fidelity $F(\rho,\sigma) = (\text{Tr}\sqrt{\sqrt{\rho}\sigma\sqrt{\rho}})^2 = \left( \sum_k \sqrt{\rho_{kk}\sigma_{kk}} \right)^2$ simplifies to the squared sum of the square roots of the product of their populations. The only non-zero overlap occurs in the vacuum component:
\begin{equation}
    F(\Phi(\rho_1), \Phi(\rho_2)) = 1 - \braket{\lambda}.
\end{equation}

\subsection{Complementary Output Fidelity}
\noindent To compute the fidelity of the complementary channel, we use the Stinespring dilation derived in Sec.~\ref{sec:complementary}. The complementary map is given by:
\begin{equation}
	\tilde{\Phi}(\rho_S) = \text{Tr}_{S}\left[ V (\rho_S \otimes \ket{0}_E\bra{0}\otimes \ket{\psi_p}_A\bra{\psi_p}) V^\dagger \right],
\end{equation}
where the isometry $V$ acting on $S \otimes E \otimes A$ is defined as $V = \sum_{n} U_{\lambda_n}^{(SE)} \otimes \ket{n}_A \bra{n}$ and the ancilla state is $\ket{\psi_p}_A = \sum_n \sqrt{p_n}\ket{n}_A$.
Let us explicitly compute the action of the global unitary $V$ on the input states $\ket{0}_S$ and $\ket{1}_S$ (tensored with the environment and ancilla).
For the input $\rho_1 = \ket{0}\bra{0}$:
\begin{align}
    V \ket{0}_S \ket{0}_E \ket{\psi_p}_A &= \sum_n \left( U_{\lambda_n}^{(SE)} \otimes \ket{n}_A\bra{n} \right) \ket{0}_S \ket{0}_E \sum_k \sqrt{p_k}\ket{k}_A \nonumber \\
    &= \sum_n \sqrt{p_n} \left( U_{\lambda_n}^{(SE)}\ket{0}_S\ket{0}_E \right) \otimes \ket{n}_A \nonumber \\
    &= \sum_n \sqrt{p_n} \ket{0}_S\ket{0}_E \otimes \ket{n}_A = \ket{0}_S \ket{0}_E \otimes \ket{\psi_p}_A.
\end{align}
Tracing out the system $S$ (which is in the state $\ket{0}_S$), we obtain the first complementary output:
\begin{equation}
    \tilde{\Phi}(\rho_1) = \ket{0}\bra{0}_E \otimes \ket{\psi_p}\bra{\psi_p}_A.
\end{equation}
For the input $\rho_2 = \ket{1}\bra{1}$:
\begin{align}
    V \ket{1}_S \ket{0}_E \ket{\psi_p}_A &= \sum_n \sqrt{p_n} \left( U_{\lambda_n}^{(SE)}\ket{1}_S\ket{0}_E \right) \otimes \ket{n}_A \nonumber \\
    &= \sum_n \sqrt{p_n} \left( \sqrt{\lambda_n}\ket{1}_S\ket{0}_E + \sqrt{1-\lambda_n}\ket{0}_S\ket{1}_E \right) \otimes \ket{n}_A \nonumber \\
    &= \ket{1}_S \ket{0}_E \otimes \left( \sum_n \sqrt{p_n \lambda_n}\ket{n}_A \right) + \ket{0}_S \ket{1}_E \otimes \left( \sum_n \sqrt{p_n (1-\lambda_n)}\ket{n}_A \right).
\end{align}
Let us define the unnormalized vectors $\ket{\psi_{\lambda p}}_A = \sum_n \sqrt{p_n \lambda_n}\ket{n}_A$ and $\ket{\psi_{(1-\lambda)p}}_A = \sum_n \sqrt{p_n (1-\lambda_n)}\ket{n}_A$.
Tracing out the system $S$ (summing over the orthogonal states $\ket{0}_S$ and $\ket{1}_S$) yields:
\begin{equation}
    \tilde{\Phi}(\rho_2) = \ket{0}\bra{0}_E \otimes \ket{\psi_{\lambda p}}\bra{\psi_{\lambda p}}_A + \ket{1}\bra{1}_E \otimes \ket{\psi_{(1-\lambda)p}}\bra{\psi_{(1-\lambda)p}}_A.
\end{equation}
Finally, we compute the fidelity $F(\tilde{\Phi}(\rho_1), \tilde{\Phi}(\rho_2))$. Since the environment states $\ket{0}_E$ and $\ket{1}_E$ are orthogonal, the fidelity splits into the sum of overlaps in each environment sector. However, $\tilde{\Phi}(\rho_1)$ has no support on $\ket{1}_E$, so only the $\ket{0}_E$ sector contributes:
\begin{equation}
    F(\tilde{\Phi}(\rho_1), \tilde{\Phi}(\rho_2)) = \left| \braket{\psi_p | \psi_{\lambda p}} \right|^2.
\end{equation}
Calculating the inner product explicitly:
\begin{equation}
    \braket{\psi_p | \psi_{\lambda p}} = \sum_n (\sqrt{p_n}) (\sqrt{p_n \lambda_n}) = \sum_n p_n \sqrt{\lambda_n} = \braket{\sqrt{\lambda}}.
\end{equation}
Thus, the complementary fidelity is simply the square of the fractional moment:
\begin{equation}
    F(\tilde{\Phi}(\rho_1), \tilde{\Phi}(\rho_2)) = \braket{\sqrt{\lambda}}^2.
\end{equation}

\subsection{Detailed Analysis of Binary Fading Mixtures}\label{app:binary_antideg}

\noindent In the following, we analyze the non-antidegradability condition for the fundamental class of binary fading channels. We consider a general convex combination of two pure-loss channels with transmissivities $\lambda_1$ and $\lambda_2$ mixed with probability $p$:
\begin{equation}\label{app: binary fading definition}
    \Phi_{p, \lambda_1, \lambda_2} = p \mathcal{E}_{\lambda_1} + (1-p) \mathcal{E}_{\lambda_2}.
\end{equation}
Applying the general sufficient condition derived in Eq.~\eqref{eq:non_antideg_condition} ($\braket{\sqrt{\lambda}}^2 + \braket{\lambda} > 1$), we obtain the explicit inequality governing the binary case:
\begin{equation}\label{eq:binary_general_condition}
    \left( p\sqrt{\lambda_1} + (1-p)\sqrt{\lambda_2} \right)^2 + p\lambda_1 + (1-p)\lambda_2 > 1.
\end{equation}
This inequality defines a volume in the parameter space $(p, \lambda_1, \lambda_2)$ where the channel is guaranteed to be non-antidegradable (and thus $Q_2(\Phi) > 0$). To gain physical insight, we examine 2D sections of this volume corresponding to specific channel configurations.

\begin{figure*}[t]
    \centering
    \includegraphics[width=0.32\linewidth]{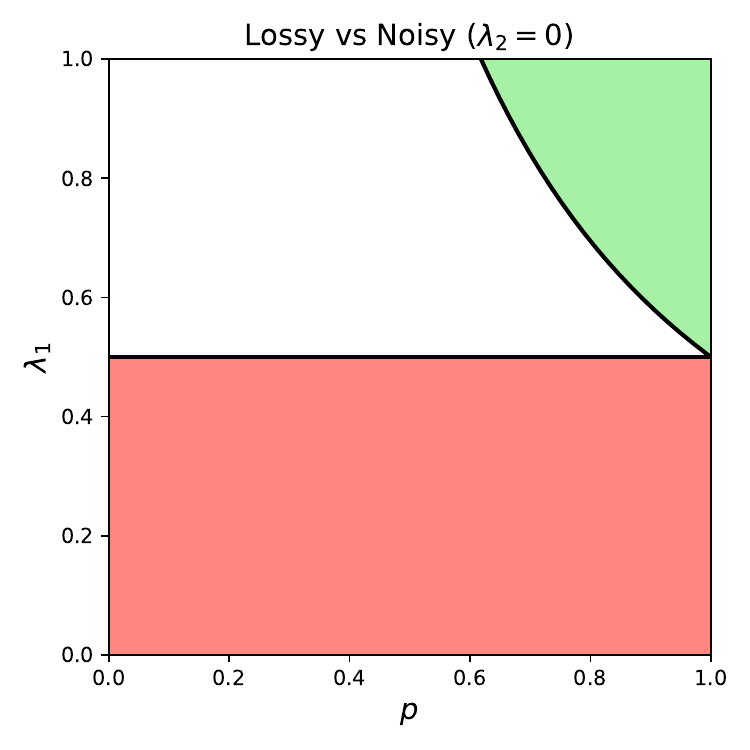}
    \hfill
    \includegraphics[width=0.32\linewidth]{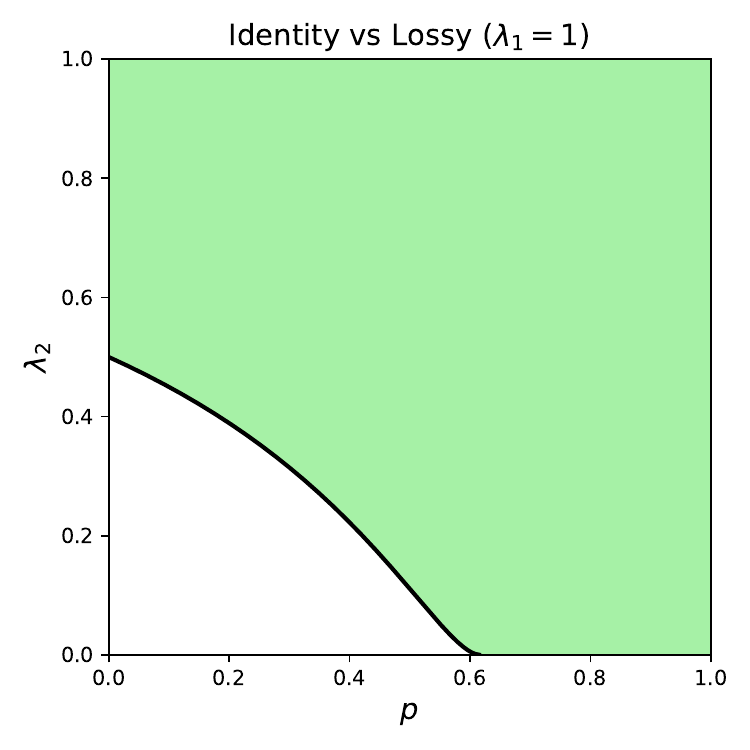}
    \hfill
    \includegraphics[width=0.32\linewidth]{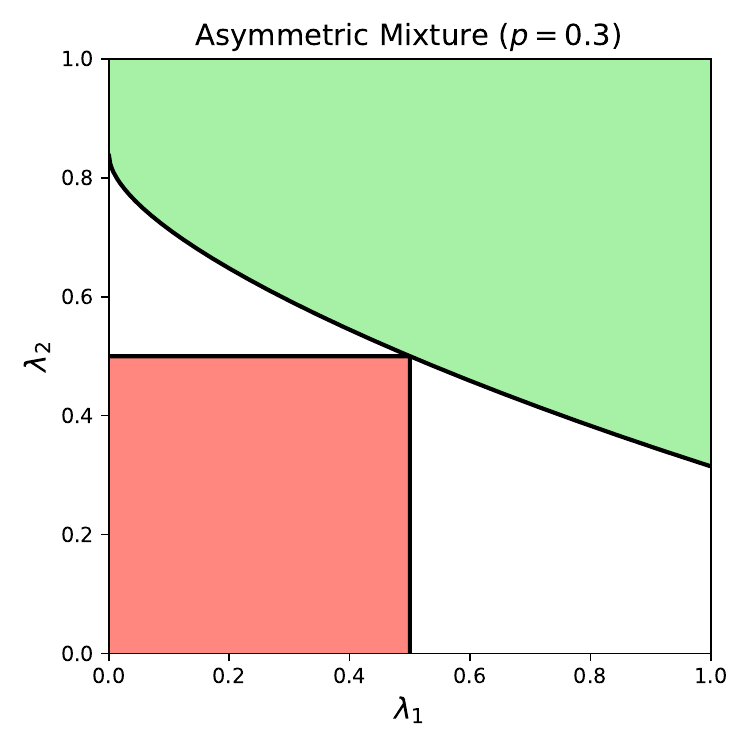}
    
    \caption{Regions of non-antidegradability (green) and degradability (red) for specific binary fading configurations defined in Eq.~\eqref{app: binary fading definition}. \textbf{(a)} Lossy vs. Noisy ($\lambda_2=0$) in the $(p, \lambda_1)$ plane. \textbf{(b)} Identity vs. Lossy ($\lambda_1=1$) in the $(p, \lambda_2)$ plane. \textbf{(c)} General binary mixture with fixed asymmetric weight $p=0.3$ in the $(\lambda_1, \lambda_2)$ plane. The boundary curves represent the analytic threshold where Eq.~\eqref{eq:binary_general_condition} holds with equality.}
    \label{fig:app_antideg_cases}
\end{figure*}

\subsubsection{Specific Case Studies}

\begin{itemize}
    \item \textbf{Binary Erasure-like Channel ($\lambda_1=1, \lambda_2=0$):} 
    This corresponds to the mixture $\Phi_p = p \mathcal{I} + (1-p) \mathcal{E}_0$. Substituting into Eq.~\eqref{eq:binary_general_condition}, we recover the condition $p^2 + p - 1 > 0$. The threshold for non-antidegradability is given by the inverse golden ratio:
    \begin{equation}
        p > \frac{\sqrt{5}-1}{2} = \frac{1}{\phi} \approx 0.618.
    \end{equation}
    
    \item \textbf{Lossy vs. Noisy Channel ($\lambda_2=0$):} 
    Consider the mixture of a generic lossy channel $\mathcal{E}_{\lambda_1}$ and the complete-loss channel $\mathcal{E}_0$ ($\Phi = p \mathcal{E}_{\lambda_1} + (1-p) \mathcal{E}_0$). The condition simplifies to $p(p+1)\lambda_1 > 1$, or:
    \begin{equation}
        \lambda_1 > \frac{1}{p(p+1)}.
    \end{equation}
    This defines the minimum transmissivity $\lambda_1$ required to sustain quantum communication for a given fading probability $p$. The valid region is shown in Fig.~\ref{fig:app_antideg_cases}(a).

    \item \textbf{Identity vs. Lossy Channel ($\lambda_1=1$):} 
    Consider the mixture of the identity channel and a noisy background $\mathcal{E}_{\lambda_2}$ ($\Phi = p \mathcal{I} + (1-p) \mathcal{E}_{\lambda_2}$). The condition identifies the maximum background loss $\lambda_2$ tolerable before the channel potentially becomes antidegradable. The phase diagram is shown in Fig.~\ref{fig:app_antideg_cases}(b).

    \item \textbf{Asymmetric General Mixture ($p=0.3$):} 
    We investigate the case of a biased mixture where the first component is less probable ($p=0.3$). As shown in Fig.~\ref{fig:app_antideg_cases}(c), the non-antidegradable region in the $(\lambda_1, \lambda_2)$ plane is significantly skewed compared to the symmetric $p=0.5$ case (discussed in the main text), highlighting the strong dependence of the quantum capacity region on the mixing weights.
\end{itemize}

\section{Derivation of the Multi-Rail Lower Bound}\label{app:multirail_bound}

\noindent We detail the derivation of the lower bound for $Q_2(\Phi)$ based on the multi-rail multi-photon protocol discussed in the main text in Sec.~\ref{sec:improved Q2}. This approach adapts the strategy from Ref.~\cite{Mele2024LossDephasing} to the case of fading channels.

\subsection{Protocol Description}
Alice and Bob use $M$ uses of the channel $\Phi$ to share a target entangled state $\ket{\Psi}_{AB}$.
\begin{itemize}
\item \textbf{Encoding:} Alice prepares a maximally entangled state $\ket{\Psi}_{AB} = \frac{1}{\sqrt{D}} \sum_{k=1}^D \ket{k}_A \ket{k}_B$, where the system $B$ (to be sent through the channel) is encoded in the subspace $\mathcal{H}_{K,M}$ of $K$ photons distributed across $M$ bosonic modes. The dimension of this subspace is $D = D_{K,M} = \binom{M+K-1}{K}$. The basis states $\ket{k}_B$ correspond to the Fock states $\ket{\vec{n}} = \ket{n_1, \dots, n_M}$ with $\sum_{i=1}^M n_i = K$.
\item \textbf{Transmission:} Alice sends the $M$ modes of system $B$ through $M$ independent realizations of the fading channel $\Phi$. The channel acts as $\Phi^{\otimes M}(\rho) = \int d\vec{\lambda} \, p(\vec{\lambda}) \, \mathcal{E}_{\vec{\lambda}}(\rho)$, where $\mathcal{E}_{\vec{\lambda}} = \bigotimes_{i=1}^M \mathcal{E}_{\lambda_i}$.
\item \textbf{Post-selection:} Bob measures the total photon number $\hat{N}_{tot} = \sum_{i=1}^M \hat{n}_i$ on the received modes.
\begin{itemize}
    \item If the outcome is $ \text{Tr}[\hat{N}_{tot}\Phi^{\otimes M}(\rho)] = K$, the transmission is considered successful. Since pure-loss channels conserve photon number only if no photons are lost, detecting $K$ photons implies that the state was transmitted without error (up to a phase or amplitude damping that is uniform for fixed $\vec{\lambda}$).
    \item If the outcome is $\text{Tr}[\hat{N}_{tot}\Phi^{\otimes M}(\rho)] < K$, the protocol aborts (or declares failure).
\end{itemize}
\end{itemize}
\subsection{Success Probability and Rate}
\noindent From~\cite{Mele2024LossDephasing} we know that the achievable rate $R$ is given by 
\begin{equation}
R = \frac{P_{\text{succ}}}{M} \log_2 D_{K,M}.
\end{equation}
We compute $P_{\text{succ}}$. For a single use of a pure-loss channel $\mathcal{E}_\lambda$ acting on a Fock state $\ket{n}$, the probability of transmitting all $n$ photons is $\lambda^n$.
For the fading channel $\Phi$, which is a convex combination of lossy channels $\mathcal{E}_\lambda$, the probability of transmitting a Fock state $\ket{\vec{n}}$ without loss is the average over the fading distribution:
\begin{equation}
    P(\vec{n} \to \vec{n}) = \int \left( \prod_{i=1}^M p(\lambda_i) \lambda_i^{n_i} \right) d\vec{\lambda} = \prod_{i=1}^M \int p(\lambda_i) \lambda_i^{n_i} d\lambda_i = \prod_{i=1}^M \langle \lambda^{n_i} \rangle.
\end{equation}
Note that here we assumed independent fading for each channel use (fast fading). If the fading is slow (constant over $M$ uses), the average would be $\langle \lambda^K \rangle$. We assume the general independent case.\\

\noindent The post-selection operation performed by Bob is described by the projector onto the subspace of total photon number $K$. We denote this projector as $\mathcal{P}_K$, defined explicitly as:
\begin{equation}
    \mathcal{P}_K = \sum_{\substack{n_1, \dots, n_M \\ \sum n_i = K}} \ket{n_1, \dots, n_M} \bra{n_1, \dots, n_M}.
\end{equation}
This projector selects only those events where the total photon number is conserved, filtering out instances where photon loss occurred. The state sent by Alice is the maximally mixed state on $\mathcal{H}_{K,M}$, i.e., $\rho_{in} = \frac{1}{D} \sum_{\vec{n} \in \mathcal{H}_{K,M}} \ket{\vec{n}}\bra{\vec{n}}$. The total success probability is the average of the transmission probabilities of the basis states:
\begin{equation}
    P_{\text{succ}} = \text{Tr}\left[ \mathcal{P}_K \Phi^{\otimes M}(\rho_{in}) \right] = \frac{1}{D_{K,M}} \sum_{\vec{n} \in \mathcal{H}_{K,M}} \prod_{i=1}^M \langle \lambda^{n_i} \rangle.
\end{equation}
This formula uses the specific moments of the fading distribution and provides the tightest bound for this protocol.

\subsection{Universal Lower Bound}
\noindent To obtain a universal bound that depends only on the mean transmissivity $ \langle \lambda \rangle $, we apply Jensen's inequality for the convex function $f(x) = x^n$ (for $x \ge 0, n \ge 1$).
We have $\langle \lambda^n \rangle \ge \langle \lambda \rangle^n $.
Substituting this into the expression for $P_{\text{succ}}$:
\begin{align}
    P_{\text{succ}} &\ge \frac{1}{D_{K,M}} \sum_{\vec{n} \in \mathcal{H}_{K,M}} \prod_{i=1}^M \langle \lambda \rangle^{n_i} \\
    &= \frac{1}{D_{K,M}} \sum_{\vec{n} \in \mathcal{H}_{K,M}} \langle \lambda \rangle^{\sum n_i} \\
    &= \frac{1}{D_{K,M}} \sum_{\vec{n} \in \mathcal{H}_{K,M}} \langle \lambda \rangle^K \\
    &= \frac{1}{D_{K,M}} \left( D_{K,M} \, \langle \lambda \rangle^K \right) = \langle \lambda \rangle^K.
\end{align}
Thus, regardless of the specific fading distribution or the input Fock state configuration, the success probability is at least $\langle \lambda \rangle^K$. The rate is therefore bounded by:
\begin{equation}
    Q_2(\Phi) \ge \frac{\langle \lambda \rangle^K}{M} \log_2 \binom{M+K-1}{K}.
\end{equation}
Optimizing over integers $K \ge 1$ and $M \ge 1$ yields the final result.

\section{Analytical Solution for the Erasure Channel}\label{app:erasure_derivation}

\noindent We provide the detailed derivation of the quantum mutual information $I(\Phi_p,\rho) = S(\rho) + S(\Phi_p(\rho)) - S(\tilde{\Phi}_p(\rho))$ for the binary fading channel $\Phi_p = p \mathcal{I} + (1-p) \mathcal{E}_0$. Our goal is to determine the input state $\rho$ that maximizes this quantity under a mean energy constraint $\text{Tr}[\hat{n}\rho] \le E$.
Since the channel $\Phi_p$ is phase-insensitive, the mutual information is concave and invariant under phase rotations. As discussed in Appendix~\ref{appendix: diagonal optimizer}, this allows us to restrict the optimization to the set of Fock-diagonal states without loss of generality.
We parametrize a generic Fock-diagonal state $\rho$ as a convex combination of the vacuum state and a normalized state $\sigma$ supported on the orthogonal subspace ($\text{span}\{\ket{n}\}_{n \ge 1}$):
\begin{equation}
    \rho = p_0 \ket{0}\bra{0} + (1-p_0) \sigma.
\end{equation}
This decomposition is merely a rewriting of the state and does not introduce any restriction. The parameter $p_0 \in [0,1]$ represents the vacuum population, while $\sigma$ captures the distribution of the excited states. The energy constraint on $\rho$ translates to a constraint on the mean energy of $\sigma$:
\begin{equation}
    \text{Tr}[\hat{n}\rho] = (1-p_0) \text{Tr}[\hat{n}\sigma] \le E \implies \text{Tr}[\hat{n}\sigma] \le N := \frac{E}{1-p_0}.
\end{equation}

\subsection{Input Entropy}
\noindent For a Fock-diagonal state $\rho = \sum_n \rho_{nn} \ket{n}\bra{n}$, the von Neumann entropy is the Shannon entropy of its diagonal elements. Substituting the decomposition for $\rho$, we obtain:
\begin{equation}
    S(\rho) = H_2(p_0) + (1-p_0)S(\sigma),
\end{equation}
where $H_2(x) = -x\log_2 x - (1-x)\log_2(1-x)$ is the binary entropy function.

\subsection{Output Entropy}
\noindent The channel maps the input $\rho$ to the output state $\Phi_p(\rho) = p\rho + (1-p)\ket{0}\bra{0}$. The vacuum component is shifted due to the erasure term, while the excited components are simply scaled by $p$. The spectrum of the output state is:
\begin{equation}
    \text{spec}(\Phi_p(\rho)) = \{ p\rho_{00} + (1-p), p\rho_{11}, p\rho_{22}, \dots \}.
\end{equation}
Using the decomposition $\rho_{00} = p_0$ and the expression for $S(\rho)$, the output entropy is:
\begin{align}
    S(\Phi_p(\rho)) &= p S(\rho) - p \log_2 p + p\rho_{00}\log_2(p\rho_{00}) - (p\rho_{00} + 1-p)\log_2(p\rho_{00} + 1-p) \nonumber \\
    &= p H_2(p_0) + p(1-p_0)S(\sigma) - p \log_2 p + p p_0\log_2(p p_0) - (p p_0 + 1-p)\log_2(p p_0 + 1-p).
\end{align}

\subsection{Exchange Entropy}
\noindent To compute the exchange entropy $S(\tilde{\Phi}_p(\rho))$, we construct the Stinespring dilation and, from that, we can compute the Kraus operators of $\Phi$. The channel admits a Kraus representation with operators $\{N_k\}_{k=0}^\infty$ defined as:
\begin{equation}
    N_0 = \sqrt{p}\mathbbm{1}, \quad N_k = \sqrt{1-p}\ket{0}\bra{k-1} \text{ for } k \ge 1.
\end{equation}
Using the general formula for the complementary channel $\tilde{\Phi}(\rho) = \sum_{i,j} \ket{i}_E\bra{j} \text{Tr}[\rho N_j^\dagger N_i]$~\cite{Levick2018Factorizations}, we find that for a Fock-diagonal input, the output state $\tilde{\Phi}_p(\rho)$ is block-diagonal. Specifically, in the environment Fock basis $\{\ket{k}_E\}$, it takes the form:
\begin{equation}
	\tilde{\Phi}_p(\rho)=
	\begin{pmatrix}
		p & \gamma & 0 & \dots \\
		\gamma & (1-p)\rho_{00} & 0 & \dots \\
		0 & 0 & (1-p)\rho_{11} & \dots \\
		\vdots & \vdots & \vdots & \ddots
	\end{pmatrix},
\end{equation}
where $\gamma = \rho_{00}\sqrt{p(1-p)}$. The only non-trivial block is the $2 \times 2$ matrix in the subspace spanned by the environment vacuum $\ket{0}_E$ and first excitation $\ket{1}_E$. Its eigenvalues $\lambda_\pm$ are given by:
\begin{equation}
    \lambda_\pm (\rho_{00})= \frac{1}{2} \left[ p + (1-p)\rho_{00} \pm \sqrt{p^2 - 2p(1-p)\rho_{00} + (1+2p-3p^2)\rho_{00}^2} \right].
\end{equation}
The remaining eigenvalues are simply $(1-p)\rho_{kk}$ for $k \ge 1$. Summing these entropy contributions and substituting the parametrized form of $\rho$:
\begin{align}
    S(\tilde{\Phi}_p(\rho)) &= (1-p)S(\rho) - (1-p)\log_2(1-p) + (1-p)\rho_{00}\log_2((1-p)\rho_{00}) - \sum_{\alpha \in \{+,-\}} \lambda_\alpha \log_2 \lambda_\alpha \nonumber \\
    &= (1-p)H_2(p_0) + (1-p)(1-p_0)S(\sigma) - (1-p)\log_2(1-p) \nonumber \\
    &\quad + (1-p)p_0\log_2((1-p)p_0) - \sum_{\alpha \in \{+,-\}} \lambda_\alpha(p_0) \log_2 \lambda_\alpha(p_0).
\end{align}

\subsection{Closed-Form Mutual Information and Optimization}
\noindent Combining the entropy terms derived above, the mutual information $I(\Phi_p, \rho)$ simplifies to:
\begin{align}\label{eq:mutual_info_general}
    I(\Phi_p, p_0, \sigma) &= 2p(1-p_0)S(\sigma) + 2p H_2(p_0) - p\log_2 p + (1-p)\log_2(1-p) \nonumber \\
    &\quad + p p_0\log_2(p p_0) - (p p_0 + 1-p)\log_2(p p_0 + 1-p) \nonumber \\
    &\quad - (1-p)p_0\log_2((1-p)p_0) + \sum_{\alpha \in \{+,-\}} \lambda_\alpha(p_0) \log_2 \lambda_\alpha(p_0).
\end{align}
Notice that for a fixed $p_0$, the mutual information depends on the state $\sigma$ only through the term $2p(1-p_0)S(\sigma)$, which is strictly increasing with $S(\sigma)$. To maximize the capacity, we must therefore maximize the entropy $S(\sigma)$ subject to the constraints imposed on $\sigma$.
Since $\sigma$ must be supported on the subspace $n \ge 1$ and has a fixed mean energy $\text{Tr}[\hat{n}\sigma] = N = E/(1-p_0)$, the optimal state $\sigma_{\text{opt}}$ is uniquely determined by the maximum entropy principle. This optimization problem is equivalent to finding the maximum entropy state of a harmonic oscillator with mean photon number $N' = N - 1$. Consequently, the optimal state $\sigma_{\text{opt}}$ is the thermal state of the harmonic oscillator shifted to the subspace $n \ge 1$ (a shifted geometric distribution):
\begin{equation}
    \sigma_{\text{opt}} = \frac{1}{N} \sum_{n=1}^\infty \left(\frac{N-1}{N}\right)^{n-1} \ket{n}\bra{n}.
\end{equation}
Its von Neumann entropy is exactly given by the bosonic entropy function evaluated at the shifted energy $N-1$:
\begin{equation}
    S(\sigma_{\text{opt}}) = g(N-1) = N\log_2 N - (N-1)\log_2(N-1).
\end{equation}
Substituting this derived optimal entropy into the expression for the mutual information, with $N=E/(1-p_0)$, the problem reduces to a scalar maximization over the parameter $p_0$:
\begin{equation}
    C_\text{E}(\Phi_p, E) = \max_{p_0 \in [0, 1]} I(\Phi_p, p_0, \sigma_{\text{opt}}),
\end{equation}
where all terms in $I$ are now explicit functions of $p_0$, $p$, and $E$.

\section{Computation of Entropic Functionals for Discretized Fading Channels}\label{app:entropic_functionals}

\noindent In this appendix, we detail the procedure to compute the quantum mutual information $I(\Phi, \rho) = S(\rho) + S(\Phi(\rho)) - S(\tilde{\Phi}(\rho))$ for the discretized fading channel. We consider the channel $\Phi_{\{p_n,\lambda_n\}}$ defined as a convex combination of pure-loss channels:
\begin{equation}\label{eq:app_channel_def}
    \Phi_{\{p_n,\lambda_n\}} = \sum_{n=1}^{d} p_n \mathcal{E}_{\lambda_n},\qquad
\sum_{n=0}^{d-1} p_n=1,
\end{equation}
where $\mathcal{E}_{\lambda}$ denotes the pure-loss channel with transmissivity $\lambda\in[0,1]$. The input states we treat are Fock-diagonal and constructed by the iterative prescription used in the numerical optimization described in the main text; we denote the $k$-th iterate by $\rho_E^{(k)}$.  All entropy expressions are the von Neumann entropy $S(\rho)=-\text{Tr}[\rho\log_2\rho]$.
Below we first recall the thermal (geometric) distribution and then compute the input entropy. Next we compute the action of each component channel on a Fock-diagonal state (giving closed forms for the output populations) and derive the output entropy of the mixture~\eqref{eq:app_channel_def}. Finally we derive the block structure of the complementary channel on Fock-diagonal inputs and give the corresponding expression for the exchange entropy. We close with a short, practical bound on the truncation error when cutting the Fock space at finite dimension.

\subsection{Notation and preliminary: thermal state and binomial kernel}
\noindent We begin by recalling the definition of a thermal state with mean photon number $E\ge0$:
\begin{equation}
    \rho_{\text{th},E} = \frac{1}{E+1} \sum_{n=0}^{\infty} \left( \frac{E}{E+1} \right)^n \ket{n} \bra{n}.
\end{equation}
and its von Neumann entropy is the well-known function
\[
g(E):=S(\rho_{\mathrm{th},E})=(E+1)\log_2(E+1)-E\log_2 E,
\]
with the convention $g(0)=0$.  For reference we also define the binomial redistribution kernel which appears when an $n$-photon Fock component passes through an attenuator of transmissivity $\lambda$:
\begin{equation}\label{eq:binomial}
B_m(n,\lambda):=\binom{n}{m}\lambda^m(1-\lambda)^{\,n-m},\qquad 0\le m\le n,
\end{equation}
and the action of the attenuator on a Fock-diagonal state $\rho=\sum_{n\ge0}\rho_{nn}|n\rangle\langle n|$ results in a Fock-diagonal state whose populations are
\begin{equation}\label{eq:attenuator-populations}
\big[\mathcal{E}_{\lambda}(\rho)\big]_{mm}
=\sum_{n=m}^\infty B_m(n,\lambda)\,\rho_{nn}.
\end{equation}
All manipulations below exploit this Fock-diagonality.

\subsection{Input State and Input Entropy}

\noindent At the $k$-th step of the optimization algorithm, the ansatz state $\rho_E^{(k)}$ is constructed as a mixture of the first $k$ Fock states and a ``shifted" thermal tail. Explicitly:
\begin{equation}\label{eq:app_input_state}
    \rho_E^{(k)} = \sum_{n=0}^{k-1} w_n \ket{n}\bra{n} + w_{\ge k} \, \sigma_N^{(k)},
\end{equation}
where we defined the weights $w_n = p_n \prod_{i=0}^{n-1} (1-p_i)$ for the discrete part, and the total weight of the tail $w_{\ge k} = \prod_{n=0}^{k-1} (1-p_n)$. The matrix elements of the tail $\sigma_N^{(k)}$ are defined by:
\begin{equation}
    [\sigma_N^{(k)}]_{mm} = 
    \begin{cases} 
        0 & \text{if } m < k, \\
        \frac{1}{N-k+1} \left(\frac{N-k}{N-k+1} \right)^{m-k} & \text{if } m \ge k,
    \end{cases}
\end{equation}
which is a renormalized geometric (thermal) tail with mean photon number adjusted to $N$ on the conditional subspace $m\ge k$.
Here, the parameter $N$ is determined by the energy constraint $\text{Tr}[\hat{n}\rho_E^{(k)}] = E$, which yields:
\begin{equation}
    N = \frac{E - \sum_{n=0}^{k-1} n w_n }{w_{\ge k}}.
\end{equation}
Since the state $\rho_E^{(k)}$ is diagonal and the supports of the discrete part and the tail are orthogonal, the entropy $S(\rho_E^{(k)})$ can be computed as the sum of the Shannon entropy of the mixing probabilities and the weighted entropy of the components:
\begin{equation}\label{eq:app_input_entropy}
    S(\rho_E^{(k)}) = H(\{w_0, \dots, w_{k-1}, w_{\ge k}\}) + w_{\ge k} S(\sigma_N^{(k)}).
\end{equation}
Explicitly, substituting the weights defined above:
\begin{equation}
\begin{split}
    S(\rho_E^{(k)}) &= - \sum_{n=0}^{k-1} w_n \log_2 (w_n) - w_{\ge k} \log_2 (w_{\ge k}) + w_{\ge k} S(\sigma_N^{(k)}),
\end{split}
\end{equation}
where the entropy of the shifted thermal tail is related to the standard bosonic entropy function by:
\begin{equation}
    S(\sigma_N^{(k)}) = g(N-k).
\end{equation}

\subsection{Output State and Output Entropy}

\noindent The action of the fading channel on the input state yields the output state:
\begin{equation}
    \Phi_{\{p_n,\lambda_n\}}(\rho_E^{(k)}) = \sum_{j=1}^{d} p_j \mathcal{E}_{\lambda_j}(\rho_E^{(k)}).
\end{equation}
Since the input is Fock-diagonal and the channel is phase-insensitive, the output is also Fock-diagonal. We compute the diagonal elements $[\mathcal{E}_{\lambda}(\rho_E^{(k)})]_{mm}$ for a generic transmissivity $\lambda$. Due to the structure of the input state, we distinguish two cases for the index $m$.

\paragraph*{(i) Populations for levels $m<k$.} For $m < k$, the population receives contributions from both the discrete part (decaying from higher Fock states $n \ge m$) and the tail:
\begin{equation}\label{eq:app_out_diag_small_m}
    \left[ \mathcal{E}_{\lambda} (\rho_E^{(k)})\right]_{mm} = \sum_{n=m}^{k-1} B_m(n,\lambda) w_n + w_{\ge k} \left( \frac{N-k+1}{N-k}\right)^k \left( [\rho_{\text{th},\lambda(N-k)}]_{mm} - \sum_{n=m}^{k-1} B_m(n,\lambda) [\rho_{\text{th},N-k}]_{nn} \right),
\end{equation}
where $B_m(n,\lambda) = \binom{n}{m}\lambda^m(1-\lambda)^{n-m}$ are the binomial transition probabilities of the pure loss channel.

\paragraph*{(ii) Populations for levels $m\ge k$.} For $m \ge k$, the contribution from the discrete part of the input state (which contains Fock numbers $n < k$) is identically zero, because a pure-loss channel cannot increase the photon number. Therefore, the output diagonal elements are determined solely by the evolution of the shifted thermal tail.
Crucially, since the evolution of a thermal state through a pure-loss channel yields another thermal state with scaled mean photon number, and the truncation of the first $k$ terms does not affect the terms $m \ge k$ of the output, the relation holds exactly:
\begin{equation}\label{eq:app_out_diag_large_m}
    \left[ \mathcal{E}_{\lambda} (\rho_E^{(k)})\right]_{mm} = w_{\ge k} \left( \frac{N-k+1}{N-k}\right)^k [\rho_{\text{th},\lambda (N-k)}]_{mm}.
\end{equation}
The total output entropy is then the Shannon entropy of the diagonal distribution averaged over the fading realization:
\begin{equation}\label{app:output entropy generalized}
    S(\Phi(\rho_E^{(k)})) = - \sum_{m=0}^{\infty} \Lambda_m \log_2 (\Lambda_m),
\end{equation}
where $\Lambda_m = \sum_{j=1}^{d} p_j [\mathcal{E}_{\lambda_j} (\rho_E^{(k)})]_{mm}$. All terms in~\eqref{app:output entropy generalized} are explicitly computable once the finite set $\{p_n,\lambda_n\}_{n=0}^{d-1}$ and the input parameters $(p_n)_{n<k},N$ are fixed; in practice the infinite sums are truncated using the error bound discussed below.

\subsection{Exchange Entropy}

\noindent The exchange entropy is defined as $S(\tilde{\Phi}(\rho_E^{(k)}))$, where $\tilde{\Phi}$ is the complementary channel of $\Phi$. For a fading channel defined by the ensemble $\{p_n, \lambda_n\}$, the output of the complementary channel can be written in a block-diagonal form with respect to the environment basis $\{\ket{i}_E\}$:
\begin{equation}
    \tilde{\Phi}_{\{p_n,\lambda_n\}}(\rho) = \sum_{i=0}^{\infty} \ket{i}_E\bra{i} \otimes \tilde{\Phi}_i(\rho),
\end{equation}
where each $\tilde{\Phi}_i(\rho)$ is a $d \times d$ matrix acting on the auxiliary system $A$ (which tracks the fading realization). The matrix elements of the block $\tilde{\Phi}_i$ are given by:
\begin{equation}
    [\tilde{\Phi}_i(\rho)]_{nn'} = \sqrt{p_n p_{n'}} \sum_{m=i}^{\infty} \sqrt{B_i(m, 1-\lambda_n) B_i(m, 1-\lambda_{n'})} \, \rho_{mm}.
\end{equation}
We can simplify this expression by utilizing the explicit form of the binomial coefficients. Let us define the geometric mean of the environment transmissivities $\bar{\lambda}_{nn'} = \sqrt{\lambda_n \lambda_{n'}}$ and the term $\mu_{nn'} = \sqrt{(1-\lambda_n)(1-\lambda_{n'})}$. We observe that:
\begin{equation}
\begin{split}
    \sum_{m=i}^{\infty} \sqrt{B_i(m, 1-\lambda_n) B_i(m, 1-\lambda_{n'})} \, \rho_{mm} 
    &= \sum_{m=i}^{\infty} \binom{m}{i} \mu_{nn'}^i (\bar{\lambda}_{nn'})^{m-i} \rho_{mm} \\
    &= \left( \frac{\mu_{nn'}}{1-\bar{\lambda}_{nn'}} \right)^i \sum_{m=i}^{\infty} \binom{m}{i} (1-\bar{\lambda}_{nn'})^i (\bar{\lambda}_{nn'})^{m-i} \rho_{mm} \\
    &= \left( \frac{\mu_{nn'}}{1-\bar{\lambda}_{nn'}} \right)^i \left[ \mathcal{E}_{1-\bar{\lambda}_{nn'}} (\rho) \right]_{ii}.
\end{split}
\end{equation}
In the last step, we recognized the sum as the $i$-th diagonal element of the state $\rho$ evolved through a fictitious pure-loss channel with transmissivity $\bar{\lambda}_{nn'}$. 
Thus, the $i$-th block of the complementary state is:
\begin{equation}\label{eq:app_exchange_block}
\tilde{\Phi}_i(\rho_E^{(k)}) = \sum_{n,n'=0}^{d-1} \sqrt{p_n p_{n'}} \left[ \frac{\sqrt{(1-\lambda_{n})(1-\lambda_{n'})} }{1-\sqrt{\lambda_n \lambda_{n'}}}\right]^i \left[ \mathcal{E}_{1-\sqrt{\lambda_n \lambda_{n'}}} (\rho_E^{(k)})\right]_{ii} \ket{n}_A \bra{n'}.
\end{equation}
The expression of $\mathcal{E}_{1-\sqrt{\lambda_n \lambda_{n'}}} (\rho_E^{(k)})$ is evaluated from~\eqref{eq:app_out_diag_small_m} and~\eqref{eq:app_out_diag_large_m}. Finally, the entropy can be computed by diagonalizing each $d \times d$ block $\tilde{\Phi}_i(\rho_E^{(k)})$ numerically and summing the entropies of their eigenvalues.
\subsection{Truncation Error Bound}\label{app:truncation_error_bound}

\noindent Since the Fock space is infinite-dimensional, any numerical evaluation of entropic functionals requires truncating the Hilbert space at a finite photon number $\bar{n}$. We now derive a rigorous upper bound on the error introduced by this truncation, ensuring the reliability of our numerical results.
Let $\rho = \sum_{n=0}^\infty \gamma_n \ket{\psi_n}\bra{\psi_n}$ be the exact (infinite-dimensional) diagonalized state we wish to analyze, where $\gamma_n$ are the eigenvalues. We introduce a truncation cutoff $\bar{n}$ and define the truncation error parameter $\varepsilon$ as the total probability mass in the discarded tail:
\begin{equation}
    \varepsilon := 1 - \sum_{n=0}^{\bar{n}-1} \gamma_n = \sum_{n=\bar{n}}^{\infty} \gamma_n.
\end{equation}
We can decompose the full state $\rho$ into a sum of a non-normalized truncated part $\tilde{\rho}_{\bar{n}}$ (supported on the subspace of the first $\bar{n}$ eigenvectors) and a normalized tail state $\rho_{\text{tail}}$ weighted by $\varepsilon$:
\begin{equation}
    \rho = \tilde{\rho}_{\bar{n}} + \varepsilon \rho_{\text{tail}},
\end{equation}
where the normalized tail is defined as:
\begin{equation}
    \rho_{\text{tail}} = \frac{1}{\varepsilon} \sum_{n=\bar{n}}^{\infty} \gamma_n \ket{\psi_n}\bra{\psi_n}.
\end{equation}
Since $\tilde{\rho}_{\bar{n}}$ and $\rho_{\text{tail}}$ have orthogonal supports, the von Neumann entropy $S(\rho) = -\text{Tr}[\rho \log_2 \rho]$ decomposes as:
\begin{equation}
    S(\rho) = S(\tilde{\rho}_{\bar{n}}) - \varepsilon\log_2 \varepsilon + \varepsilon S(\rho_{\text{tail}}),
\end{equation}
where $S(\tilde{\rho}_{\bar{n}}) = - \sum_{n=0}^{\bar{n}-1} \gamma_n \log_2 \gamma_n$ is the entropy contribution computed numerically from the truncated spectrum.
We are interested in bounding the error between the true entropy $S(\rho)$ and the computed quantity. Rearranging the terms:
\begin{equation}
    \Delta S := S(\rho) - (S(\tilde{\rho}_{\bar{n}}) - \varepsilon \log_2 \varepsilon) = \varepsilon S(\rho_{\text{tail}}).
\end{equation}
To upper bound this error, we observe that for a given mean energy, the entropy is maximized by a thermal state. Thus, $S(\rho_{\text{tail}}) \le g(E_{\text{tail}})$, where $g(N)$ is the bosonic entropy function and $E_{\text{tail}}$ is the mean energy of the tail state.
The energy of the tail can be computed from the total energy $E(\rho)$ and the energy of the truncated part $E(\tilde{\rho}_{\bar{n}})$:
\begin{equation}
    E_{\text{tail}} = \frac{E(\rho) - E(\tilde{\rho}_{\bar{n}})}{\varepsilon} = \frac{\Delta E_{\bar{n}}}{\varepsilon}.
\end{equation}
Therefore, we obtain the bound:
\begin{equation}\label{eq:truncation error bound}
    \Delta S \le \varepsilon g\left(\frac{\Delta E_{\bar{n}}}{\varepsilon}\right).
\end{equation}
Since the tail state is supported on Fock states with $n \ge \bar{n}$, its mean energy satisfies $E_{\text{tail}} \ge \bar{n}$. In the regime of high truncation cutoff ($\bar{n} \gg 1$), we have $E_{\text{tail}} \gg 1$, and we can approximate $g(N) \approx \log_2 N + 1$. The error bound then behaves asymptotically as:
\begin{equation}
    \Delta S \approx \varepsilon .
\end{equation}
This confirms that as long as the probability mass of the tail $\varepsilon$ decays faster than the inverse of the tail's energy (which is guaranteed for states with finite moments, like Gaussian states and our optimized Fock-diagonal states), the error vanishes.
Finally, for the output and exchange entropies, the relevant energies are simply scaled by the channel transmissivity. For a fading channel with mean transmissivity $\braket{\lambda}$:
\begin{align}
    E(\Phi(\rho)) &= \braket{\lambda}E(\rho), \\
    E(\tilde{\Phi}(\rho)) &= (1-\braket{\lambda})E(\rho).
\end{align}
These relations allow us to apply the same bound to all entropic terms in the mutual information.

\subsection{Summary: mutual information and numerical recipe}

Putting the pieces together, for the chosen input $\rho=\rho_E^{(k)}$ and channel $\Phi = \Phi_{\{p_n,\lambda_n\}} = \sum_{n=0}^{d-1} p_n \mathcal{E}_{\lambda_n}$ we compute:
\begin{enumerate}
  \item $S(\rho_E^{(k)})$ from~\eqref{eq:app_input_entropy}.
  \item For each channel component $\lambda_n$ compute the output populations~\eqref{eq:app_out_diag_small_m},\eqref{eq:app_out_diag_large_m}, then form the mixture populations $\Lambda_m = \sum_{j=1}^{d} p_j [\mathcal{E}_{\lambda_j} (\rho_E^{(k)})]_{mm}$ by averaging on the distribution and evaluate $S(\Phi(\rho_E^{(k)}))= - \sum_m \Lambda_m \log_2 \Lambda_m$ by~\eqref{app:output entropy generalized}, summing up to the truncation cutoff.
  \item For the complementary channel compute each block $\tilde\Phi_i(\rho_E^{(k)})$ by~\eqref{eq:app_exchange_block}, diagonalize the $d\times d$ matrix to obtain $S(\tilde\Phi_i(\rho_E^{(k)}))$ and sum over $i$ up to the truncation cutoff to obtain $S(\tilde\Phi(\rho_E^{(k)}))$ up to a desired precision.
  \item The mutual information follows as
\begin{equation}
I(\Phi,\rho_E^{(k)})\simeq S(\rho_E^{(k)})- \sum_{m=0}^{i_\text{max}} 
\Lambda_m \log_2 \Lambda_m-\sum_{i=0}^{\tilde{i}_{\max}} S(\tilde\Phi_i(\rho_E^{(k)}))
\end{equation}
  where $i_{\max},\tilde{i}_{\max}$ are chosen so that the truncation error bound~\eqref{eq:truncation error bound} is negligible for the desired precision.
\end{enumerate}



\section{Optimality of Fock-diagonal inputs}\label{appendix: diagonal optimizer}

\begin{lemma}
	Let $\Phi$ be a phase-insensitive quantum channel. That is, for any input state $\rho$ and any phase angle $\theta \in \mathbb{R}$:
	\begin{equation}\label{eq:phase_covariance}
		\Phi(e^{-i\theta\hat{N}} \rho e^{i\theta\hat{N}}) = e^{-i\theta\hat{N}} \Phi(\rho) e^{i\theta\hat{N}},
	\end{equation}
	where $\hat{N} = a^\dagger a$ is the photon number operator. Then, the maximum of the entanglement-assisted classical capacity, characterized by the mutual information $I(\Phi, \rho)$, is attained by a state $\rho$ that is diagonal in the Fock basis. Moreover, the maximizer is 
unique if $\Phi$ does not destroy all off-diagonal coherences, i.e., 
for any non-Fock-diagonal $\rho$ there exists $m \neq n$ such that 
$\langle m | \Phi(\rho) | n \rangle \neq 0$.
\end{lemma}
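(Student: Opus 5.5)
The plan is the standard symmetrization (phase-twirling) argument combined with concavity of the mutual information in the input state. For a given $\rho$ and $\theta\in[0,2\pi)$, set $\rho_\theta = e^{-i\theta\hat N}\rho\, e^{i\theta\hat N}$. Then consider the twirled state $\bar\rho = \frac{1}{2\pi}\int_0^{2\pi} d\theta\, \rho_\theta$. This state is Fock-diagonal, since its off-diagonal entries pick up $\frac{1}{2\pi}\int e^{-i\theta(m-n)}d\theta = \delta_{mn}$. It also has the same energy $\mathrm{Tr}[\hat N\bar\rho] = \mathrm{Tr}[\hat N\rho]$, because $\hat N$ commutes with the rotation. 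The energy constraint is therefore preserved.

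First I show $I(\Phi,\rho_\theta) = I(\Phi,\rho)$. Unitary invariance of the entropy gives $S(\rho_\theta) = S(\rho)$. By Eq.~\eqref{eq:phase_covariance}, $S(\Phi(\rho_\theta)) = S(\Phi(\rho))$. For the exchange entropy I use that $S(\tilde\Phi(\rho))$ equals the entropy of the reference-plus-output system $S((\mathrm{id}_R\otimes\Phi)(\psi_\rho))$, where $\psi_\rho$ is a purification of $\rho$. A purification of $\rho_\theta$ is obtained by applying $e^{-i\theta\hat N}$ on the system, which commutes through $\Phi$ by covariance. This yields a locally unitarily rotated output and hence the same entropy. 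Alternatively, the covariance lifts to the complementary channel via the Stinespring dilation, up to an environment unitary.

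Second, I invoke concavity of $\rho\mapsto I(\Phi,\rho)$. This is a known property: it follows from $I(\Phi,\rho) = S(\rho) + S(\Phi(\rho)) - S(\tilde\Phi(\rho))$, where $S(\Phi(\rho)) - S(\tilde\Phi(\rho))$ is concave (coherent information of the degraded/complementary pair, as in the Adami–Cerf and Holevo results) and $S(\rho)$ is concave. An integral version of Jensen then gives $I(\Phi,\bar\rho)\ge \frac{1}{2\pi}\int I(\Phi,\rho_\theta)\,d\theta = I(\Phi,\rho)$. So for any feasible $\rho$ there is a Fock-diagonal feasible state that does at least as well, and the supremum is attained on diagonal states. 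Existence of a maximizer under the energy constraint I would take from standard compactness and upper semicontinuity arguments for energy-constrained mutual information (Holevo–Shirokov), rather than reprove it.

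The main obstacle is uniqueness, that is, showing any non-diagonal $\rho$ is strictly beaten by $\bar\rho$. I would use strict concavity of the output entropy term. The hypothesis gives $\langle m|\Phi(\rho)|n\rangle\neq0$ for some $m\neq n$, so $\Phi(\rho_\theta)$ genuinely depends on $\theta$ and the family $\{\Phi(\rho_\theta)\}$ is not a single state. Strict concavity of von Neumann entropy then gives $S(\Phi(\bar\rho)) > \frac{1}{2\pi}\int S(\Phi(\rho_\theta))\,d\theta$, whereas the remaining combination $S(\rho) - S(\tilde\Phi(\rho))$ is only concave. This split is the delicate point. I would write $I = [S(\rho) - S(\tilde\Phi(\rho))] + S(\Phi(\rho))$ and check that the bracket, which is the coherent information of the complementary channel, is concave; I would need to verify that concavity carefully before relying on it. Combining the strict gap from the output entropy with the non-strict inequality for the rest yields $I(\Phi,\bar\rho) > I(\Phi,\rho)$. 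Uniqueness among Fock-diagonal states would finally follow from strict concavity of $I$ restricted to the diagonal family, which I would argue in the same way.
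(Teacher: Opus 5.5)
Your route is the paper's route: phase twirling, invariance of $I(\Phi,\cdot)$ under $e^{-i\theta\hat N}$, concavity, and strictness from the output-entropy term. The paper also uses your split $I=S(\Phi(\rho))+[S(\rho)-S(\tilde\Phi(\rho))]$. Two points need correcting.

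\textbf{The concavity justification.} Your first justification of concavity is wrong. The coherent information $S(\Phi(\rho))-S(\tilde\Phi(\rho))$ is \emph{not} concave in general; this holds for degradable channels, for example, and the paper stresses it in the activation section. Also, the bracket $S(\rho)-S(\tilde\Phi(\rho))$ is not the coherent information of $\tilde\Phi$, which would be $S(\tilde\Phi(\rho))-S(\Phi(\rho))$. The correct observation, which is the paper's, uses a Stinespring isometry $V$ and $\omega_{BE}=V\rho V^\dagger$. Then $S(\rho)=S(BE)_\omega$ and $S(\tilde\Phi(\rho))=S(E)_\omega$, so the bracket equals the conditional entropy $S(B|E)_\omega$. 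Conditional entropy is concave in $\omega_{BE}$, and $\rho\mapsto\omega_{BE}$ is linear, so the bracket is concave. This single fact gives concavity of $I$ and settles the ``delicate point'' you flagged in the strictness step.

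\textbf{The final uniqueness sentence.} Your last sentence does not go through. Strict concavity of $S(\Phi(\cdot))$ along a segment of Fock-diagonal states needs $\Phi$ to send distinct diagonal states to distinct outputs. The hypothesis only constrains off-diagonal entries, so it does not give this. For example, let $\Phi=\tfrac12\mathrm{id}+\tfrac12\mathcal{P}$, where $\mathcal{P}$ has Kraus operators $|2j+1\rangle\langle 2j|$ and $|2j\rangle\langle 2j+1|$ for $j\ge 0$.
\begin{itemize}
\item $\Phi$ is phase-covariant.
\item It keeps every coherence at half strength, so the hypothesis holds.
\item $\Phi(t|0\rangle\langle0|+(1-t)|1\rangle\langle1|)$ does not depend on $t$, so the output-entropy term gives no strictness along that segment.
\end{itemize}

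The paper's proof establishes only what your twirling argument already gives: $I(\Phi,\bar\rho)>I(\Phi,\rho)$ for every non-Fock-diagonal $\rho$, so any maximizer must be Fock-diagonal. This is also the only form in which the paper later uses the lemma, to exclude non-thermal Gaussian states. Either stop there, or add an explicit hypothesis such as injectivity of $\Phi$ on Fock-diagonal states. Under injectivity, $S(\Phi(\cdot))$ is strictly concave on the convex energy-constrained set and the maximizer is unique.
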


\begin{proof}
    The proof relies on the concavity of the quantum mutual information with respect to the input state~\cite{WildeBook}. For any ensemble $\{p_x, \rho_x\}$, the following inequality holds:
	\begin{equation}\label{eq:concavity}
		I\left(\Phi, \sum_x p_x \rho_x\right) \ge \sum_x p_x I(\Phi, \rho_x).
	\end{equation}
	Let us consider the phase twirling operation, which averages a state over all possible phase rotations:
	\begin{equation}
	    \bar{\rho} = \int_0^{2\pi} \frac{d\theta}{2\pi} \, e^{-i\theta\hat{N}} \rho \, e^{i\theta\hat{N}}.
	\end{equation}
    It is easy to verify that for any input $\rho$, the resulting state $\bar{\rho}$ is diagonal in the Fock basis (i.e., all off-diagonal elements in the number basis vanish due to the integration of $e^{i\theta(n-m)}$).
    We now show that the mutual information is invariant under phase rotations for a phase-insensitive channel. The quantum mutual information is defined as:
	\begin{equation}
		I(\Phi,\rho) = S(\rho) + S(\Phi(\rho)) - S_e(\rho),
	\end{equation}
	where $S_e(\rho) = S((\mathcal{I}_R \otimes \Phi_A)(\ket{\psi_\rho}\bra{\psi_\rho}_{RA}))$ is the entropy exchange, calculated on the output of the channel $\Phi$ acting on the system $A$ of a purification $\ket{\psi_\rho}_{RA}$.
	
	Let $U_\theta = e^{-i\theta\hat{N}}$. We analyze the three terms of the mutual information for the rotated state $\rho_\theta = U_\theta \rho U_\theta^\dagger$:
    \begin{enumerate}
        \item \textbf{Input Entropy:} The von Neumann entropy is invariant under unitary transformations:
        \begin{equation}
            S(\rho_\theta) = S(U_\theta \rho U_\theta^\dagger) = S(\rho).
        \end{equation}
        
        \item \textbf{Output Entropy:} Using the phase-covariance property of the channel (Eq.~\ref{eq:phase_covariance}) and the unitary invariance of entropy:
        \begin{equation}
            S(\Phi(\rho_\theta)) = S(U_\theta \Phi(\rho) U_\theta^\dagger) = S(\Phi(\rho)).
        \end{equation}
        
        \item \textbf{Entropy Exchange:} Let $\ket{\psi_\rho}$ be a purification of $\rho$. Then $(\mathbbm{1}_R \otimes U_\theta)\ket{\psi_\rho}$ is a purification of $\rho_\theta$. The joint output state becomes:
        \begin{align}
            \rho_{RB}' &= (\mathcal{I} \otimes \Phi) \left[ (\mathbbm{1} \otimes U_\theta)\ket{\psi_\rho}\bra{\psi_\rho}(\mathbbm{1} \otimes U_\theta^\dagger) \right] \nonumber \\
            &= (\mathbbm{1} \otimes U_\theta) \left[ (\mathcal{I} \otimes \Phi)(\ket{\psi_\rho}\bra{\psi_\rho}) \right] (\mathbbm{1} \otimes U_\theta^\dagger).
        \end{align}
        Since the joint output state of the rotated input is unitarily equivalent to the original joint output, their entropies are identical:
        \begin{equation}
            S_e(\rho_\theta) = S(\rho_{RB}') = S_e(\rho).
        \end{equation}
    \end{enumerate}
    
\noindent Combining these results, we obtain $I(\Phi, \rho_\theta) = I(\Phi, \rho)$ for any $\theta$. Finally, applying the concavity of mutual information to the continuous convex combination represented by the integral:
	\begin{align}
		I(\Phi, \bar{\rho}) &= I\left(\Phi, \int_0^{2\pi} \frac{d\theta}{2\pi} U_\theta \rho U_\theta^\dagger \right) \nonumber \\
        &\ge \int_0^{2\pi} \frac{d\theta}{2\pi} I(\Phi, U_\theta \rho U_\theta^\dagger) \nonumber \\
        &= \int_0^{2\pi} \frac{d\theta}{2\pi} I(\Phi, \rho) = I(\Phi, \rho).
	\end{align}
	This inequality shows that for any arbitrary state $\rho$, the Fock-diagonal state $\bar{\rho}$ yields a mutual information that is at least as large.
\noindent We now show that this inequality is strict whenever $\rho$ is not 
    Fock-diagonal. Using the identity $I(\Phi,\rho) = S(\Phi(\rho)) + 
    S(\Phi(\rho) | \tilde{\Phi}(\rho))$, where both terms are concave 
    in $\rho$, it suffices to show that the concavity of $S(\Phi(\rho))$ 
    is strict. Introducing the classical random variable $\theta$ 
    with uniform distribution over $[0, 2\pi)$, the concavity inequality 
    for $S(\Phi(\rho))$ can be rewritten as:
    \begin{equation}
        S(\Phi(\bar{\rho})) \geq S(\Phi(\rho) | \theta) \iff I(B:\theta) \geq 0,
    \end{equation}
    where $B$ denotes the output system and $I(B:\theta)$ is the 
    classical-quantum mutual information between the output and $\theta$. 
    The equality $I(B:\theta) = 0$ holds if and only if $B$ and $\theta$ are 
    uncorrelated, i.e., all output states $\Phi(\rho_\theta) = 
    U_\theta \Phi(\rho) U_\theta^\dagger$ are identical. This occurs 
    if and only if $\Phi(\rho)$ is Fock-diagonal. If $\Phi$ doesn't destroy coherences, this holds if and only if $\rho$ itself is Fock-diagonal. 
    Therefore, $I(\Phi, \bar{\rho}) > I(\Phi, \rho)$ for any $\rho$ 
    that is not Fock-diagonal.
\end{proof}

\begin{figure*}[t]
    \centering
    \includegraphics[width=0.48\linewidth]{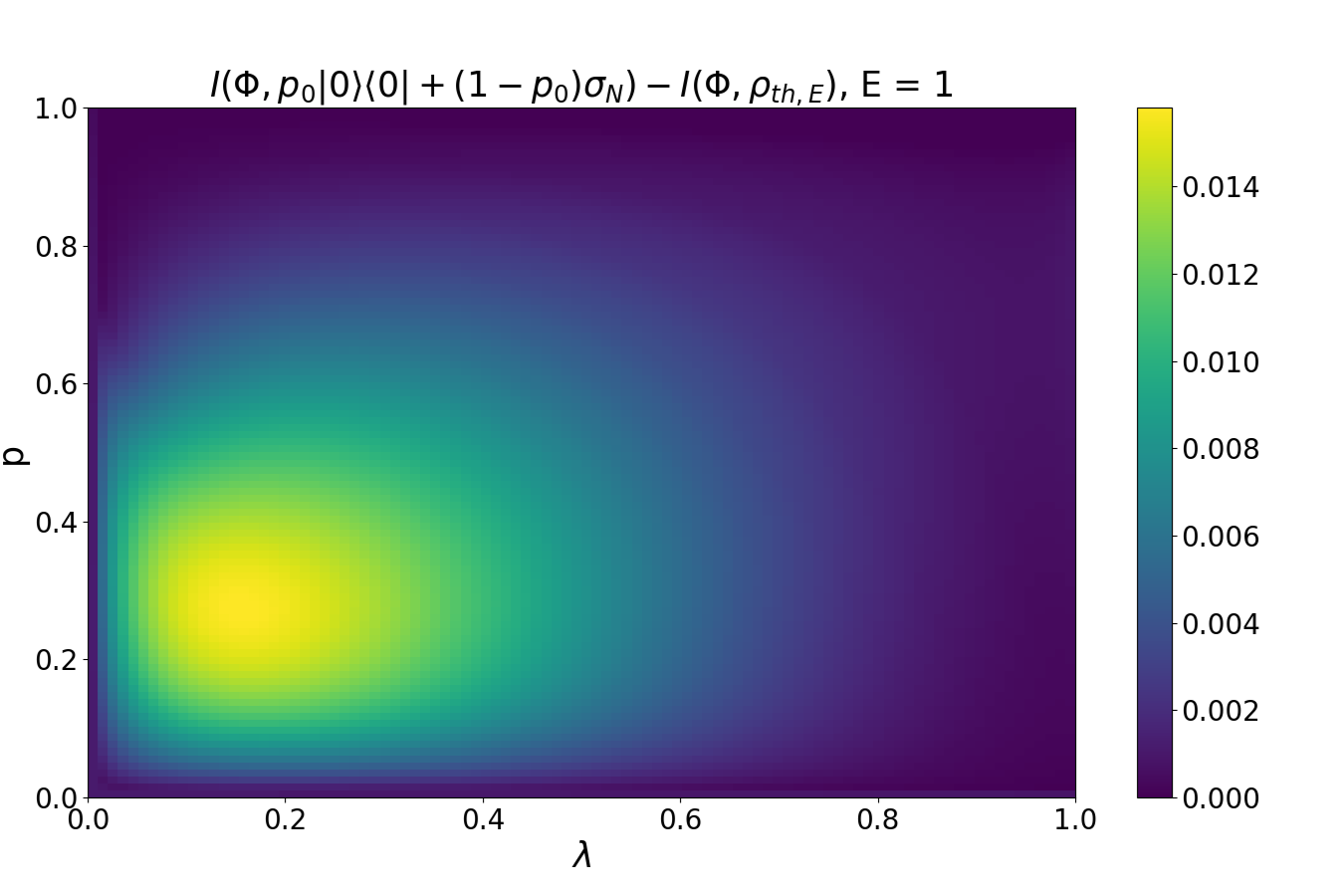}
    \hfill
    \includegraphics[width=0.48\linewidth]{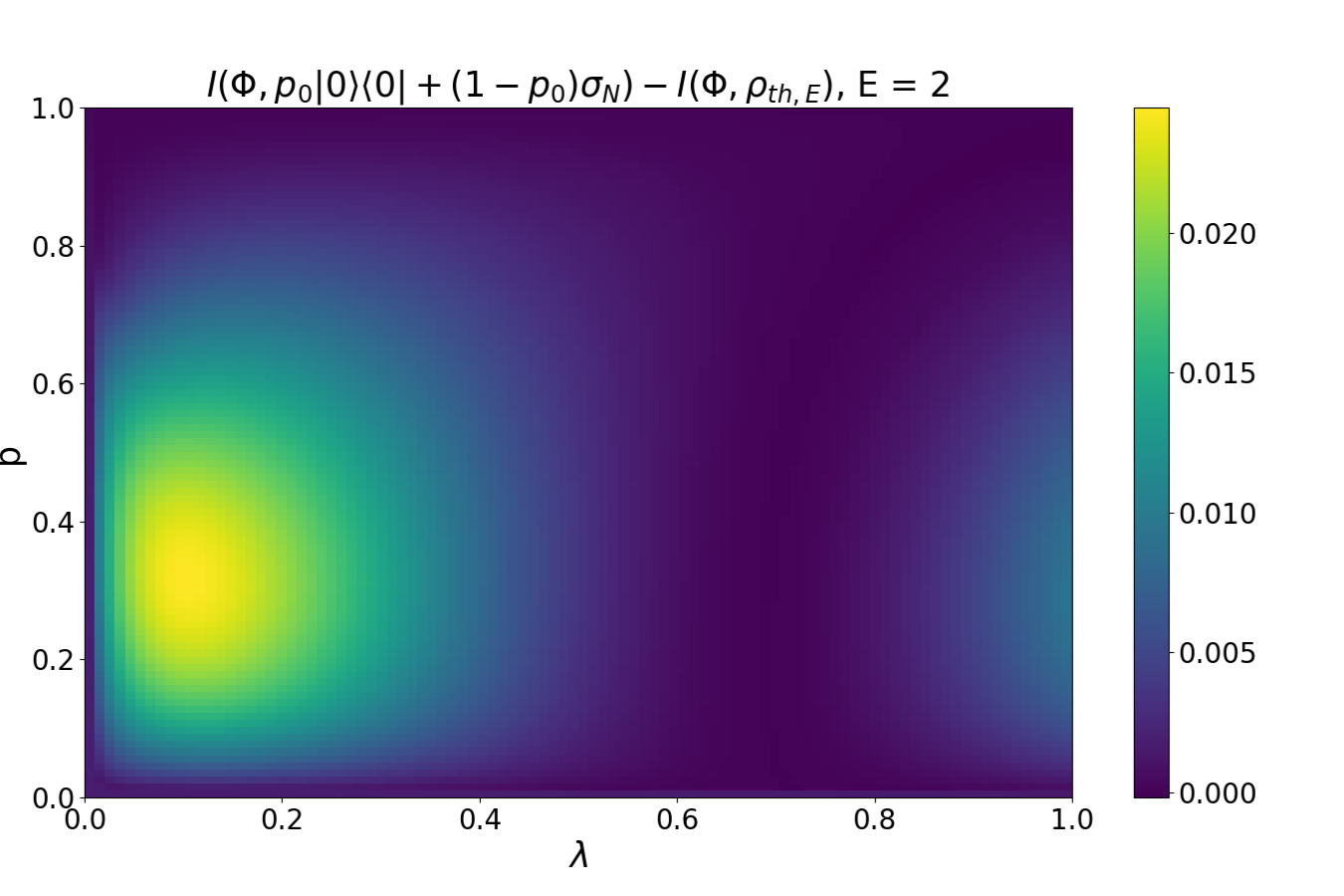}
    
    \vspace{0.2cm} 
    
    \includegraphics[width=0.48\linewidth]{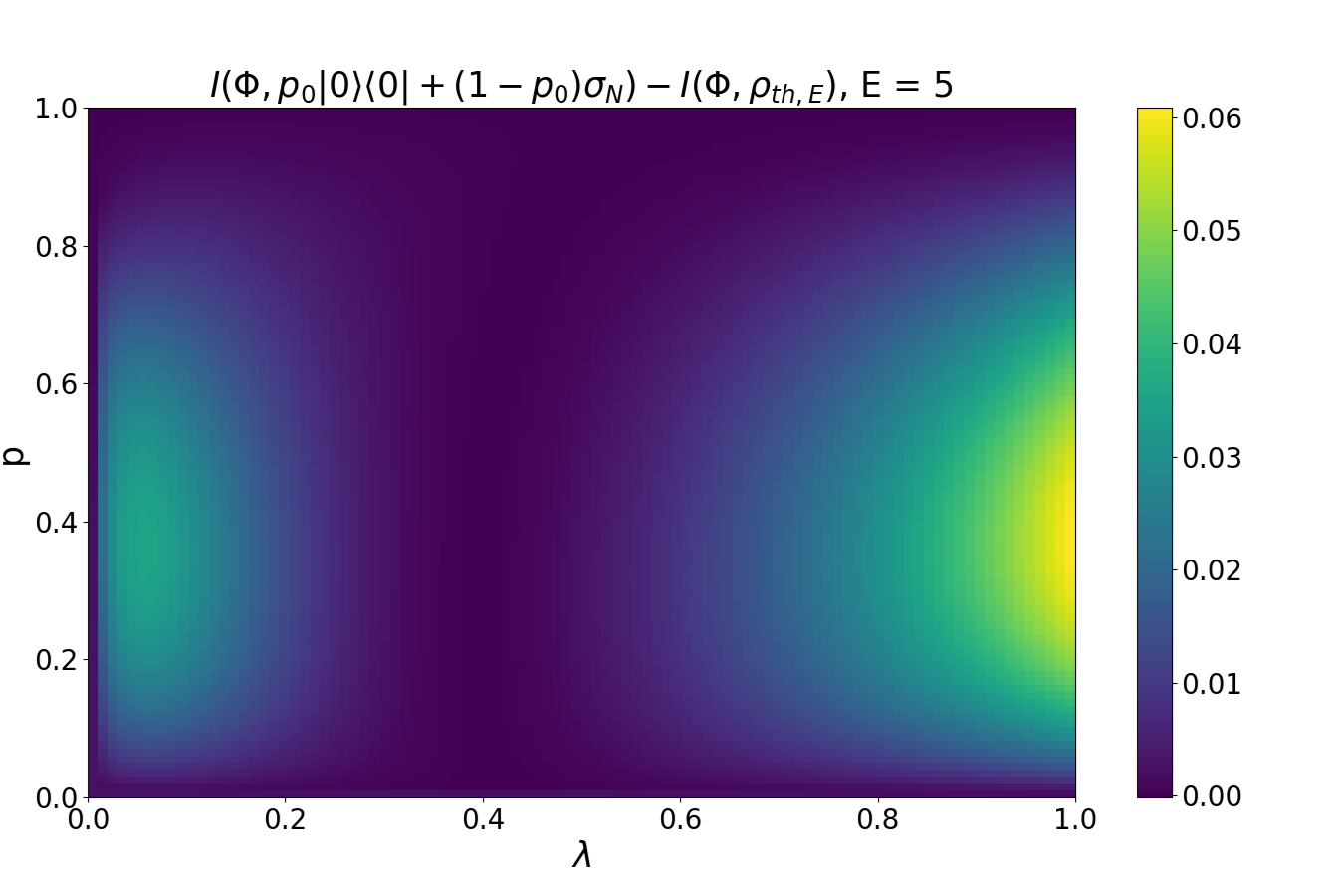}
    \hfill
    \includegraphics[width=0.48\linewidth]{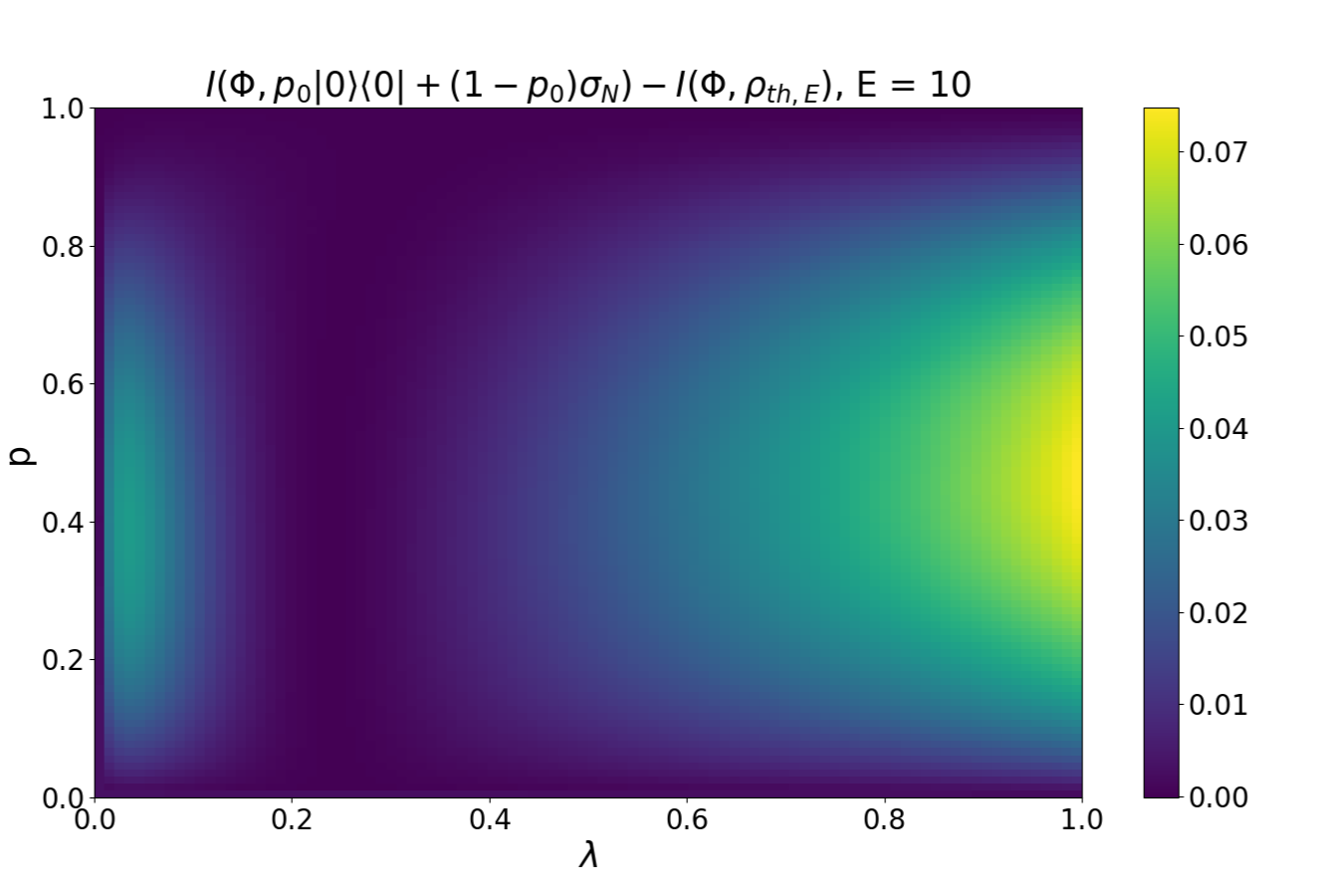}
    
    \caption{Absolute quantum mutual information gain $I(\Phi^{(0)}_{p,\lambda},\rho_{\text{opt}}) - I(\Phi^{(0)}_{p,\lambda},\rho_{\text{th}})$ for the erasure-lossy channel. The panels compare different energy constraints: $E=1$ (top-left), $E=2$ (top-right), $E=5$ (bottom-left), and $E=10$ (bottom-right). The region of non-Gaussian advantage persists and evolves with increasing energy.}
    \label{fig:gain_grid_2x2}
\end{figure*}

\noindent Therefore, we can restrict the optimization of $I(\Phi, \rho)$ to 
the set of Fock-diagonal states without loss of generality. Moreover, 
any non-Fock-diagonal state is strictly sub-optimal, and in particular 
any non-thermal Gaussian state $\rho_G$ satisfies 
$I(\Phi, \rho_G) < I(\Phi, \bar{\rho}_G) \leq I(\Phi, \rho^*)$, where 
the second inequality holds because $\bar{\rho}_G$ is Fock-diagonal with 
the same energy as $\rho_G$, and $\rho^*$ is the maximizer over all 
Fock-diagonal states at that energy.
For bosonic fading channels $\Phi_{\{p_n,\lambda_n\}}$, the hypothesis 
that $\Phi$ does not destroy all coherences is satisfied whenever at 
least one transmissivity $\lambda_n$ is strictly positive. Indeed, for 
any input $\rho$ with off-diagonal element $\langle m|\rho|n\rangle \neq 0$ 
with $m \neq n$, the corresponding output element satisfies
\begin{equation}
    \langle m|\Phi(\rho)|n\rangle 
    = \left\langle \lambda^{\frac{m+n}{2}} \right\rangle 
    \langle m|\rho|n\rangle \cdot c_{mn} \neq 0,
\end{equation}
where $c_{mn} > 0$ are strictly positive coefficients arising from the 
binomial kernel of the pure-loss channel, and 
$\langle \lambda^{(m+n)/2} \rangle > 0$ as long as at least one 
$\lambda_n > 0$. The only exception is the complete-loss channel 
$\Phi = \mathcal{E}_0$, for which $\langle\lambda\rangle = 0$ and all 
coherences are destroyed. For all other non-trivial fading distributions, 
the uniqueness of the Fock-diagonal maximizer is therefore guaranteed 
by the lemma above, and the entire Gaussian class, whose only 
Fock-diagonal representative at fixed energy $E$ is the thermal state 
$\rho_{\mathrm{th},E}$~\cite{Weedbrook2012}, is ruled out as a 
co-optimizer.

\section{Plots Gallery and Numerical Analysis}
\label{app:plots_gallery}

\noindent In this appendix, we present a comprehensive gallery of numerical results that support and expand upon the findings discussed in the main text. The following sections provide a detailed visual analysis of the capacity gains and optimal state structures across different fading channel models, starting from simple discrete binary mixtures up to continuous atmospheric turbulence models.

\subsection{Erasure-like channel}
\noindent We begin by analyzing the identity-erasure channel, defined by the map:
\begin{equation}
    \Phi_p = p\mathcal{I} + (1-p)\mathcal{E}_0.
\end{equation}
This channel represents a fundamental communication link suffering from complete signal dropouts with probability $1-p$. The numerical evaluation of this channel allows us to observe the core mechanism of non-Gaussian advantage. As shown in Fig.~\ref{fig:optimal_p0_and_gain}, the optimization algorithm correctly identifies that introducing a specific vacuum population $p_0$ is essential to combat the erasure noise. The corresponding relative gain demonstrates that non-Gaussian encodings are particularly crucial in the low-transmission (high-noise) and moderate-energy regimes.

\begin{figure}[t]
    \centering
    \includegraphics[width=0.52\linewidth]{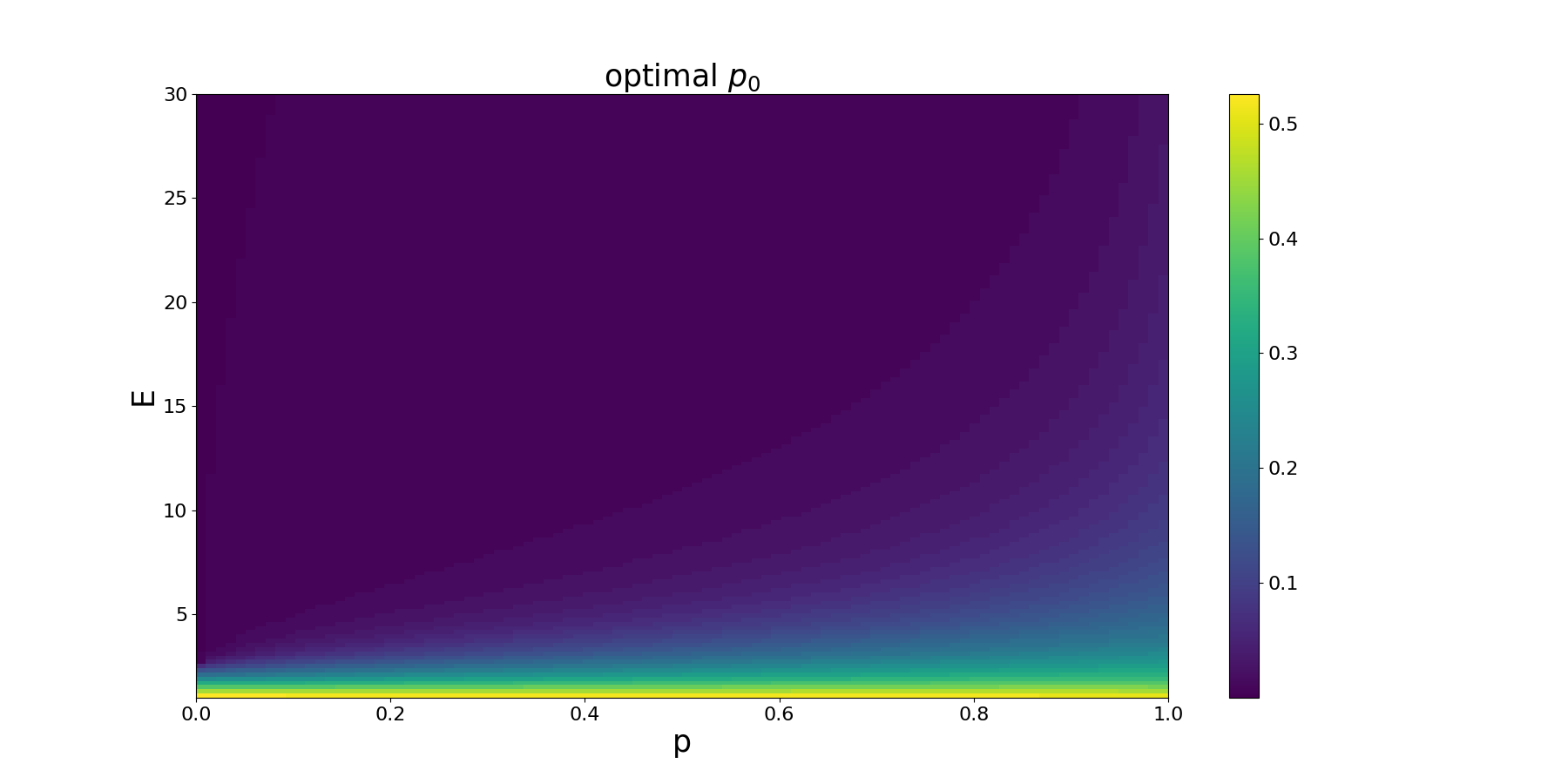}
    \hfill
    \includegraphics[width=0.47\linewidth]{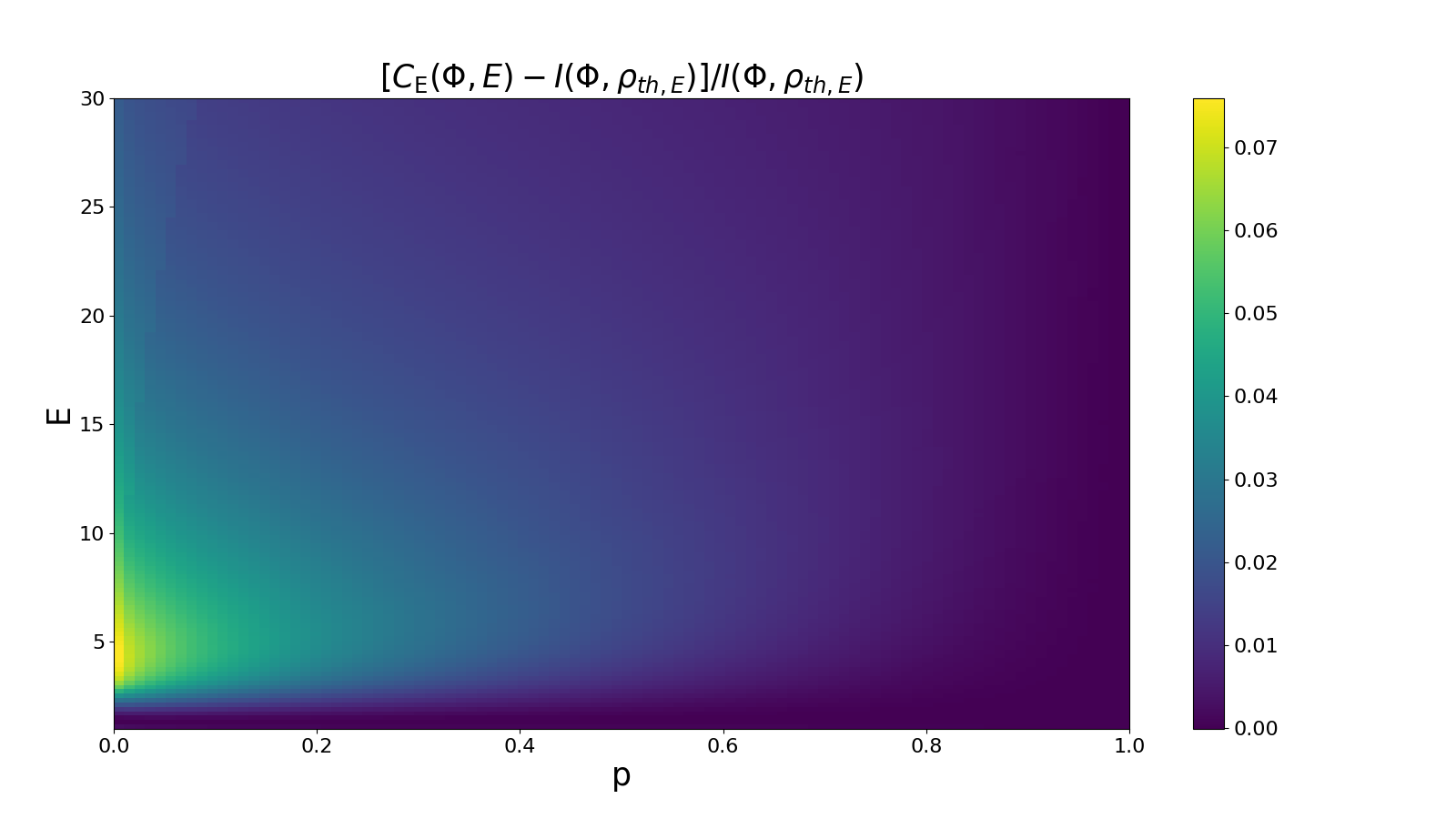}
    
    \caption{\textbf{Left:} Optimal vacuum population $p_0$ for the erasure channel input as a function of fading probability $p$ and energy $E$. Lower $p_0$ is preferred at higher energies to distinguish from the vacuum noise. \textbf{Right:} Relative gain in capacity $R(p,E)$ obtained by using the optimized non-Gaussian state compared to the thermal state. The gain is most significant for low transmission probabilities (noisy channels) and moderate energies.}
    \label{fig:optimal_p0_and_gain}
\end{figure}

\subsection{Erasure-lossy channel}
\noindent We now generalize the previous model to the erasure-lossy channel, where successful transmission events occur with a finite transmissivity $\lambda$:
\begin{equation} 
    \Phi^{(0)}_{p,\lambda} = p \mathcal{E}_\lambda + (1-p) \mathcal{E}_0.
\end{equation}
This model captures the essence of imperfect transmission interspersed with total signal loss. In Fig.~\ref{fig:gain_grid_2x2}, we report the absolute capacity gain provided by our Fock-diagonal ansatz across different energy constraints. To understand the physical structure of these capacity-achieving states, Fig.~\ref{fig:bar_diagram_evolution_3x3} tracks the evolution of the optimal photon number distribution during the iterative optimization process. The results confirm that the region of non-Gaussian advantage is not a low-energy artifact, but persists and dynamically evolves as the available energy increases.

\subsection{Identity-lossy channel}\label{app:identity_lossy}
\noindent As a complementary scenario to the erasure models, we consider the identity-lossy channel:
\begin{equation}
\Phi^{(1)}_{p,\lambda} = p \mathcal{I}+ (1-p) \mathcal{E}_\lambda.
\end{equation}
This family describes intermittent ``clear-air" turbulence, where the signal is occasionally attenuated but never completely erased. Figure~\ref{fig:gain_grid_identity_lossy} illustrates the capacity gain for this channel. The results demonstrate that our iterative optimization successfully tailors the non-Gaussian input to counteract the interference between the perfect transmission component and the attenuated one.

\begin{figure*}[t]
    \centering
    \includegraphics[width=0.48\linewidth]{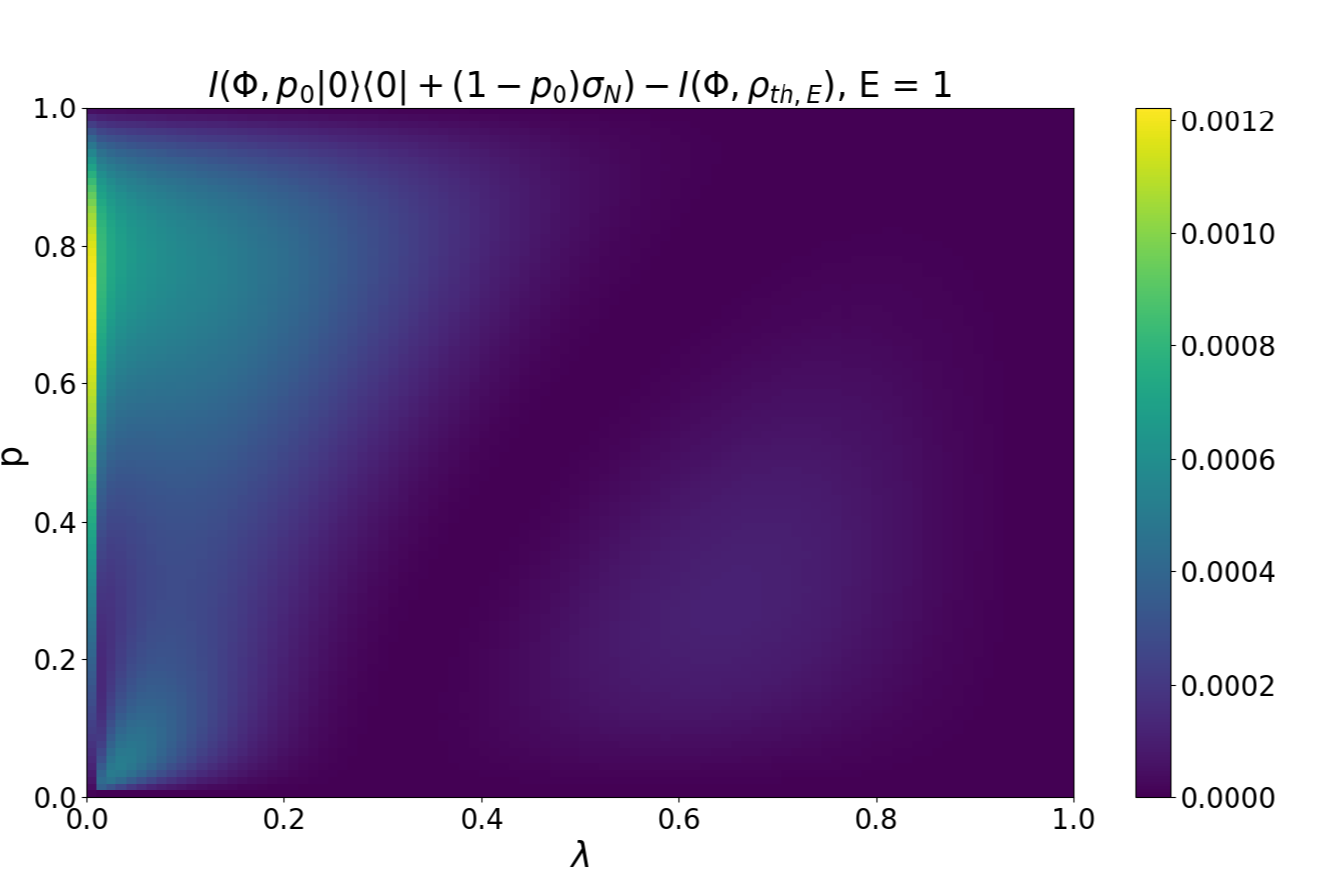}
    \hfill
    \includegraphics[width=0.48\linewidth]{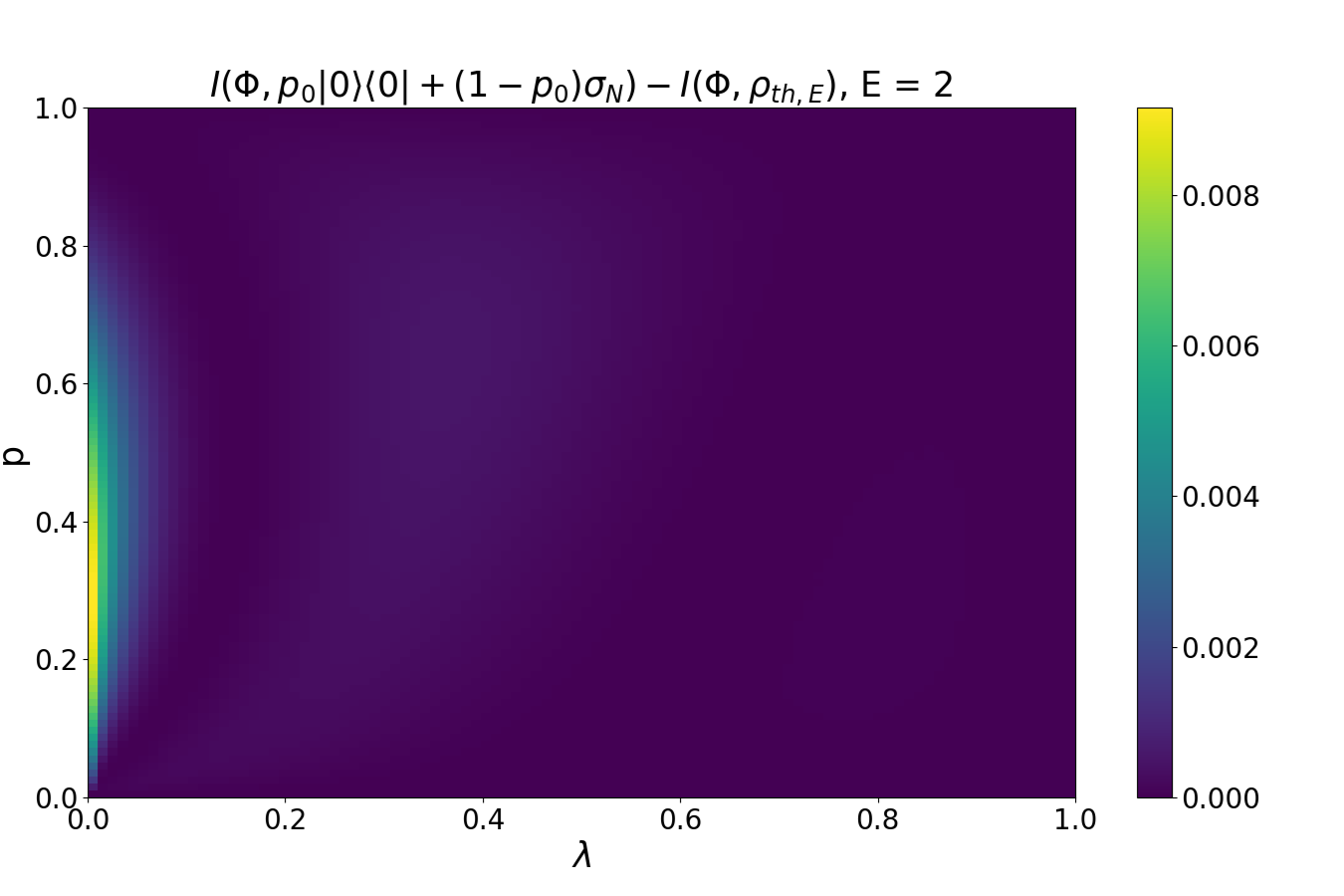}

    \includegraphics[width=0.48\linewidth]{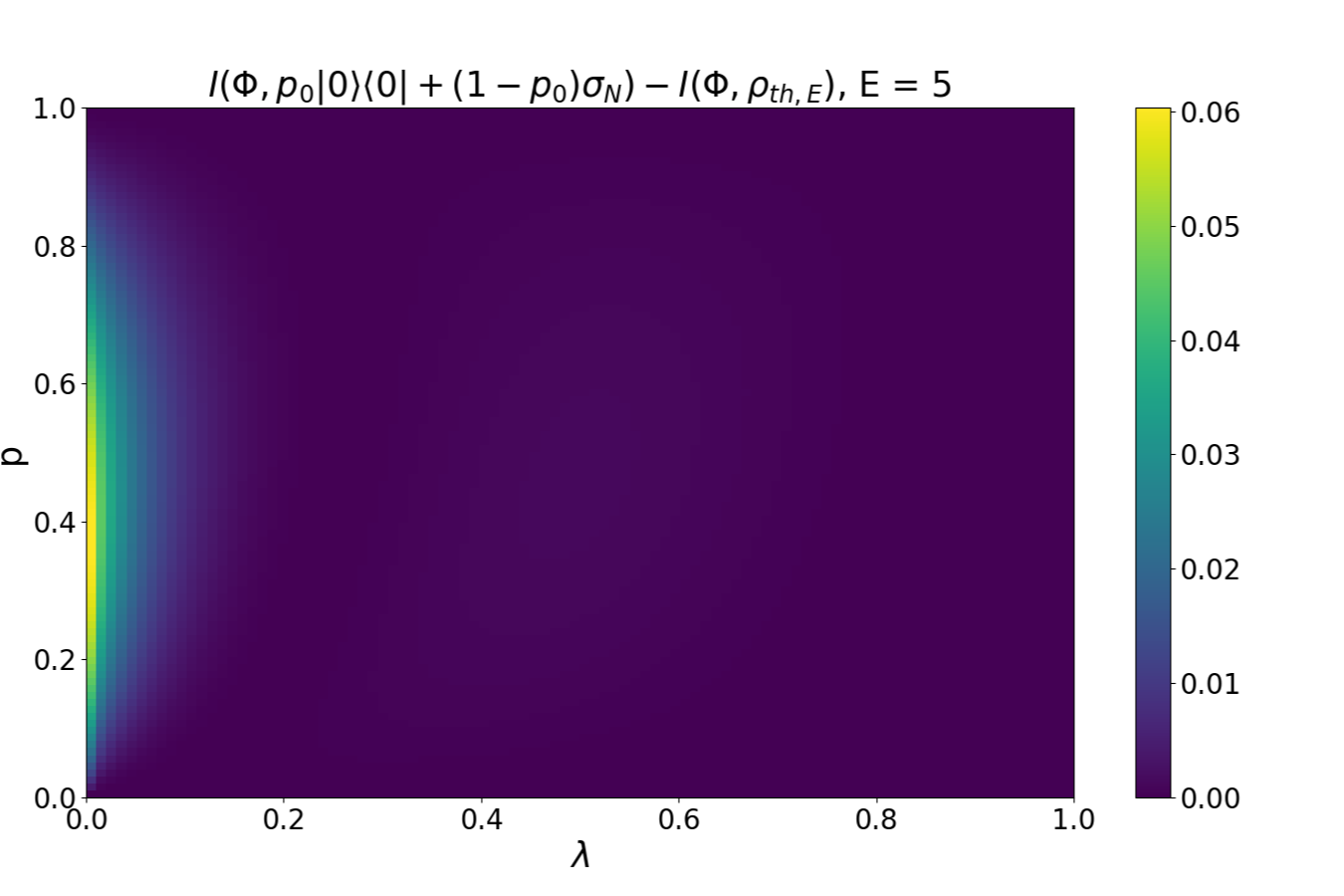}
    \hfill
    \includegraphics[width=0.48\linewidth]{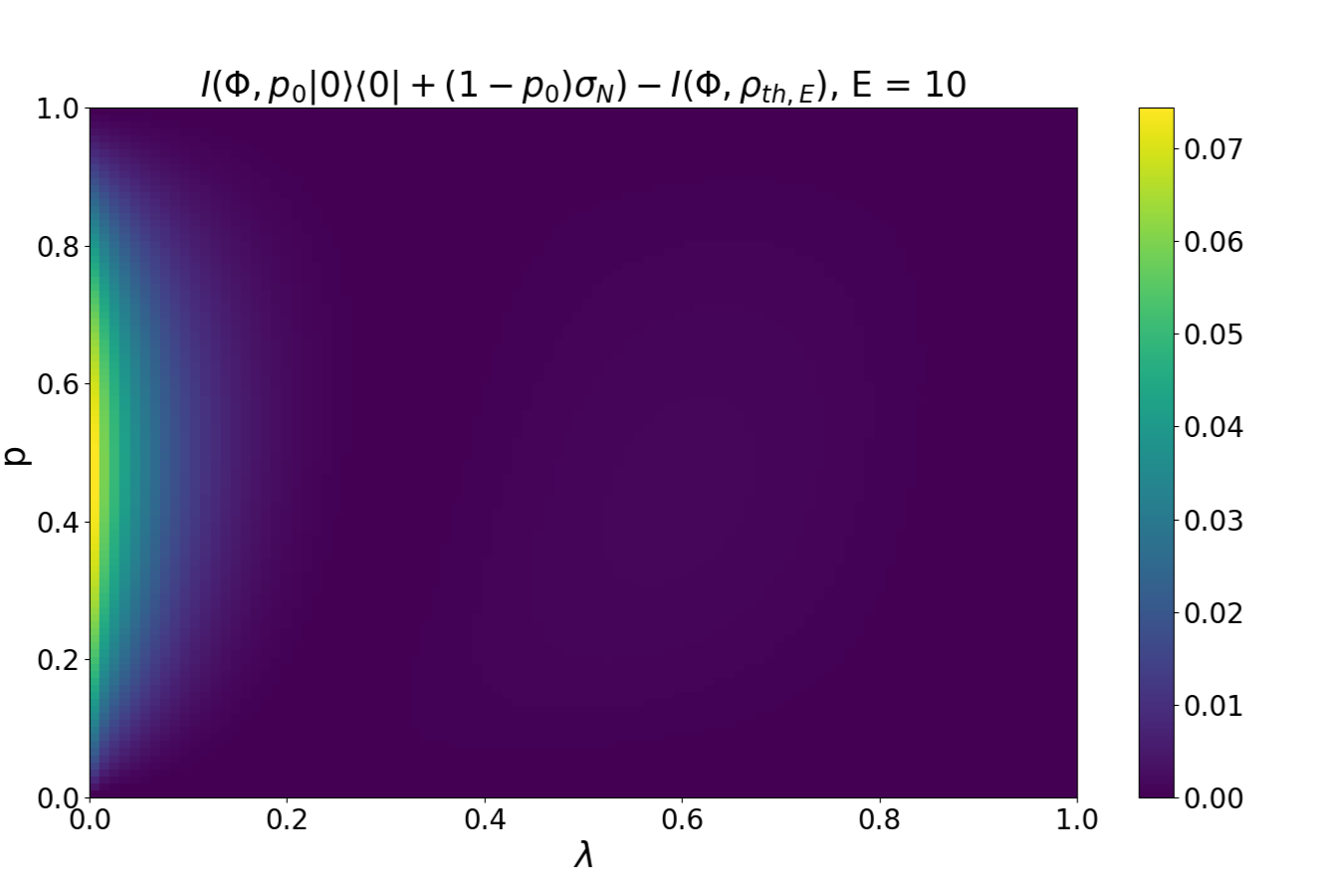}
    
    \caption{Absolute capacity gain $I(\Phi^{(1)}_{p,\lambda},\rho_{\text{opt}}) - I(\Phi^{(1)}_{p,\lambda}, \rho_{\text{th}})$ for the identity-lossy channel $ \Phi^{(1)}_{p,\lambda}$. Panels correspond to different energy constraints ($E=1, 2, 5, 10$). The emergence of distinct regions of advantage confirms that non-Gaussian states are optimal also against intermittent attenuation without total erasure.}
    \label{fig:gain_grid_identity_lossy}
\end{figure*}

\subsection{Coherent information and channel activation}
\noindent To further investigate the ultimate quantum limits of the channel, we focus on the single-shot coherent information $I_c$, which provides a rigorous lower bound for the quantum capacity. Figure~\ref{fig:coherent_info_comparison_grid} displays the optimized coherent information alongside the absolute gain over the thermal benchmark. Crucially, the right column visually proves the phenomenon of channel activation: our optimized states enable positive communication rates in specific parameter regions where the optimal Gaussian state would yield zero quantum capacity.

\begin{figure*}[t] 
    \centering
    
    \includegraphics[width=0.48\linewidth]{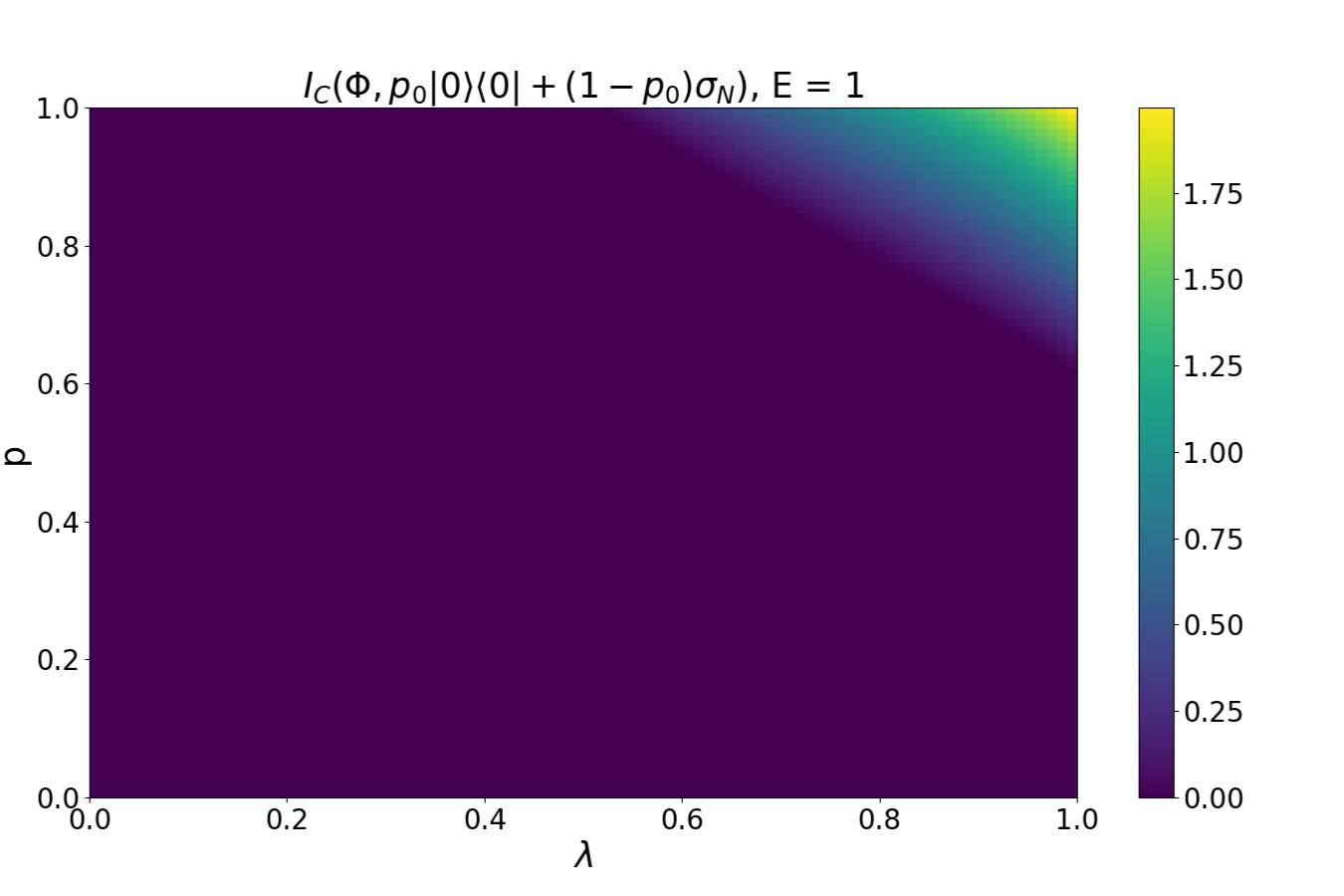}
    \hfill
    \includegraphics[width=0.48\linewidth]{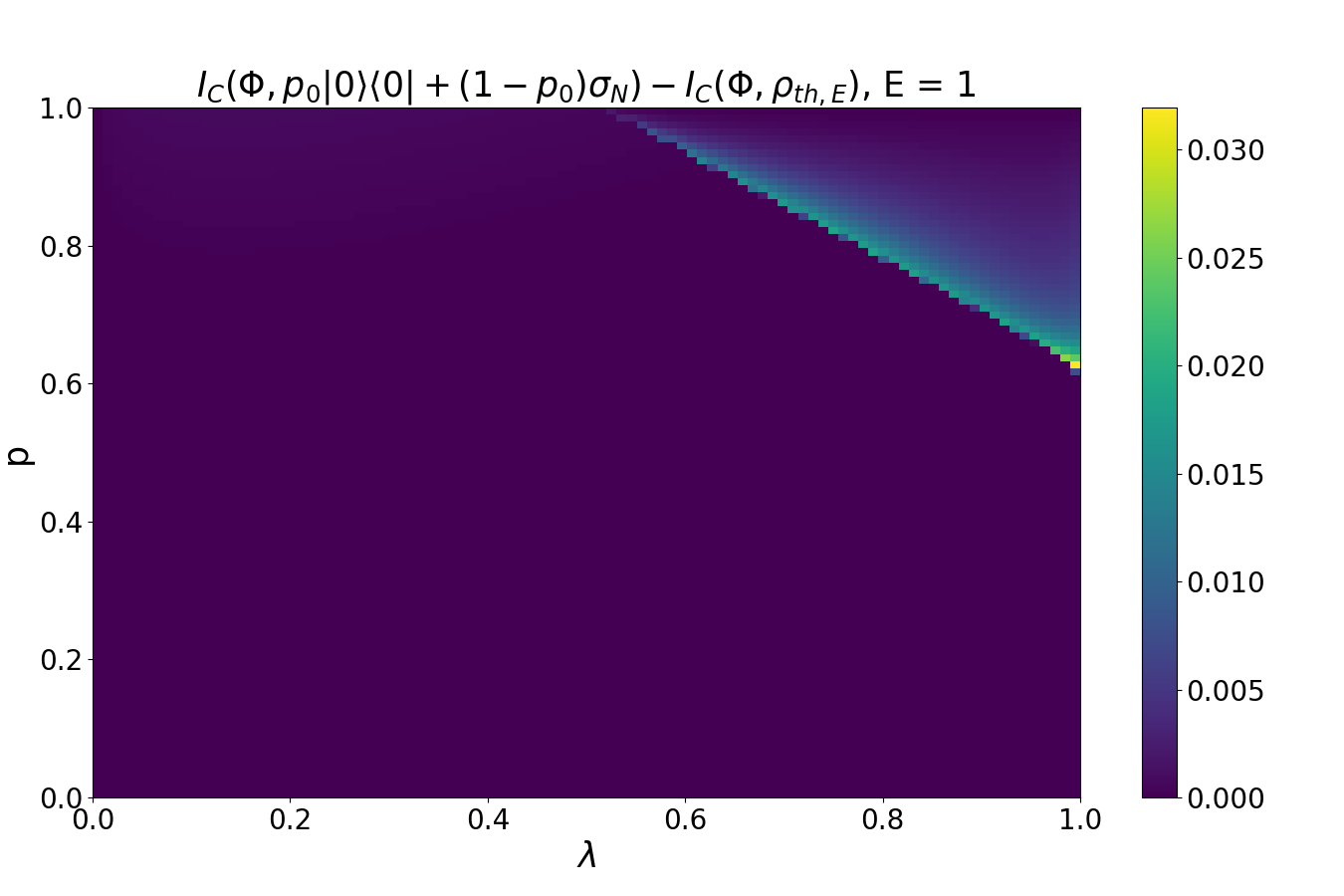}
    
    \vspace{0.1cm}
    
    \includegraphics[width=0.48\linewidth]{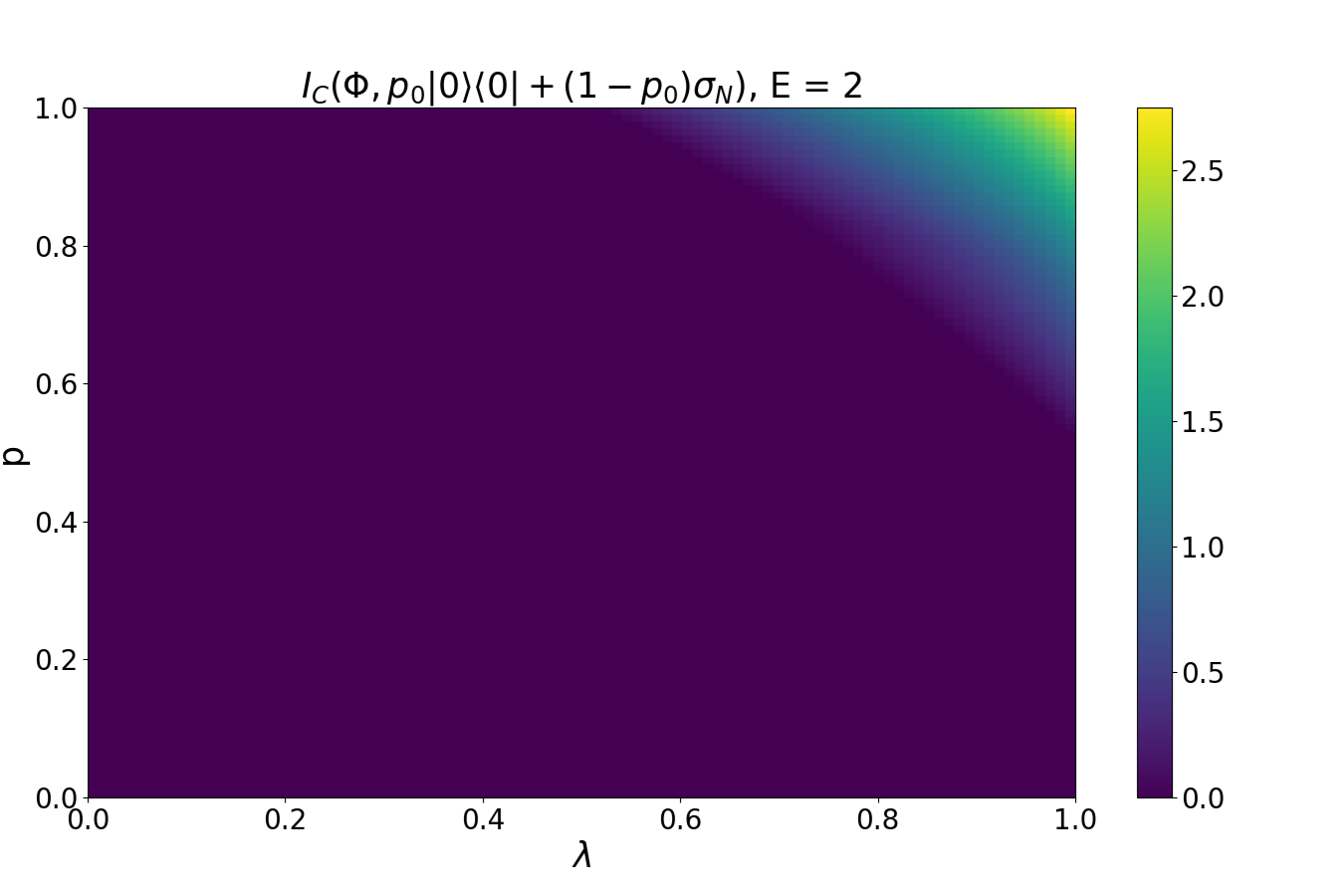}
    \hfill
    \includegraphics[width=0.48\linewidth]{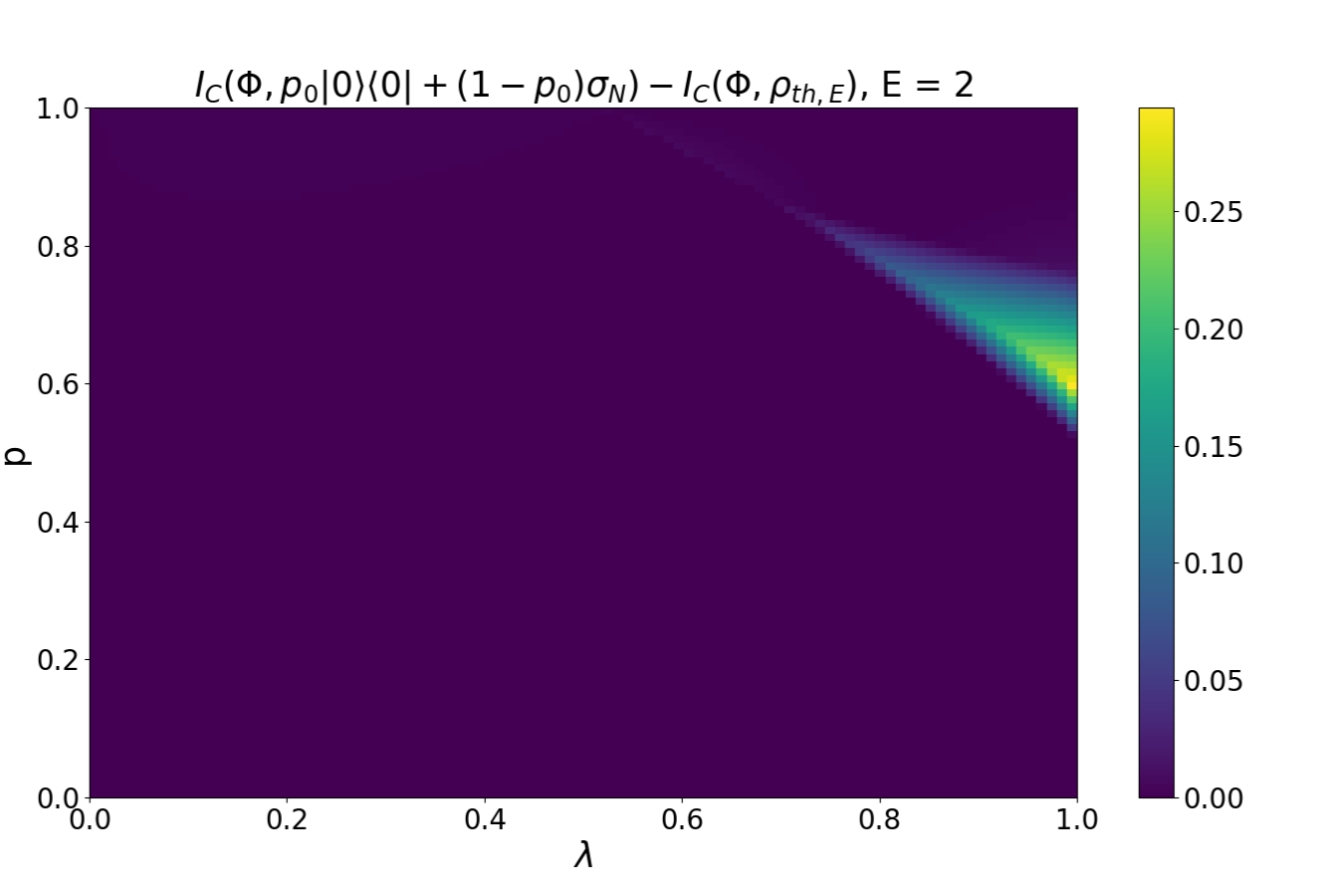}
    
    \vspace{0.1cm}
    
    \includegraphics[width=0.48\linewidth]{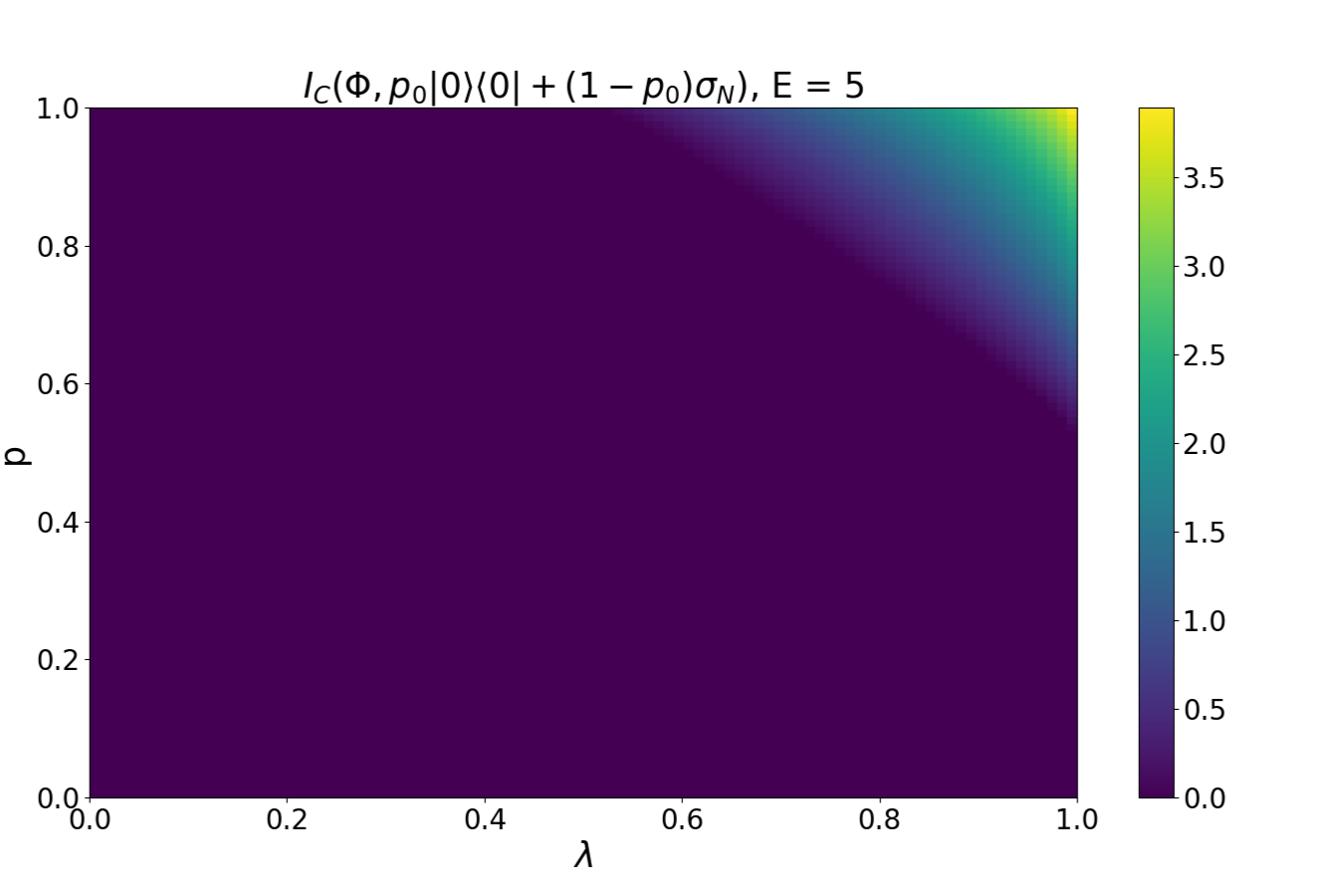}
    \hfill
    \includegraphics[width=0.48\linewidth]{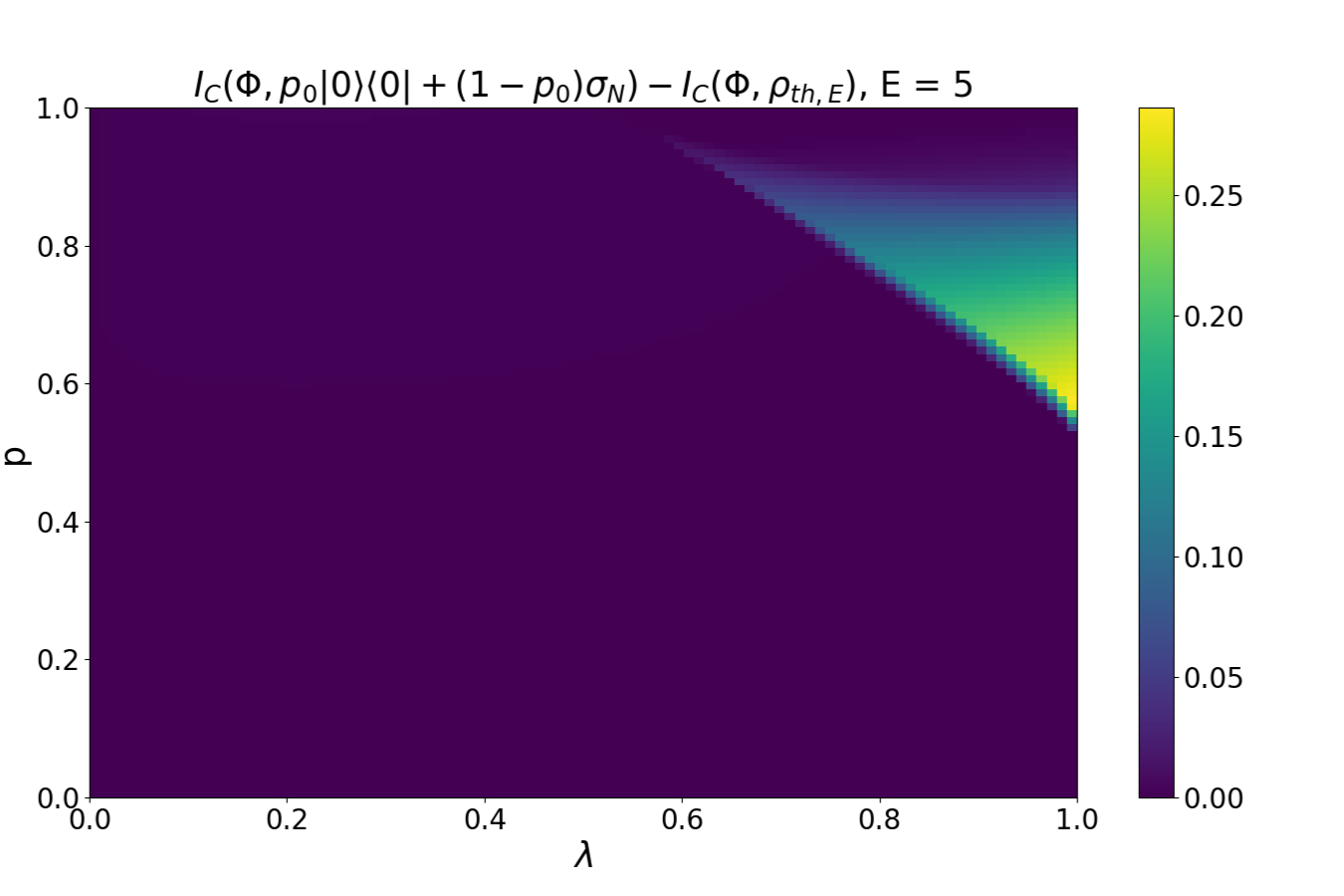}
    
    \vspace{0.1cm}
    
    \includegraphics[width=0.48\linewidth]{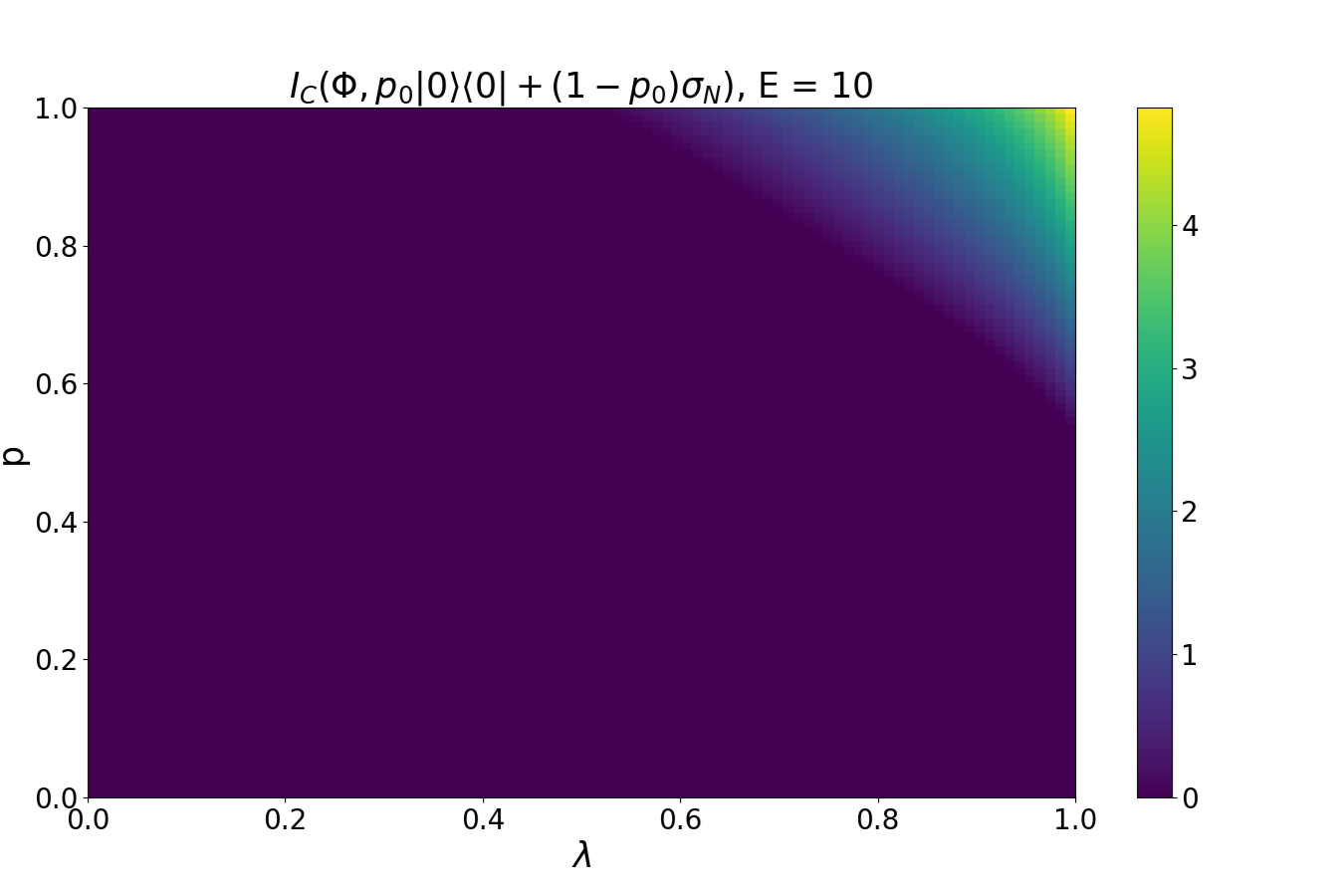}
    \hfill
    \includegraphics[width=0.48\linewidth]{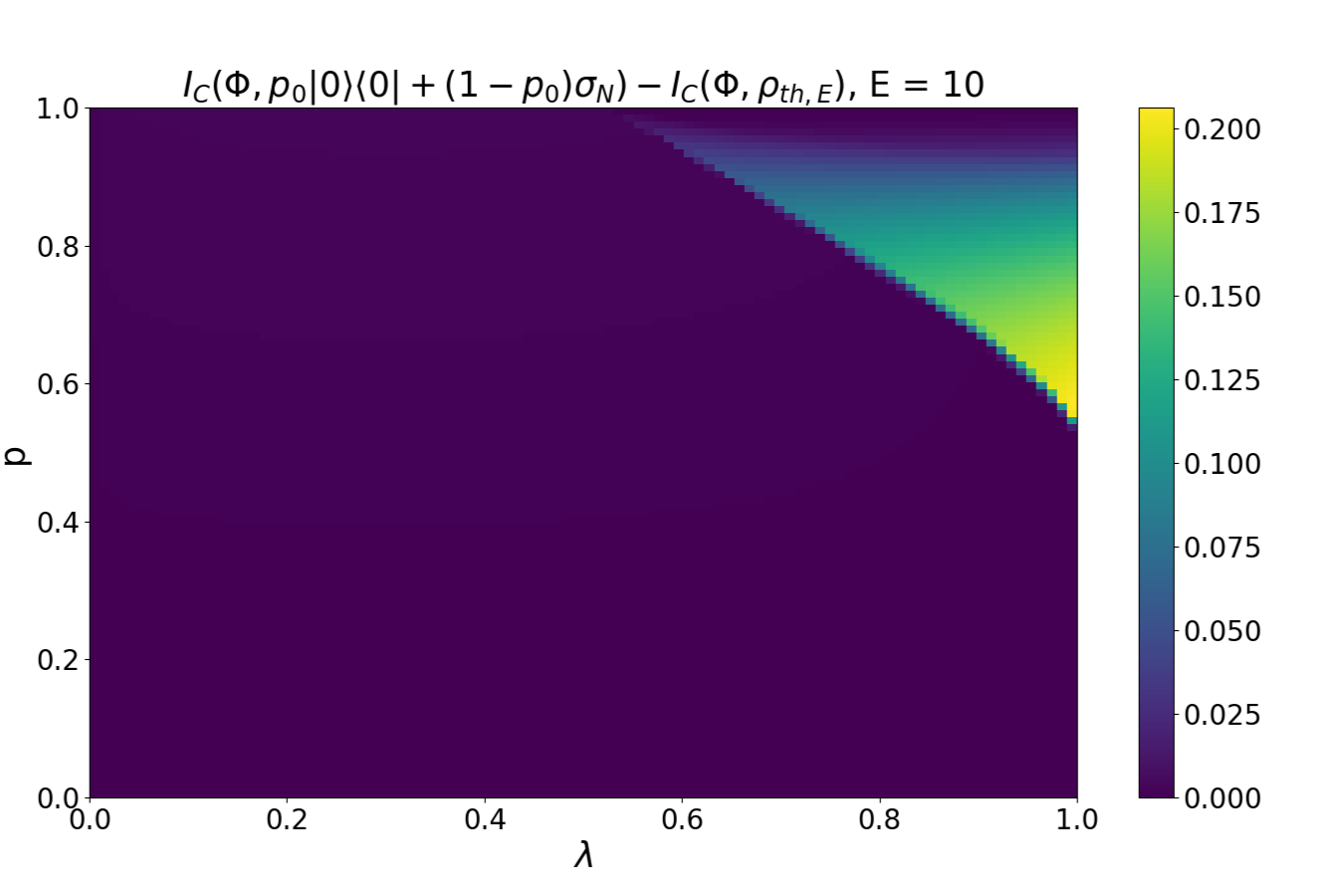}
    
    \caption{Single-shot quantum capacity analysis for the erasure-lossy channel $\Phi_{p,\lambda}^{(0)}$ across increasing energy regimes ($E=1, 2, 5, 10$ from top to bottom). 
    \textbf{Left Column:} The optimized coherent information $I_c(\Phi_{p,\lambda}^{(0)}, \rho_{\text{opt}})$. The appearance of positive regions (yellow/green) signals the activation of quantum capacity ($Q_1 > 0$) in regimes where it might otherwise be zero.
    \textbf{Right Column:} The gain $I_c(\Phi_{p,\lambda}^{(0)},\rho_{\text{opt}}) - I_c(\Phi_{p,\lambda}^{(0)},\rho_{\text{th}})$. The bright areas highlight parameter regions where non-Gaussian states strictly outperform thermal states, essentially enabling quantum communication where Gaussian states fail ($I_c^{\text{th}} \le 0 < I_c^{\text{opt}}$).}
    \label{fig:coherent_info_comparison_grid}
\end{figure*}

\begin{figure*}[t]
    \centering
    
    \includegraphics[width=0.32\linewidth]{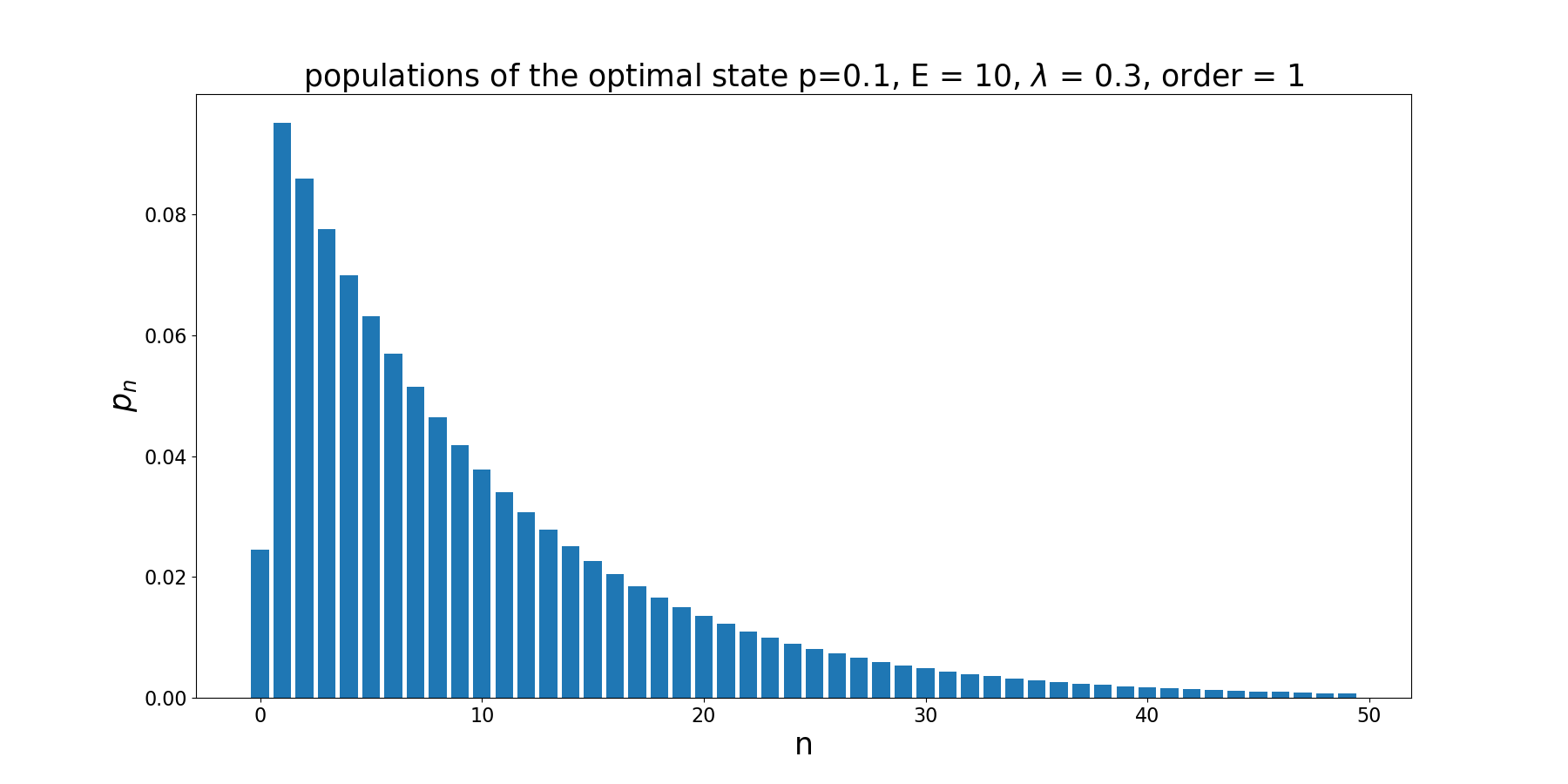}
    \hfill
    \includegraphics[width=0.32\linewidth]{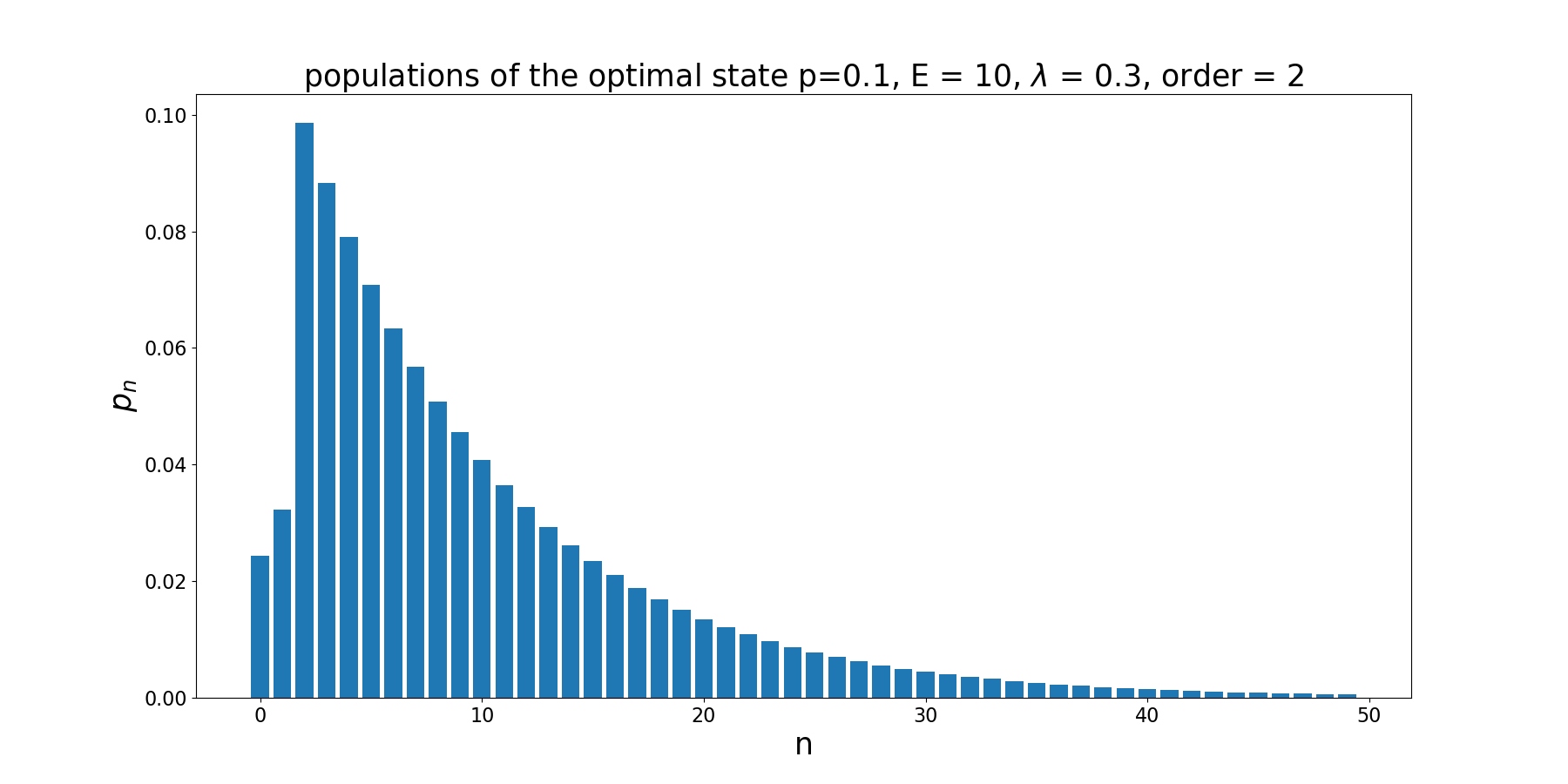}
    \hfill
    \includegraphics[width=0.32\linewidth]{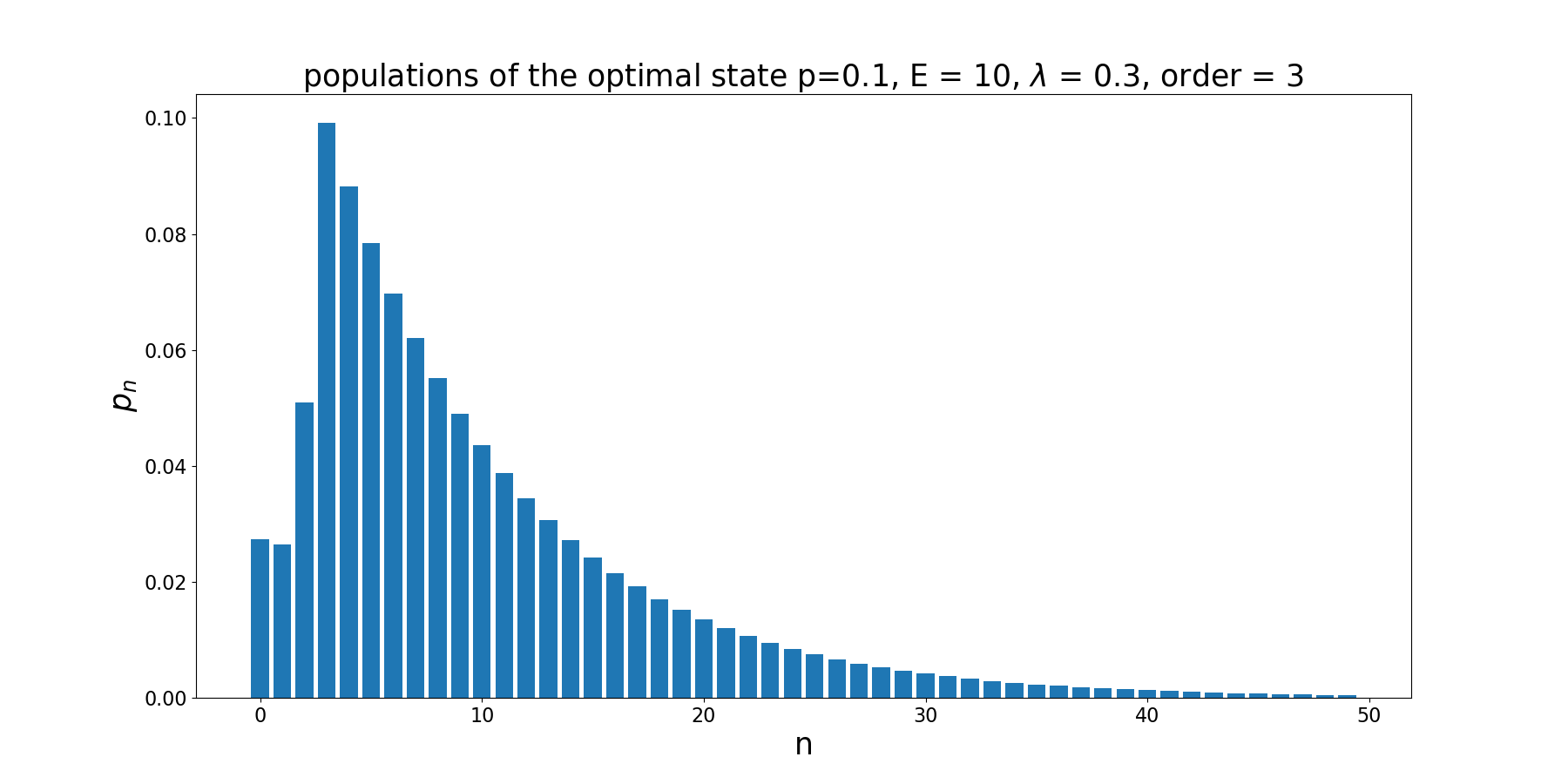}
    
    \vspace{0.1cm}
    
    \includegraphics[width=0.32\linewidth]{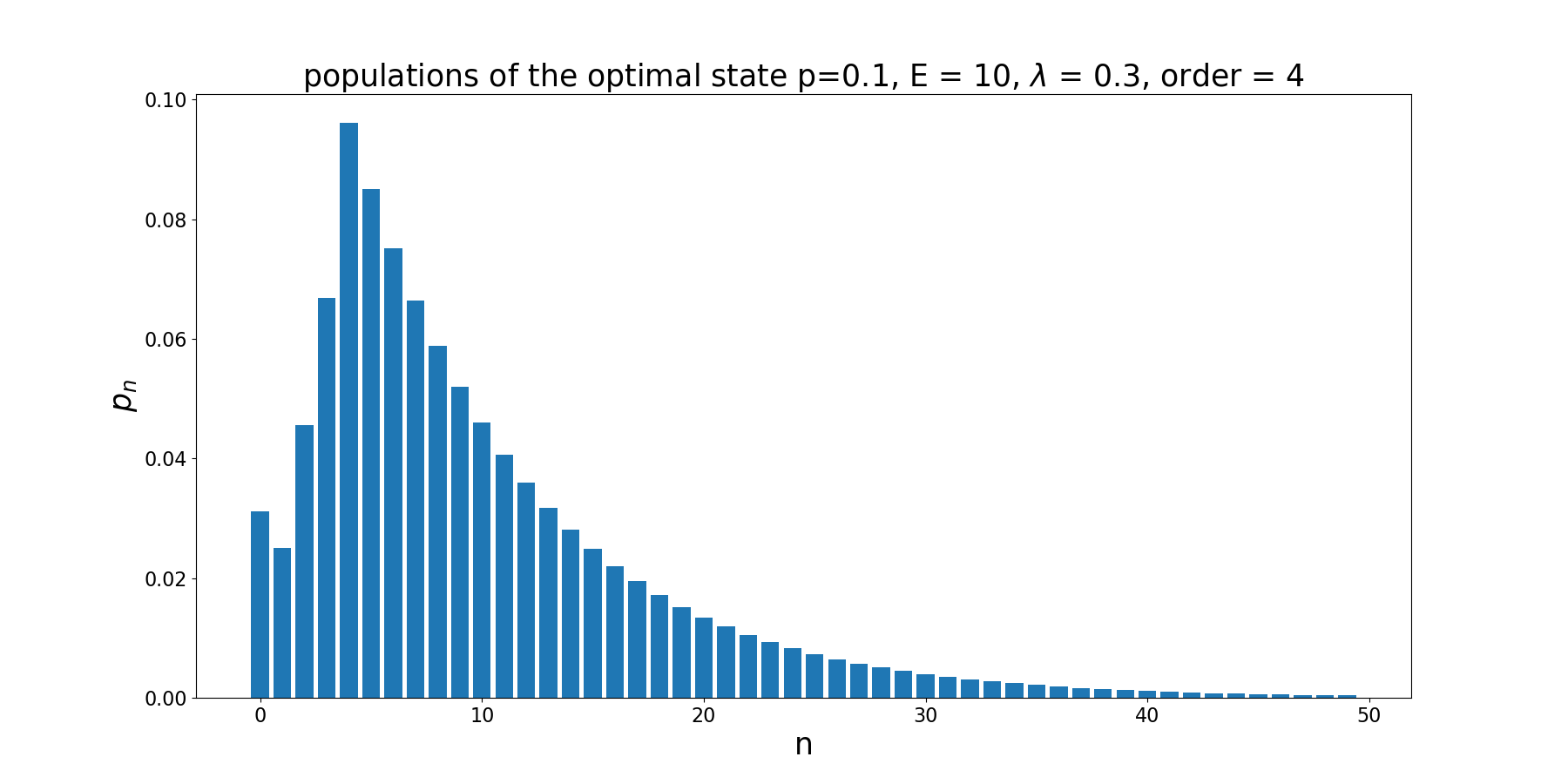}
    \hfill
    \includegraphics[width=0.32\linewidth]{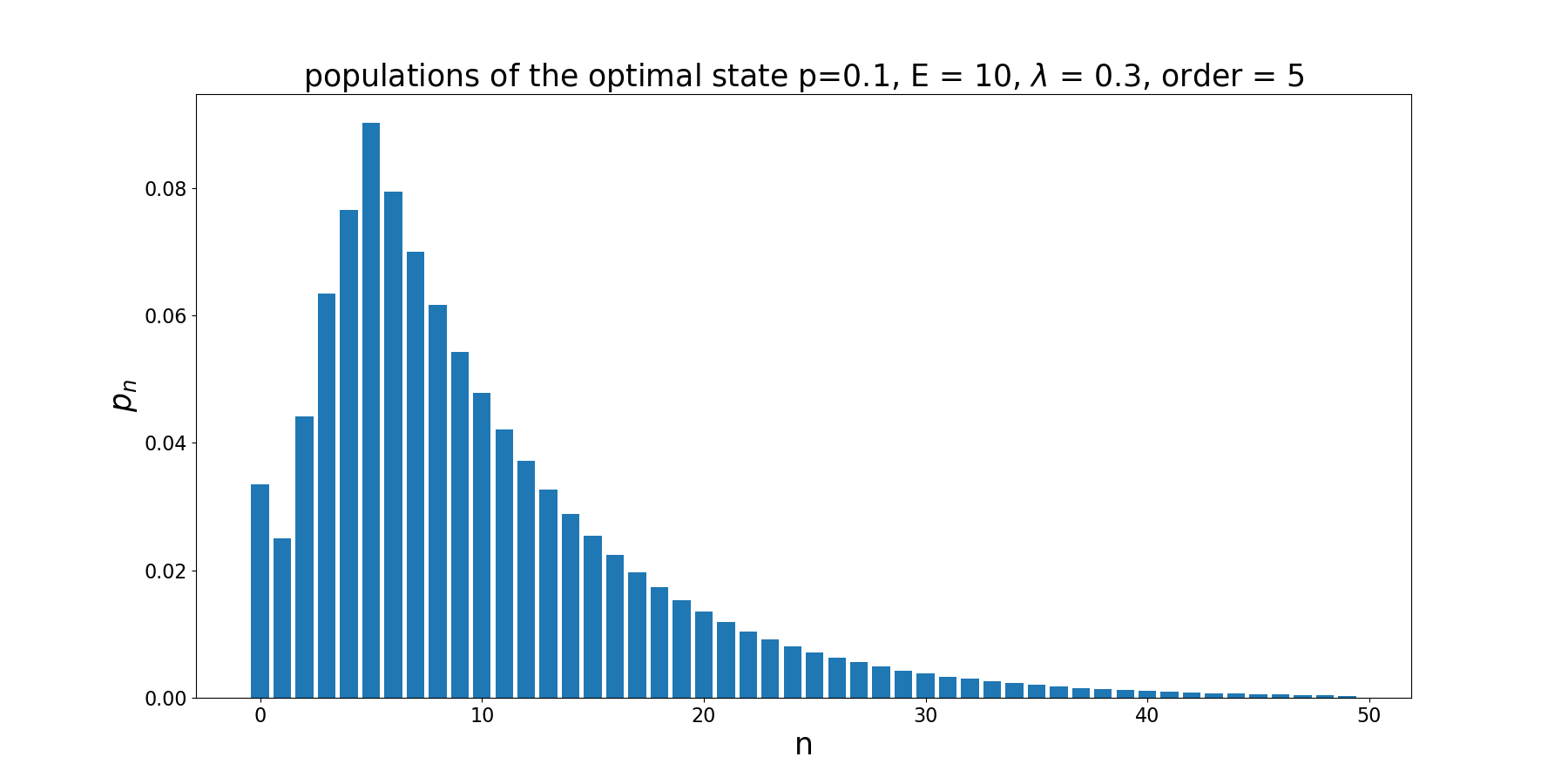}
    \hfill
    \includegraphics[width=0.32\linewidth]{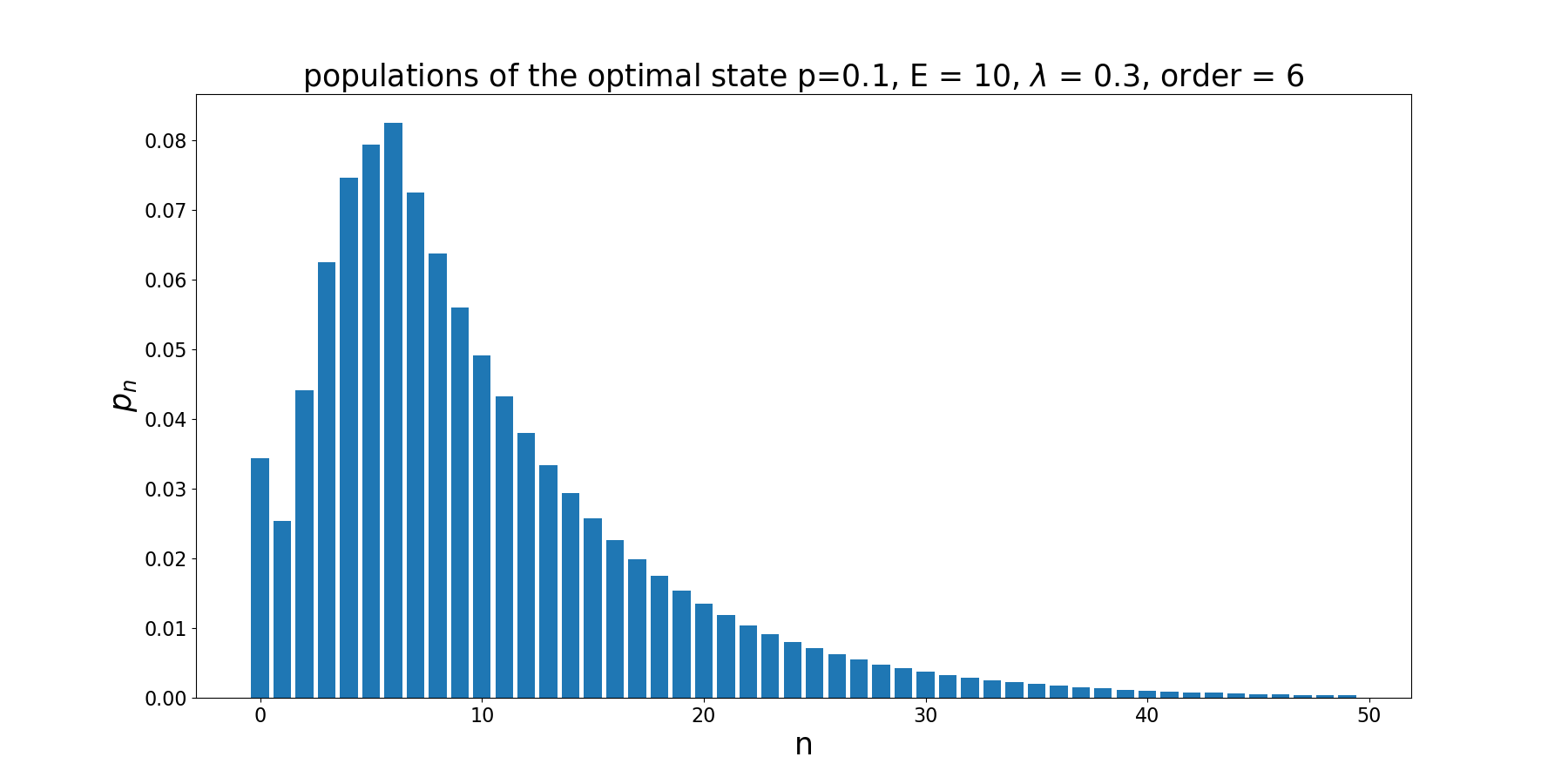}
    
    \vspace{0.1cm}
    
    \includegraphics[width=0.32\linewidth]{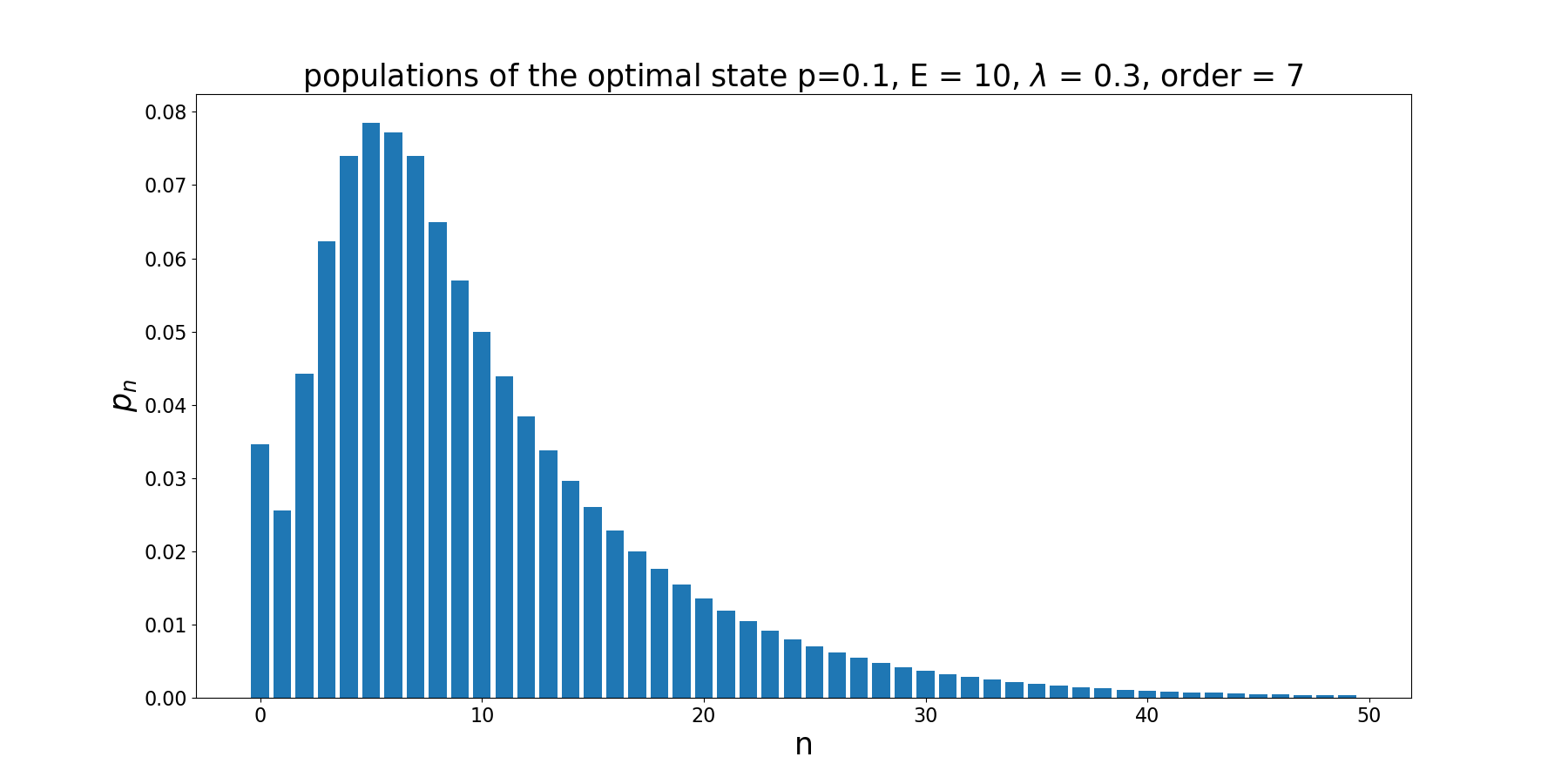}
    \hfill
    \includegraphics[width=0.32\linewidth]{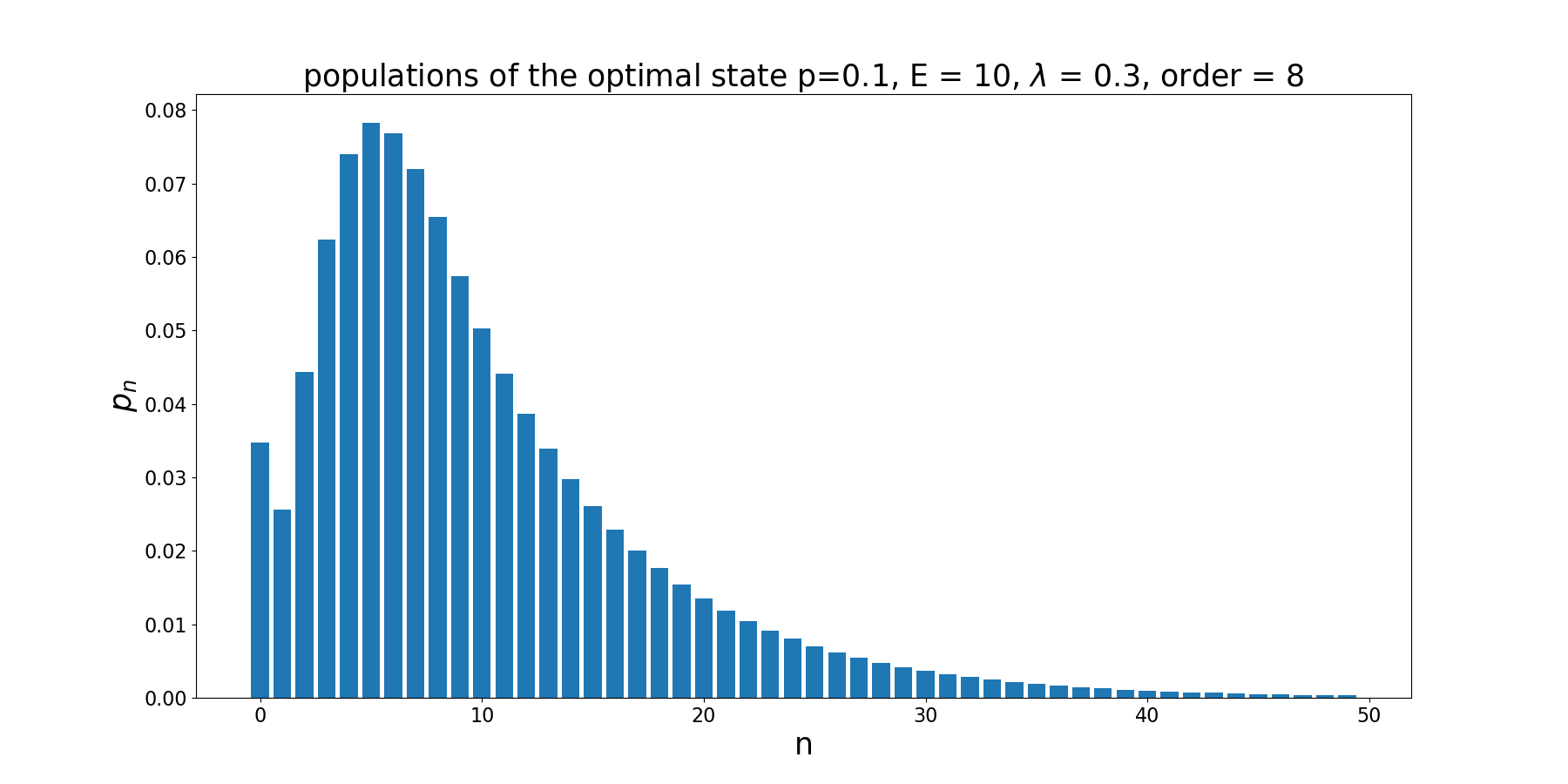}
    \hfill
    \includegraphics[width=0.32\linewidth]{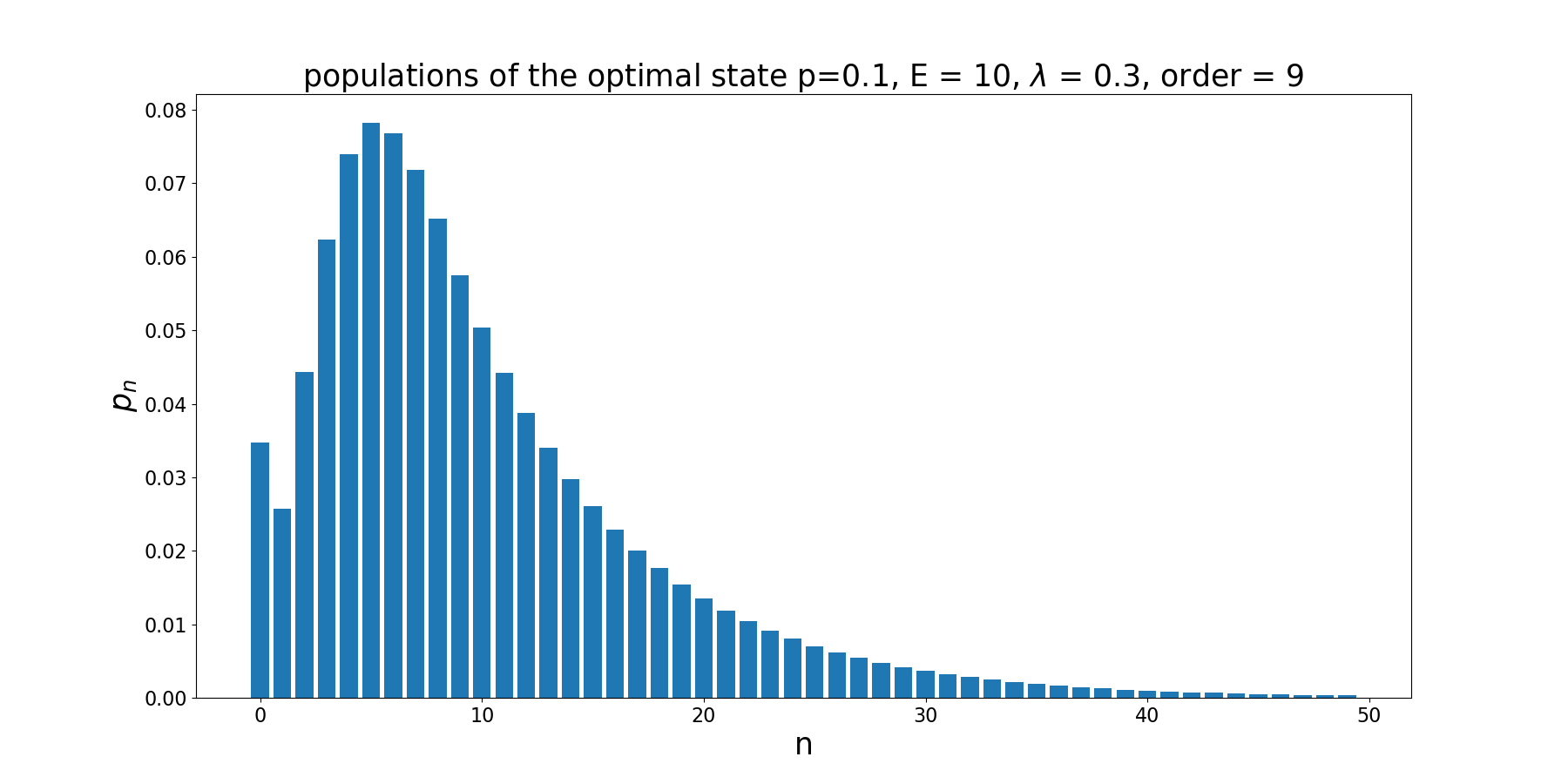}
    
    \caption{Evolution of the optimal photon number distribution $p_n$ as the iterative algorithm progresses from step $k=1$ to $k=9$.
    At each step $k$, the first $k$ populations are optimized independently, while the tail ($n \ge k$) follows a thermal distribution.
    The deviation from the monotonic geometric distribution (characteristic of Gaussian thermal states) becomes evident as $k$ increases, revealing the non-Gaussian structure of the capacity-achieving state.}
    \label{fig:bar_diagram_evolution_3x3}
\end{figure*}

\subsection{General fading (Log-Negative Weibull distribution)}
\label{app:general_fading}

\noindent Finally, we apply our numerical framework to a physically realistic continuous fading model. In free-space quantum communications affected by beam wandering, the transmissivity statistics are accurately described by the Log-Negative Weibull (LNW) distribution. We map the non-Gaussian advantage for the single-shot coherent information $I_c$ across the full two-dimensional parameter space $(R, \gamma)$ of the LNW distribution, with cutoff $\lambda_0 = 1$, for two representative energy regimes.
Figures~\ref{fig:lnw_E1} and~\ref{fig:lnw_E10} display, for $E=1$ and $E=10$ respectively, three complementary quantities: the absolute coherent information gain $\Delta I_c$, a categorical activation map, and the thermal-state coherent information for reference. The activation map distinguishes three qualitatively distinct regimes: (i) both $I_c^{\mathrm{th}}$ and $I_c^{\mathrm{opt}}$ are non-positive, so neither Gaussian nor optimized non-Gaussian input sustains quantum communication; (ii) $I_c^{\mathrm{th}} \leq 0 < I_c^{\mathrm{opt}}$, the activation region in which non-Gaussian states strictly enable quantum communication where thermal states fail; and (iii) both quantities are strictly positive, so quantum communication is achievable even with Gaussian inputs, though non-Gaussian states still provide a quantitative advantage.
Several physically relevant observations follow from these plots. First, the activation region (regime~(ii)) is substantial and persists across a wide range of turbulence parameters, confirming that the qualitative failure of Gaussian inputs is not an artifact of a particular parameter choice. Second, comparing the two energy regimes, higher input energy shifts the boundary of the activation region, generally enlarging the domain where quantum communication is feasible. Third, the thermal coherent information is negative over a large portion of 
the parameter space, while the optimized non-Gaussian state maintains a 
positive $I_c$ throughout a significantly wider region. These results confirm the operational necessity of non-Gaussian input states 
for reliable quantum communication over realistic atmospheric fading channels.

\begin{figure*}[t]
    \centering
    \includegraphics[width=0.66\linewidth]{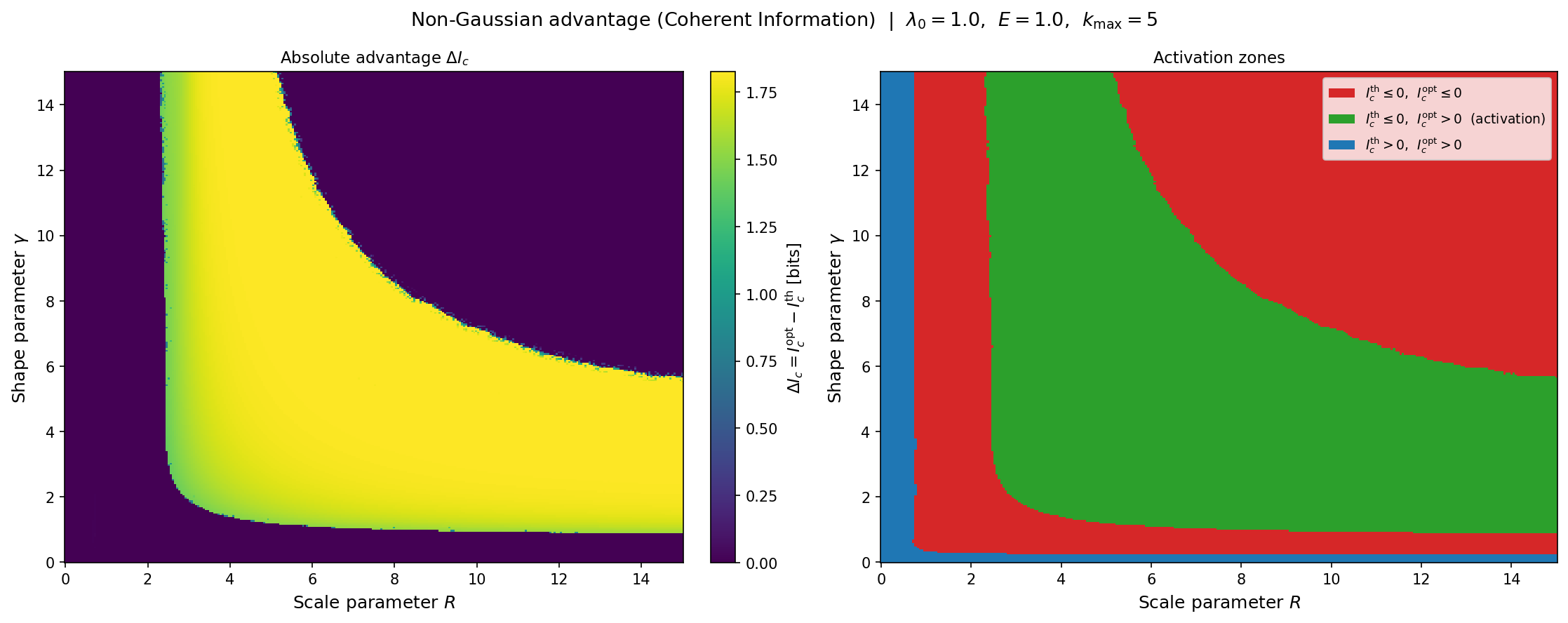}
    \hfill
    \includegraphics[width=0.33\linewidth]{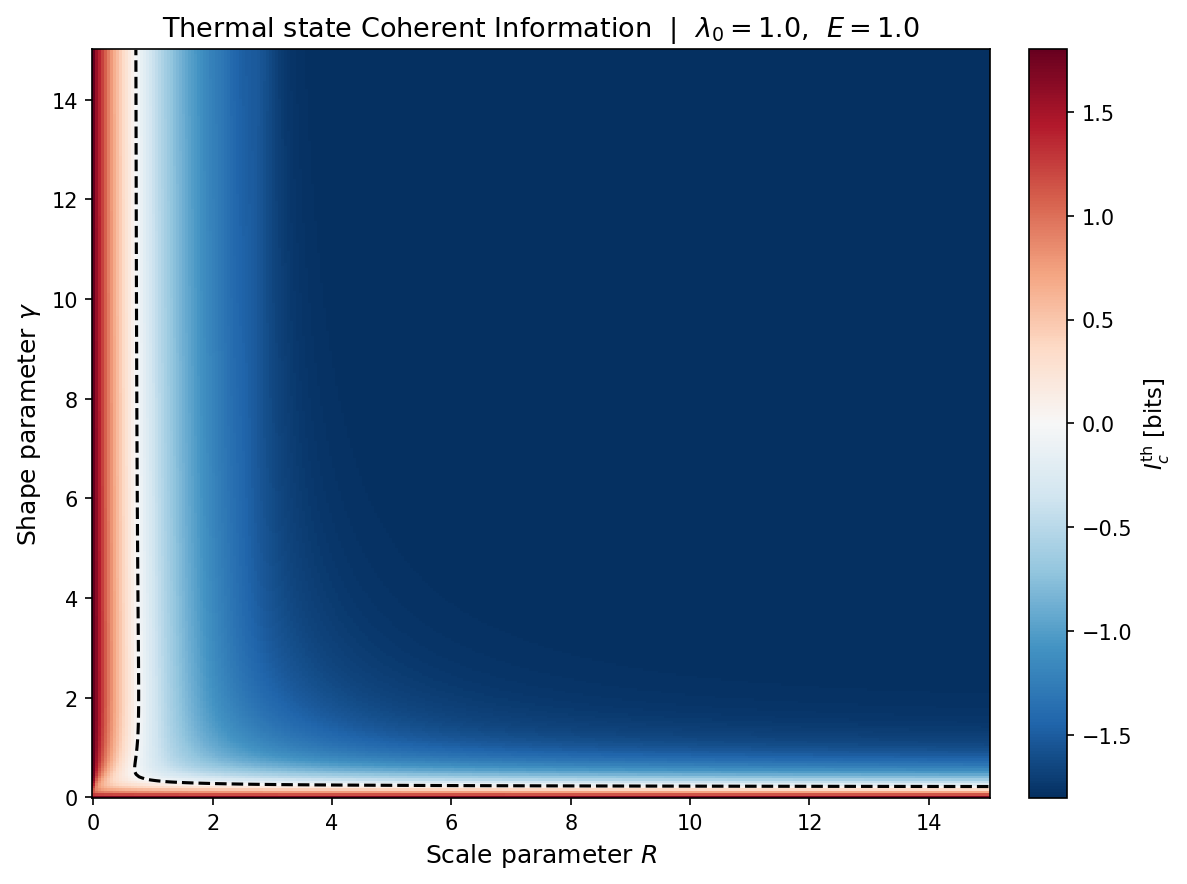}
    
    \caption{Non-Gaussian advantage for the coherent information of the LNW fading channel at low energy ($E=1$, $\lambda_0=1$), mapped over the distribution's parameter space (scale $R$, shape $\gamma$). \textbf{Left:} Absolute coherent information gain $\Delta I_c = I_c(\Phi, \rho_{\mathrm{opt}}) - I_c(\Phi, \rho_{\mathrm{th}})$. \textbf{Center:} Categorical activation map, distinguishing regions where both $I_c^{\mathrm{th}} \leq 0$ and $I_c^{\mathrm{opt}} \leq 0$ (no quantum communication), where $I_c^{\mathrm{th}} \leq 0 < I_c^{\mathrm{opt}}$ (activation by non-Gaussian encoding), and where both are strictly positive. \textbf{Right:} Coherent information of the thermal (Gaussian) state $I_c(\Phi, \rho_{\mathrm{th}})$, showing the large parameter region where it vanishes or becomes negative.}
    \label{fig:lnw_E1}
\end{figure*}

\begin{figure*}[t]
    \centering
    \includegraphics[width=0.66\linewidth]{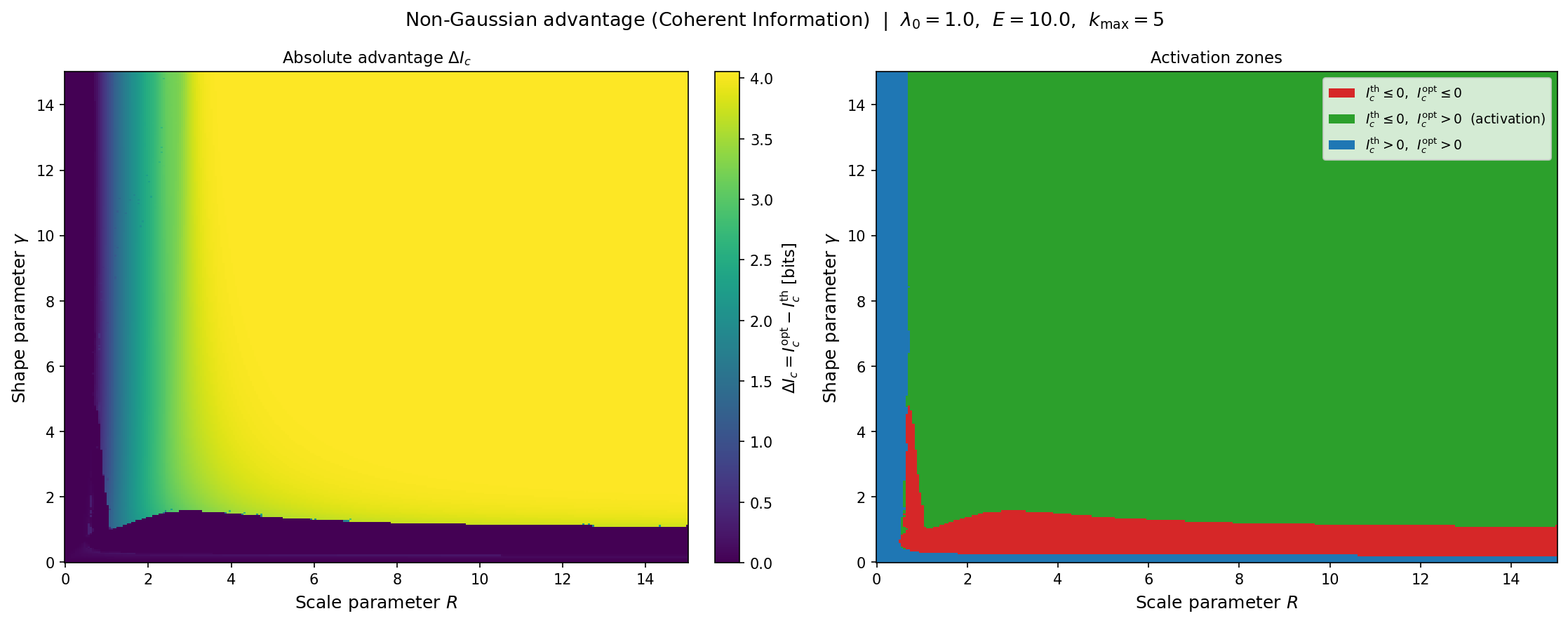}
    \hfill
    \includegraphics[width=0.33\linewidth]{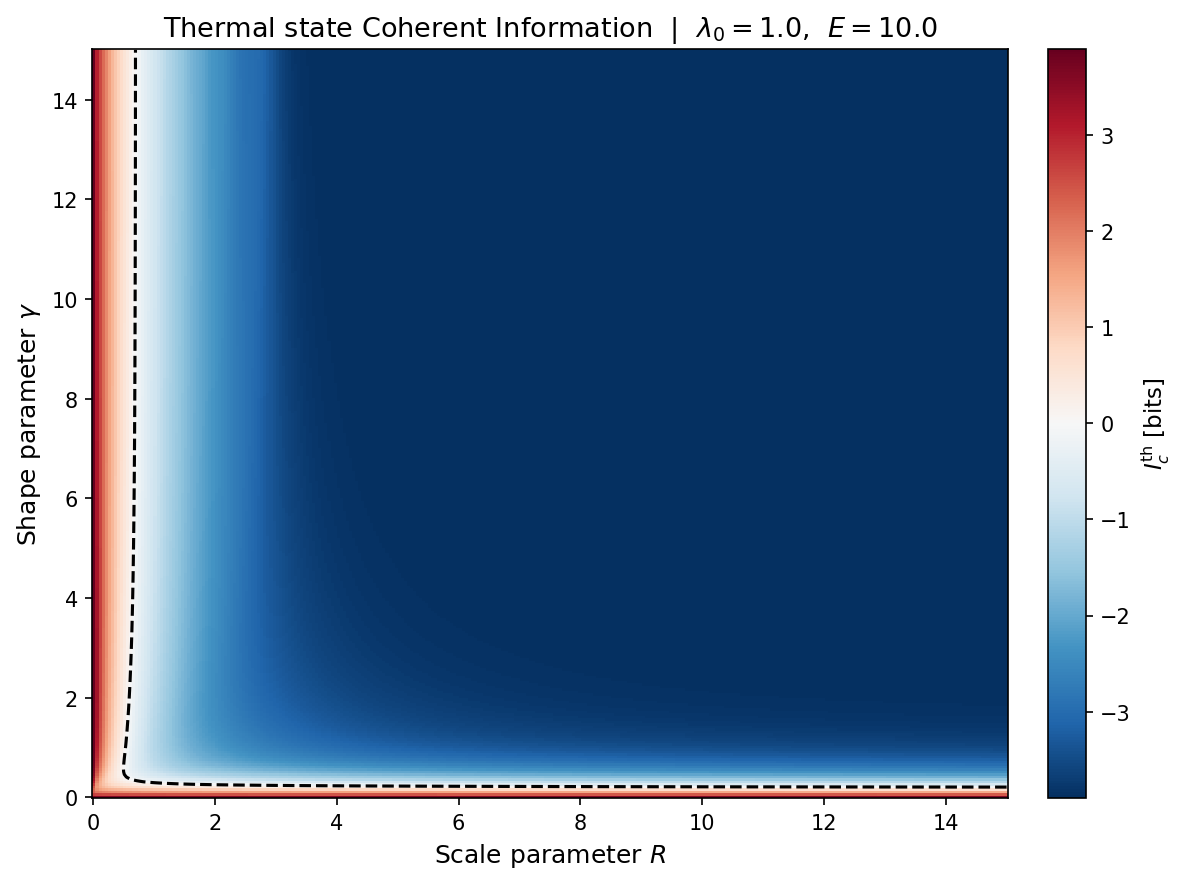}
    
    \caption{Non-Gaussian advantage for the coherent information of the LNW fading channel at high energy ($E=10$, $\lambda_0=1$), mapped over the distribution's parameter space (scale $R$, shape $\gamma$). \textbf{Left:} Absolute coherent information gain $\Delta I_c = I_c(\Phi, \rho_{\mathrm{opt}}) - I_c(\Phi, \rho_{\mathrm{th}})$. \textbf{Center:} Categorical activation map, analogous to Fig.~\ref{fig:lnw_E1}. \textbf{Right:} Coherent information of the thermal state $I_c(\Phi, \rho_{\mathrm{th}})$.}
    \label{fig:lnw_E10}
\end{figure*}
\end{document}